\documentclass[11pt,reqno]{amsart}
\usepackage{amssymb}
\usepackage{amsmath}
\usepackage{amsthm}
\usepackage{amsfonts}
\usepackage{comment}
\usepackage{graphicx}
\usepackage{enumerate}
\usepackage[bottom,flushmargin,hang]{footmisc}
\usepackage{parskip} 
\usepackage{bm}
\usepackage{bbm} 
\usepackage{dsfont}
\usepackage{booktabs}
\usepackage[authoryear,longnamesfirst]{natbib}
\usepackage{multirow}

\usepackage{multicol}
\usepackage{tikz}
\usepackage{xcolor} 
\usepackage{algorithm} 
\usepackage{algpseudocode} 

\numberwithin{equation}{section}
\newtheorem{theorem}{Theorem}

\newtheorem{corollary}{Corollary}

\newtheorem{lemma}{Lemma}
\theoremstyle{definition}
\newtheorem{assumption}{Assumption}
\newtheorem{example}{Example}
\newtheorem{remark}{Remark}

\newcommand{\supp}{\mathrm{supp}}
\renewcommand{\tilde}{\widetilde}

\newcommand{\Gn}{\mathbb{G}_{n}}

\newcommand{\calA}[0]{\mathcal{A}}

\newcommand{\calF}[0]{\mathcal{F}}
\newcommand{\calG}[0]{\mathcal{G}}

\newcommand{\calS}[0]{\mathcal{S}}

\newcommand{\calY}[0]{\mathcal{Y}}

\newcommand{\R}[0]{\mathbb{R}}

\newcommand{\N}{\mathbb{N}}

\newcommand{\1}{\mathbbm 1}

\newcommand{\Z}[0]{\mathbb{Z}}

\renewcommand{\bar}{\overline}

\renewcommand{\i}{{\bm{i}}}
\newcommand{\bN}{{\bm{N}}}
\newcommand{\e}{{\bm{e}}}
\newcommand{\calE}{\mathcal{E}}

\renewcommand{\hat}{\widehat}

\begin{document}
\title{Cross-Fitting-Free Debiased Machine Learning with Multiway Dependence}
\author{Kaicheng Chen}
\address{
School of Economics, Shanghai University of Finance and Economics, 
111 Wuchuan Road, Yangpu District, Shanghai 200434, China}
\email{chenkaicheng@sufe.edu.cn}

\author{Harold D. Chiang}
\address{Department of Economics, University of Wisconsin-Madison, 1180 Observatory
Drive Madison, WI 53706-1393, USA.}
\email{hdchiang@wisc.edu}
\thanks{First arXiv date: 11 Feb, 2026.
We thank Jianfei Cao, Ulrich Hounyo, Michael Jansson, Michael Leung, David Ritzwoller, and Tim Vogelsang for their invaluable comments. {We also thank seminar participants at UCL, LSE, Hanyang University, and conference participants at New York Camp Econometrics XX for valuable discussions.} This paper supersedes the earlier manuscript ``Maximal inequalities for separately exchangeable empirical processes" (arXiv:2502.11432). }

\begin{abstract}
This paper develops an asymptotic theory for two-step debiased machine learning (DML) estimators in generalised method of moments (GMM) models with general multiway clustered dependence, without relying on cross-fitting. While cross-fitting is commonly employed, it can be statistically inefficient and computationally burdensome when first-stage learners are complex and the effective sample size is governed by the number of independent clusters. We show that valid inference can be achieved without sample splitting by combining orthogonal moments with a localisation-based empirical process approach, allowing for an arbitrary number of clustering dimensions. The resulting debiased GMM estimators are shown to be asymptotically linear and asymptotically normal under multiway clustered dependence. A central technical contribution of the paper is the derivation of novel global and local maximal inequalities for empirical processes indexed by general classes of functions for separately exchangeable arrays, which underpin our theoretical arguments and are of independent interest. Python and R packages are available for implementing the proposed procedures.

\end{abstract}

\maketitle

\section{Introduction}

Debiased machine learning (DML), also known as the double/debiased machine learning framework, has become a leading approach to two-step estimation with high-dimensional or nonparametric nuisance components; see, among many others, \cite{chernozhukov2018double}, \cite{chernozhukov2022}, and \cite{escanciano2022debiased}. A central feature of this literature is cross-fitting, whereby sample splitting is used to separate nuisance estimation from target parameter evaluation. This device is motivated by two considerations. First, the conventional view holds that it mitigates overfitting bias arising from the use of highly flexible first-stage learners. Second, it relaxes empirical process conditions by reducing the dependence between first-stage estimation errors and second-stage score evaluation. As a result, cross-fitting has become close to a default recommendation in both theoretical analyses and empirical implementations of DML procedures.

At the same time, many empirical applications in economics and finance involve multiway clustered dependence; see, for example, \cite{petersen2008estimating,cameron2015practitioner}. In such settings, observations may be correlated along multiple dimensions, such as firm and time or region and industry. Extensions of DML to multiway clustered environments are developed in \cite{chiang2022multiway}. 

However, extensive sample splitting is not without cost. First, with a finite number of folds, cross-fitting { introduces additional randomness through random partitions}, which may hinder reproducibility. Increasing the number of folds can mitigate this concern to some extent, but at the expense of substantially greater computational burden, particularly when the first stage is complex and/or requires extensive tuning.
Second, cross-fitting effectively reduces the sample size available to each first-stage problem. This is especially consequential when these stages involve high-dimensional or nonparametric estimation, where performance is inherently variance-sensitive. The resulting loss in nuisance estimation precision may, in finite samples, translate into non-negligible efficiency losses for the parameter of interest.

These concerns are further amplified under multiway clustering, where the effective sample size is determined by the number of independent cluster units rather than the total number of observations. Partitioning the data into folds may therefore leave each subsample with only a limited number of independent clusters, making cross-fitting particularly costly, in addition to increasing computational burden.
Moreover, in two-step  procedures such as double machine learning, overfitting bias arising from first-stage nuisance estimation need not be intrinsically detrimental for inference on the target parameter, in contrast to classical one-step settings. This suggests that the conventional rationale for cross-fitting may be less central than is sometimes presumed. Indeed, even under i.i.d. settings, the literature provides little general theoretical support for the advantages of cross-fitting, apart from a few special cases considered, for example, by \citet{newey2018cross}.

Motivated by these considerations, a growing literature seeks to weaken or dispense with cross-fitting in general nonlinear estimation problems. One approach to obtaining theoretical guarantees for DML without cross-fitting is the ``localisation" method, as employed for example in \cite*{belloni2015uniform,belloni2018uniformly}. The key idea is to analyse the supremum of an empirical process indexed by estimating equations evaluated over deterministic sets that localise the nuisance parameter around its true value. These sets are constructed so that the nuisance estimator lies in them with probability approaching one, thereby replacing a stochastic index with a deterministic one and disentangling the dependence between estimating equations and first-stage estimates. If the localisation sets shrink at appropriate rates—typically verified through maximal inequalities—the associated empirical process remainder is of smaller order than the leading asymptotically linear term and does not affect the limiting distribution. An alternative strategy is based on a ``stability" condition; see \cite{chen2022debiased} {  in an i.i.d. setting} and a generalisation in \cite{cao2025neighborhood} under spatial/network $\beta$-mixing. Verifying such conditions, however, is often substantially more involved beyond certain well-understood cases, and we therefore do not pursue this route.

In this paper, we develop a general asymptotic theory for two-step debiased GMM estimators under multiway clustered dependence, without relying on cross-fitting. We consider a broad class of locally robust two-step GMM problems similar to those studied in \cite*{chernozhukov2022} in which low-dimensional target parameters are identified by orthogonal moment conditions that depend on high-dimensional or nonparametric nuisance components. The data are allowed to exhibit dependence along an arbitrary number of clustering dimensions, accommodating empirical settings in which correlation arises simultaneously across, for example, firms, time periods, locations, or networks.
Our results establish asymptotic linearity and asymptotic normality of the resulting estimators under conditions that permit highly flexible first-stage learners while avoiding sample splitting. The analysis explicitly accounts for the reduced effective sample size induced by multiway clustering and provides inference procedures that remain valid when the number of independent cluster units, rather than the total number of observations, governs the stochastic order. By combining orthogonality with a localisation-based empirical process argument tailored to multiway clustered arrays, we show that the impact of first-stage estimation can be controlled without cross-fitting. This yields a unified framework for debiased GMM inference that is well suited to empirically relevant clustered environments where conventional cross-fitting can be statistically and computationally costly.

Maximal inequalities are central to localisation-based arguments in DML, yet for multiway clustered—more precisely, separately exchangeable (SE)—arrays, the available theory remains limited. In particular, no general \emph{global} maximal inequality accommodates arbitrary numbers of clustering dimensions \(K\),  {arbitrary moment orders \(r\)}, and infinite pointwise measurable function classes. Existing results address only special cases: Theorem~B.2 of \cite{chiang2023inference} allows general \(K\) and  {\(r\in[1,\infty)\)} but restricts attention to finite classes, while Lemma~C.3 of \cite{liu2024estimation} covers potentially uncountable classes  only for  {\(r=1\)} and \(K=2\). The situation is even more restrictive for \emph{local} maximal inequalities, which are essential for sharper convergence rates: no such result appears to be available for SE arrays beyond the \(K=1\)  (i.i.d.) case. These gaps reflect the intrinsic difficulty posed by multiway dependence, which generates complex interactions across observations and undermines classical tools such as Hoeffding averaging. To overcome this, we develop a new proof strategy based on a transversal partition of the index set that effectively decouples dependence and permits the use of the Hoffmann--J{\o}rgensen inequality, yielding sharp higher-moment bounds. This approach delivers both global and local maximal inequalities for potentially uncountable, pointwise measurable function classes under SE sampling.

The paper is organised as follows. Section~\ref{sec:setup} introduces the multiway clustered sampling setup and notation. Section~\ref{sec:dml_gmm_main} develops the cross-fitting-free debiased GMM estimator and presents the main asymptotic results under general high-level conditions, while also providing three examples for the rate and complexity conditions and validity of variance estimation. Section~\ref{sec:maximal_inequalities_SE} provides the core technical tools in the form of new maximal inequalities for empirical processes for separately exchangeable arrays.  {Section~\ref{sec:simulation} demonstrates the finite-sample performance of the proposed cross-fitting-free DML method and its computational advantages relative to existing methods across various setups. Proofs, supplementary arguments, an extension to cluster-size heterogeneity, and { implementation guidance} are collected in the appendices. {Python and R packages implementing the proposed cross-fitting-free debiased GMM procedures are available.}}

\vskip 0.15in

\section{Setup and Notation}
\label{sec:setup}
This section introduces the framework of multiway clustered sums used throughout the paper. Let $K$ be a fixed positive integer and denote a $K$-tuple index by
\(
\bm{i} = (i_1, i_2, \dots, i_K) \in \mathbb{N}^K.
\)
 {Let $\{X_{\bm i}:\bm i\in\mathbb{N}^K\}$ be an infinite $K$-way array of random elements, of which we observe the finite sub-array}
\(
  \{ X_{\bm{i}} : \bm{i} \in [N_1] \times \cdots \times [N_K] \},
\)
where $[N_k] := \{1,\dots,N_k\}$ is the index set of dimension $k$ and $N_k$ the corresponding sample size. Write $\bm{N} = (N_1, \dots, N_K)$, let $[\bm{N}] = \prod_{k=1}^{K} [N_k]$ { denote the Cartesian product of the $K$ index sets}, and
\[
N = \prod_{k=1}^K N_k,\quad n = \min\{N_1, \dots, N_K\},\quad \overline{N} = \max\{N_1,\dots,N_K\}.
\]

Firm--market data illustrate the two-way case $K=2$: $i_1\in[N_1]$ indexes firms, $i_2\in[N_2]$ indexes markets, and $X_{i_1i_2}$ collects the outcome and covariates of firm $i_1$ in market $i_2$. Here $N=N_1N_2$ is the number of cells, whereas $n=\min\{N_1,N_2\}$ counts the independent units along the shorter dimension and hence governs the effective sample size. Dependence runs along both margins at once: unobserved firm attributes, such as managerial ability, correlate the outcomes of firm $i_1$ across the markets it serves, while market-level shocks, such as local demand conditions, correlate the outcomes of all firms active in market $i_2$, so that neither one-way clustering scheme suffices. Other familiar two-way examples are firm and time period in finance panels \citep{petersen2008estimating}, region and industry, worker and firm, and student and school; see \cite{cameron2015practitioner}. General $K$ is covered: appending a time dimension gives $K=3$ with $\bm{i}=(i_1,i_2,i_3)$.

Suppose  {$\{X_{\bm{i}}\}_{\bm i\in\N^K}$} are random elements of a measurable space $\calS$, defined on a common probability space $(\Omega,\calA, P)$, that satisfy the following separate exchangeability (SE) and dissociation (D) conditions {, imposed on the infinite array}.
\begin{enumerate}
\item[(SE)]  For any $\pi =(\pi_1,\dots,\pi_K)$, a $K$-tuple of permutations of $\N$,  {$\{X_{\bm{i}}\}_{\bm i \in \N^K}$} and  {$\{X_{\pi(\bm{i})}\}_{\bm i \in \N^K}$} are identically distributed.
\item[(D)]   Let $I,I'\subset \mathbb{N}^K$. If $i_k\ne i_k'$ for all $k=1,\dots,K$ and all $\bm i\in I$, $\bm i'\in I'$
, then  $\{X_{\bm{i}}\}_{\i\in I}$ and $\{X_{\bm{i}}\}_{\i\in I'}$ are independent.
\end{enumerate}

These two conditions are standard in the literature (see \citealt{menzel2021bootstrap,davezies2021empirical,mackinnon2021wild}) and play complementary roles: (SE) restricts the type of dependence the array may exhibit, while (D) restricts where it may occur. Condition (SE) requires the joint law of $\{X_{\bm i}\}$ to be invariant to relabelling indices separately along each dimension, so labels are anonymous and carry no information beyond cluster membership. In regression settings the requirement is imposed only conditionally on the observables: systematic differences across units may be absorbed by the covariates, and only the remaining unobserved heterogeneity need be label-anonymous. It is the array analogue of de Finetti exchangeability (see Lemma 2, Section 7 of \citealt{andrews2005cross}) and follows from symmetry restrictions common in economic models, such as exchangeable profit functions in dynamic oligopoly \citep{athey2001investment}--closely related to anonymity in cooperative game theory and social choice \citep{moulin1991axioms}--and primitives exchangeable in rivals' characteristics in differentiated-product markets \citep{berry1995automobile}. Condition (D), in turn, permits $X_{\bm i}$ and $X_{\bm i'}$ to be dependent only when $\bm i$ and $\bm i'$ agree in at least one coordinate, so dependence propagates through shared cluster membership and nowhere else. This is exactly what renders the dependence multiway clustered rather than unrestricted. In the firm--market example, (SE) says that relabelling firms and markets leaves the joint law unchanged, while (D) makes $X_{12}$ and $X_{21}$ independent but allows $X_{12}$ and $X_{13}$ to be dependent through the firm they share.

Under Conditions (SE) and (D), the Aldous--Hoover--Kallenberg (AHK) representation (see Corollary 7.35 in \citealt{kallenberg2005probabilistic}) guarantees the existence of the following representation:
\begin{align}
	X_{\bm{i}} = \tau\Bigl( \{ U_{\bm{i} \odot \bm{e}} \}_{\bm{e} \in \{0,1\}^K \setminus \{\bm{0}\}} \Bigr), \label{eq:AHK_representation}
\end{align}
where $\odot$ denotes the Hadamard (element-wise) product,  {the random variables}
\(
 {\{ U_{\bm{j}} : \bm{j} \in (\N\cup\{0\})^K \setminus \{\bm{0}\} \}}
\)
 {--- indexed by their distinct labels, each label of the form $\bm j=\bm i\odot\bm e$ --- are mutually independent and identically distributed}, and $\tau$ is a Borel measurable map taking values in $\calS$. For $K=2$, \eqref{eq:AHK_representation} reads $$X_{i_1i_2}=\tau(U_{(i_1,0)}, U_{(0,i_2)}, U_{(i_1,i_2)}),$$ an unknown, possibly nonlinear and non-separable, function of mutually independent firm, market and match-specific shocks, which is considerably more general than an additive two-way error component model.

\subsection{Cluster-size heterogeneity}
\label{sec:heterogeneous_clusters}
The baseline results treat \(X_{\bm i}\) as a cell-level observation, so that each cell, defined by an intersection of all cluster dimensions, contains a single unit. This keeps the asymptotic arguments at the cell level and matches many panel and network applications. Elsewhere a cell may contain multiple or zero units, a feature we term cluster-size heterogeneity, also considered in \cite{Carter2017} and \cite{DDG2018}. Let \(M_{\bm i}\) denote the number of units in cell \(\bm i\) and \(\{X_{\bm i m}:1\leq m \leq M_{\bm i}\}\) the corresponding unit-level variables, where \(M_{\bm i}=0\) is permitted and represents unbalanced data. The extension assumes that \(\{(M_{\bm i},\{X_{\bm i m}\}_{1\le m\le M_{\bm i}}):\bm i\in\mathbb N^K\}\) itself satisfies (SE) and (D), so that \(M_{\bm i}\) may be random and correlated with the within-cell observations. In the firm--market example, \(M_{\bm i}\) counts the products, establishments or transactions of firm $i_1$ in market $i_2$ and is zero when the firm does not serve that market; matched employer--employee, patient--hospital and student--school data share this structure.

One may either aggregate within cells and then estimate and conduct inference at the cell level, which is valid but discards within-cell information, or estimate from the raw unit-level observations and aggregate the scores within cells for inference. The second strategy uses the data more efficiently and effectively weights the estimates by cell sizes, and we adopt it here. While the main text develops the theory under the baseline setup, Appendix~\ref{app:cluster_heterogeneity} shows that the results extend to the heterogeneous-cluster case with random and possibly zero or unbounded cell sizes.

\subsection*{Notation.}
Let $\N$ denote the set of positive integers and $\R$ the real line. For $a,b\in \R$, let $a\vee b=\max\{a,b\}$ and $a\wedge b = \min\{a,b\}$, and for $m\in \mathbb N$ let
$
[m] = \{1,2,\ldots,m\}.
$
For real vectors $\bm{a}= (a_{1},\dots,a_{K})$ and $\bm{b} = (b_{1},\dots,b_{K})$, we write $\bm{a} \le \bm{b}$ if $a_{j} \le b_{j}$ for all $1 \le j \le K$, and let $\supp(\bm{a}) = \{ j : a_j \ne 0\}$. We denote by $\odot$ the Hadamard product: for $\i = (i_1,\dots,i_K)$ and $\bm{j} = (j_1,\dots,j_K)$, $\i \odot \bm{j}= (i_1 j_1,\dots,i_K j_K)$. For each $k=0,1,\dots,K$, define $ \calE_k=\{\e\in \{0,1\}^K: \Vert\e \Vert_0 =k\}$, so that $\{0,1\}^K=\cup_{k=0}^K \calE_k$. For a measure $Q$ on $\calS$ and a measurable $f:\calS\to \R$, we write $Qf=\int f dQ$, and for $r\in [1,\infty]$ we let $\|f\|_{Q,r}=( Q|f|^r)^{1/r}$. For a non-empty set $T$ and $f:T\to \R$, denote $\|f\|_T=\sup_{t\in T}|f(t)|$. For a pseudometric space $(T,d)$, let $N(T,d,\varepsilon)$ denote the $\varepsilon$-covering number for $(T,d)$. We say $F:\calS\to \R_+$ is an envelope for a class of functions $\calF\ni f:\calS\to \R$ if $\sup_{f\in\calF}|f(x)|\le F(x)$ for all $x\in\calS$.

    \vskip 0.15in

\section{Debiased Machine Learning for GMM}
\label{sec:dml_gmm_main}

In this section, we develop a full-sample DML procedure for GMM models with multiway clustered data. For each $n \in \mathbb{N}$, let $P_n$ denote the probability measure of the data random variables. The subscript $n$ in $P_n$ reflects that the data-generating process may drift with the sample size. Denote by $X$  {an independent copy of $X_{\bm{1}}$, whose dimension may depend on $n$}, where $\bm{1} = (1, \ldots, 1)$.  By separate exchangeability, we have $\mathbb{E}[f(X_{\bm{i}})]=\mathbb{E}[f(X)]=P_nf$ for every $\bm{i}\in[\bm{N}]$, and all functions $f$ satisfying $P_n|f|<\infty$.

We consider a general two-step GMM setup, similar to \cite{chernozhukov2022}, in which a finite-dimensional target parameter is identified by moment restrictions involving high-dimensional or nonparametric nuisance components. The procedure has three components:
\begin{enumerate}
    \item The target parameters $\theta_0 \in \Theta \subset  \mathbb{R}^d$ are fixed-dimensional and uniquely identified, through moment restrictions involving high-dimensional nuisance parameters: \\
$g: \calS\times\Theta\times {\Gamma}\rightarrow\mathbb{R}^{{q}}$
such that
\begin{equation}
  \label{gmm_moment}
  \mathbb{E}\left[ g(X,\theta_0,\gamma_0) \right] = 0,
\end{equation}
\item The nuisance parameters $\gamma_0\in \Gamma$, which can be an expanding set as the sample size grows\footnote{While our results do allow for an expanding nuisance space and an expanding class of data generating processes, we suppress its dependence on the sample size in notations whenever they do not cause confusion.}, are exactly identified and estimated by machine learners using the full sample.
\item The target parameter is then estimated by solving a full-sample GMM problem based on an orthogonalised moment function and a feasible weighting matrix, with plug-in estimated nuisance parameters.
\end{enumerate}

We next formalise the orthogonalised GMM estimator and state the main asymptotic result. The key step is to show that the empirical process term induced by full-sample nuisance estimation is asymptotically negligible under multiway clustered dependence.

\subsection{Main results for the debiased GMM estimator.}

{We begin by fixing the asymptotic framework. To allow for an expanding nuisance space and an expanding class of data generating processes, let $\mathcal{P}_n$ denote, for each $n$, a collection of data generating processes for the $K$-array $\{X_{n,\bm i}:\bm i\in[\bm N_n]\}$; accordingly, the score map $\psi=\psi_n$, the nuisance space $\Gamma=\Gamma_n$, and the nuisance parameter $\eta_0=\eta_{0,n}$ may all depend on $n$, and we suppress these subscripts unless needed for clarity. Asymptotic statements in this section are understood uniformly over $\{\mathcal{P}_n\}$ in the sequential sense of \cite{belloni2015uniform}: such a statement holds if it holds along every sequence $\{P_n\}$ with $P_n\in\mathcal{P}_n$. Thus $P_n$ always denotes the data generating measure along an arbitrary such sequence, and every expectation, variance, covariance, probability, stochastic order ($O_{P_n}$, $o_{P_n}$) and convergence statement ($\overset{p}{\to}$, $\overset{d}{\to}$) in this section and its proofs is taken under $P_n$; where no subscript is displayed, $\mathbb{E}=\mathbb{E}_{P_n}$. We regard $\Gamma$ as a convex subset of $L_2(P_n)$, equipped with $\Vert\cdot\Vert_{P_n,2}$. All proofs are therefore stated for arbitrary such sequences, with bounds depending on $P_n$ only through the constants of Assumption~\ref{assu:gmm_reg} and the further uniform constants introduced below, all independent of $n$ and of $\{P_n\}$. By contrast, the maximal inequalities of Section~\ref{sec:maximal_inequalities_SE} are nonasymptotic and are stated for a generic fixed probability measure $P$; in this section they are applied with $P=P_n$ for each $n$.}

The first component of DML is the Neyman orthogonality of the moment condition. Let $\psi(X,\theta,\eta)$ be obtained by adding to $g$ an influence function for the effect of the first-step disturbance. {Here $\eta_0=(\gamma_0,\alpha_0)$ may stack additional nuisance components $\alpha_0$ beyond $\gamma_0$, and, with a slight abuse of notation, we continue to write $\Gamma$ for the (possibly enlarged) nuisance space containing $\eta_0$. Orthogonality is a restriction on how the population moment responds to perturbations of $\eta$ alone, so we first name that map. Evaluated at the true parameter $\theta_0$, it is
\begin{align}
    m_{P_n}:\Gamma\to\R^{q},\qquad m_{P_n}(\eta):=\mathbb{E}\left[\psi(X,\theta_0,\eta)\right].\label{eq:m_map}
\end{align}
Write $\Gamma-\eta:=\{\tilde\eta-\eta:\tilde\eta\in\Gamma\}$ for the admissible directions at $\eta\in\Gamma$, and let $\mathcal{T}_{n}$ denote the closed linear span of $\Gamma-\eta_0$ in $L_2(P_n)$. Convexity of $\Gamma$ gives $\eta+\tau h\in\Gamma$ for every $h\in\Gamma-\eta$ and $\tau\in[0,1]$, so the perturbed moment $m_{P_n}(\eta+\tau h)$ is well defined for small $\tau>0$. We call $m_{P_n}$ Gateaux differentiable on $\Gamma$ if, at every $\eta\in\Gamma$,
\begin{align}
    D_\eta m_{P_n}(\eta)[h]:=\lim_{\tau\downarrow0}\frac{m_{P_n}(\eta+\tau h)-m_{P_n}(\eta)}{\tau},\qquad h\in\Gamma-\eta,\label{eq:gateaux_def}
\end{align}
exists and is the restriction of a bounded linear map $D_\eta m_{P_n}(\eta):\mathcal{T}_{n}\to\R^{q}$; the common domain $\mathcal{T}_{n}$ renders derivatives at different $\eta$ comparable, and we write $\Vert D_\eta m_{P_n}(\eta)\Vert_{\rm op}:=\sup\{\Vert D_\eta m_{P_n}(\eta)[h]\Vert:h\in\mathcal{T}_{n},\ \Vert h\Vert_{P_n,2}\le1\}$. In this notation, the score is mean zero at the truth and its population moment is flat there in every admissible direction:}
\begin{align}
    m_{P_n}(\eta_0)=0,\qquad D_\eta m_{P_n}(\eta_0)[h]=0\ \ \text{for all }h\in\Gamma-\eta_0. \label{eq:orthogonal}
\end{align}
Such adjustment offsets the effect of local perturbation of $\gamma$ on the identifying moment condition. 
Since Neyman orthogonality is a population property, it does not depend on whether the data are i.i.d. or multiway clustered, nor on whether the nuisance functions are estimated with cross-fitting. Given that the Neyman orthogonality of certain moment conditions is well-established in the existing literature, we take \eqref{eq:orthogonal} as given for our analyses.

  It remains to define the estimator. For the observed array $\{X_{\bm{i}}:\i\in[\bm{N}]\}$ and a measurable $f:\calS\to\R$, the sample mean process is
\(\mathbb{E}_N f = N^{-1}\sum_{\bm{i} \in [\bm{N}]} f(X_{\bm{i}}),\)
applied componentwise when $f$ is vector-valued.
Let $\widehat\eta$ be some machine learners that are appropriate for multiway clustering data\footnote{For example, cluster-LASSO in \cite{belloni2016inference} for one-way clustering data, two-way cluster-LASSO in \cite{chen2025inference} for two-way clustered panels, and multiplier bootstrap in \cite{chiang2023inference} for arbitrary $K$-way cluster dependence.}, and let $\hat\psi_N$ denote the empirical average with plug-in estimate $\widehat\eta$:
\begin{align*}
   \hat\psi_N(\theta) = \mathbb{E}_N[\psi(X,\theta,\widehat\eta)]
\end{align*}
With some positive semi-definite weighting matrix $\hat\Upsilon$ (e.g., the inverse of a multiway cluster-robust variance-covariance estimator of $\psi(X;\hat\theta^{(0)},\widehat\eta)$ with some initial estimate $\hat\theta^{(0)}$), the debiased GMM estimator of $\theta_0$ is defined as
\begin{align}
    \hat\theta = \arg \min_{\theta\in \Theta} \hat\psi_N'(\theta)  \hat\Upsilon \hat\psi_N(\theta). \label{gmm_estimator}
\end{align}

\begin{assumption} 
\label{assu:gmm_reg} {  For each $n$ large enough, the following conditions hold uniformly over $P_n\in \mathcal P_n$. Let $C_1,C_2,C_3,C_4$ be some finite constants independent of $n$ and $P_n$.
Let $\mathcal{H}_n\subset\Gamma$ be a sequence of nuisance realisation sets (e.g., the range of the first-stage learner $\widehat\eta$), and define the localisation set $\Gamma_n(\eta_0)= \{\eta\in\mathcal{H}_n: \Vert \eta-\eta_0\Vert_{P_n,2} \le C_1 n^{-1/4}\} $.  Let $\mathcal{N}(\theta_0)$ be a fixed convex neighborhood of $\theta_0$.
\begin{enumerate}[(i)]
    \item $(X_\i)_{\i\in [\bm{N}]}$ satisfy Conditions (SE) and (D). 

    \item $\psi(X,\theta_0,\eta)$ is continuous in $\eta$ a.s., and {the nuisance moment map $m_{P_n}$ of \eqref{eq:m_map} is Gateaux differentiable on $\Gamma$ in the sense of \eqref{eq:gateaux_def}, with
    $\|D_\eta m_{P_n}(\eta)-D_\eta m_{P_n}(\eta_0)\|_{\rm op}
    \leq C_4\|\eta-\eta_0\|_{P_n,2}$ for every $\eta\in\Gamma$.} Moreover, \eqref{eq:orthogonal} is satisfied and {, for every $\eta\in\Gamma$}, $$\mathbb{E} \|\psi(X,\theta_0,{\eta}) - \psi(X,\theta_0,{\eta}_0)\|^2 \leq C_2\Vert \eta-\eta_0\Vert_{P_n,2}^2.$$
   
    \item  $\psi(X,\theta,\eta)$ is continuously differentiable in $\theta$, $\partial_\theta\psi(X,\theta_0,\eta)$ is continuous at $\eta_0$ a.s., and $ {\mathbb{E} \left[\sup_{\eta\in\Gamma_n(\eta_0)}\|\partial_\theta\psi(X,\theta_0,\eta)\|^{1+\delta}\right]<C_3}$ for some $\delta>0$. 
    
    \item There exists non-negative $B(X,\eta)$ such that $\Vert\partial_\theta\psi(X,\theta,\eta) - \partial_\theta\psi(X,\theta_0,\eta) \Vert \leq B(X,\eta) \Vert \theta-\theta_0\Vert^\alpha$, $\forall \theta\in\mathcal{N}(\theta_0)$ with some $\alpha>0$; and {$\eta\mapsto B(X,\eta)\ge 0$ is continuous at $\eta_0$ a.s. and $ \mathbb{E}\left[\sup_{\eta\in\Gamma_n(\eta_0)}B(X,\eta)^{1+\delta}\right]<C_3$}.
    
\item  $ \|\widehat\eta - \eta_0\|_{P_n,2}= o_{P_n}\left(n^{-1/4}\right)$ and $P_n(\widehat\eta\in\mathcal{H}_n)\to1$.

\item $\mathbb{E}\Vert  \psi(X,\theta_0,\eta_0)\Vert^{2+\delta}<C_3$ for some $\delta>0$. 

\item {The smallest singular value of the population Jacobian $J_0 := -\partial_\theta \mathbb{E}[\psi(X;\theta,\eta_0)]\big|_{\theta=\theta_0}$ is bounded below by a positive constant. } 

{\item For every sequence of constants $\delta_n\downarrow 0$,
\[
\sup_{P_n\in\mathcal P_n}\mathbb{E}_{P_n}\left[\omega_n(X;\delta_n)\right]\to 0,\qquad \omega_n(x;\delta):=\sup_{\eta\in\Gamma:\,\|\eta-\eta_0\|_{P_n,2}\le\delta}\|a(x,\eta)-a(x,\eta_0)\|,
\]
with $a(x,\eta)$ equal to each of $\partial_\theta\psi(x,\theta_0,\eta)$ and $B(x,\eta)$. }

\end{enumerate}
   }
\end{assumption}

Condition~(i) characterises the multiway clustered data by the exchangeability and dissociation conditions, which are standard in clustering robust inference literature.  Conditions~(ii) - (iv) are score regularity conditions. Condition~(ii) is satisfied when the scores come from a likelihood function or moment conditions that are smooth in terms of the nuisance parameters. Conditions~(iii) and (iv) are nonlinear counterparts of the linear-in-$\theta$ conditions common in the DML literature. For a score that is twice differentiable in $\theta$, the existence of the integrable envelope $B(X,\eta)$ reduces to the integrability of the Hessian matrix locally. See Remark~\ref{remark:B} below for more details on the choice of $B(X,\eta)$. The locality in $\Gamma_n(\eta_0)$ ensures that $\eta$ takes values that do not explode up the envelope, e.g., $\eta$ as inverse probability weights. The intersection with the nuisance realisation set $\mathcal{H}_n$ restricts the localisation ball to functions the first-stage learner can actually produce; this standard device (cf.\ \citealt{chernozhukov2018double,belloni2018uniformly}) is what keeps the localised class $\mathcal{F}_n$ of Theorem~\ref{thm:gmm} below manageable, since an unrestricted $L_2$-ball around $\eta_0$ would in general admit no finite envelope and no useful entropy bound. Condition~(v) is a high-level condition governing the quality of nuisance parameter estimation, together with the requirement that the learner takes values in $\mathcal{H}_n$ with probability approaching one. This requirement can be verified using existing theoretical results for a range of machine-learning estimators under multiway clustering; for instance, in the case of LASSO, it follows  from Proposition 2 of \citet{chiang2023inference}. Conditions~(vi) and (vii) are standard and mild finite moment and full rank conditions.  Condition~(viii) is an equicontinuity requirement on the score for genuinely drifting sequences. When the score, the nuisance space, and the class of data generating processes do not vary with $n$, this condition follows from the almost-sure continuity in (iii) and (iv) together with the moment bounds therein, by sequential continuity in the metric space $(\Gamma,\|\cdot\|_{P_n,2})$ and dominated convergence. See Lemma~\ref{lemma:4_3} and the remark following it.

\begin{remark}[On $B$ in Assumption~\ref{assu:gmm_reg}(iv)]
\label{remark:B}
    $B(X,\eta)$ in Assumption~\ref{assu:gmm_reg}(iv) is a local Hölder modulus term that controls how nonlinear the score is in $\theta$, uniform over $\eta\in\Gamma_n(\eta_0)$. For the fixed convex neighborhood $\mathcal{N}(\theta_0)$, it may be defined by,  
    \begin{align*}
        B(X,\eta) :=\sup_{\substack{\theta\in\mathcal{N}(\theta_0)\\ \theta\ne\theta_0}}\frac{\Vert \partial_\theta\psi(X,\theta,\eta) - \partial_\theta\psi(X,\theta_0,\eta)\Vert}{\Vert \theta-\theta_0\Vert^\alpha}, \quad \alpha>0.
    \end{align*}
    {If $\psi(X,\theta,\eta)$ is twice continuously differentiable in $\theta$, the integral identity
    \[
    \partial_\theta\psi(X,\theta,\eta)-\partial_\theta\psi(X,\theta_0,\eta)
    =\left\{\int_0^1\partial_{\theta\theta}\psi(X,\theta_0+t(\theta-\theta_0),\eta)\,dt\right\}(\theta-\theta_0)
    \]
    shows that, with $\alpha=1$, one may take
    $B(X,\eta)=\sup_{\theta\in\mathcal N(\theta_0)}\|\partial_{\theta\theta}\psi(X,\theta,\eta)\|$. If $\psi$ is linear in $\theta$, the natural choice is $B(X,\eta)=0$.}
    
    To verify the integrability of $\sup_{\eta\in\Gamma_n(\eta_0)}B(X,\eta)$, one can often look for integrable dominating functions that do not depend on $\eta$. If $B(X,\eta)$ is Hölder modulus at $\eta_0$, then the desired integrability reduces to the integrability of $B(X,\eta_0)$ as well as the integrability of the Hölder modulus term at $\eta_0$.
\hfill $\lozenge$\end{remark}

The results below are stated uniformly over classes of functions, on which we impose two standard conditions. First, every class $\calF$ considered is assumed to be pointwise measurable, that is, to admit a countable subclass $\calF'\subset \calF$ such that each $f\in \calF$ is the pointwise limit of a sequence $(f_j)_j\subset \calF'$. Second, following Chapter 3.6 in \cite{gine2016mathematical}, a function class $\calF$ on $\calS$ with a measurable envelope $F$ is called Vapnik–Chervonenkis-type (VC-type) with characteristics $(A,v)$ if 
\begin{align}
	\sup_Q N(\calF,\|\cdot\|_{Q,2},\varepsilon\|F\|_{Q,2})\le \left(\frac{A}{\varepsilon}\right)^v\: \text{ for all } 0<\varepsilon\le 1,\label{eq:VC-type-def}
\end{align}
where the supremum is taken over all finite discrete distributions. It can be shown that a wide range of commonly used models and estimators in econometrics, machine learning, and statistics give rise to sequences of function classes that satisfy this VC-type condition; see Section~\ref{sec:examples} below for illustrative examples.

The following theorem presents our first main result, establishing the asymptotic linearity and asymptotic normality of a generic debiased GMM estimator without cross-fitting.
\begin{theorem}[Asymptotic linearity and normality]
\label{thm:gmm}
Suppose Assumption~\ref{assu:gmm_reg} holds, $\hat{\theta}\overset{p}{\to} \theta_0 \in {\rm int}(\Theta)${\footnote{Primitive conditions for the consistency of $\hat\theta$ follow from standard extremum-estimator arguments (e.g., Theorem 2.1 of \citealt{NeweyMc1994}) combined with a uniform-in-$\theta$ law of large numbers, which in the present setting can be obtained from Theorem~\ref{theorem:global_maximal_ineq} applied to $\{\psi(\cdot,\theta,\eta):\theta\in\Theta,\,\eta\in\Gamma_n(\eta_0)\}$; we omit the details.}}, and $\widehat\Upsilon\overset{p}{\to} \Upsilon$ for some positive-definite limit $\Upsilon$ as $n\to\infty$, {such that the eigenvalues of $\Upsilon$ are bounded above and bounded away from zero. For $j\in[q]$, write $f_j(\eta)=\psi_j(X,\theta_0,\eta)-\psi_j(X,\theta_0,\eta_0)$ and define the scalar class
\[
\mathcal F_n:=\{f_j(\eta)-\mathbb E_{P_n}[f_j(\eta)]:\eta\in\Gamma_n(\eta_0),\ j\in[q]\}.
\]
Let $F_n\geq1$ be a measurable envelope for $\mathcal F_n$. Suppose, for some $r> 2$} and for each $n$, $\|F_n\|_{P_n,r}<\infty$\footnote{This condition need not hold uniformly in $n$; in particular, $\|F_n\|_{P_n,r}\to\infty$ is allowed.}{, that $n^{1/4}\|F_n\|_{P_n,2}\le A_n\vee\overline N$,
}
 and 
\begin{align}
    \text{$\mathcal{F}_{n}$ is a VC-type class with characteristics $A_n\ge {(e^{2(K-1)}/16)\vee e}$ and $v_n\ge1$},\label{eq:VC-type-rate}
\end{align}
then we have (i) the following asymptotic linear representation holds uniformly over $P_n\in\mathcal P_n$ as $n\to\infty$,
\begin{align*}
        \sqrt{n}(\hat{\theta}-\theta_0) = &\left( J_0' \Upsilon  J_0\right)^{-1} J_0' \Upsilon \sqrt{n} \mathbb{E}_N [\psi(X,\theta_0,{\eta}_0)] + \sum_{k=1}^K\sum_{\bm{e}\in \mathcal{E}_k} O_{P_n}(\rho_{n,k}) +o_{P_n}(1) \\
        \text{where }\ \rho_{n,k}=&\left(\frac{v_n\log(A_n\vee \overline{N} )}{ {n^{1-1/(2k)}}}\right)^{k/2} \vee \left(\frac{\left\|  F_n  \right\|_{P_n,r} \{v_n\log(A_n\vee \overline{N} )\}}{n^{1/2-1/r}}\right)^{k}.
\end{align*}
(ii) If, additionally, (a) for each $k=1,\ldots, K$,
\begin{align}
 \frac{v_n\log(A_n\vee \overline{N} )}{ {n^{1-1/(2k)}}}\vee \frac{\left\|  F_n  \right\|_{P_n,r} \{v_n\log(A_n\vee \overline{N} )\}}{n^{1/2-1/r}}=o(1),\label{eq:rate_condition}
\end{align}
and (b)  the smallest eigenvalue of $\Psi_{0,n}$, defined below, is bounded from below by some positive constant, then it holds {uniformly over $P_n\in\mathcal P_n$ as $n\to\infty$} that
    \begin{align*}
        V_{0,n}^{-1/2} \sqrt{n}(\hat{\theta}-\theta_0)\overset{d}{\to} N(0,I_d),
    \end{align*}
    where $V_{0,n} :=\left(J_0' \Upsilon  J_0\right)^{-1} J_0' \Upsilon \Psi_{0,n} \Upsilon J_0 \left(J_0' \Upsilon  J_0\right)^{-1}$ and
    \begin{align*}
        \Psi_{0,n} :
        =\mu_1 {\rm Var}( \mathbb{E}[\psi(X,\theta_0,\eta_0)|U_{1,0,...,0}]) +...+ \mu_K {\rm Var}( \mathbb{E}[\psi(X,\theta_0,\eta_0)|U_{0,0,...,1}]).
    \end{align*}
    where $\mu_k =\lim_{n\to\infty} \frac{n}{N_k}$ {(the limits are assumed to exist)} for $k=1,...,K$; $\{U_{\bm i}\}_{\bm i\ge \bm 0,\,\bm i\ne \bm 0}$ are as defined in \eqref{eq:AHK_representation}.   
    
\end{theorem}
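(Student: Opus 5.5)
The plan is the standard Z-estimator expansion, with the first-stage effect controlled by localisation together with the local maximal inequality of Section~\ref{sec:maximal_inequalities_SE}. Since $\hat\theta\overset{p}{\to}\theta_0\in{\rm int}(\Theta)$, the first-order condition $\hat G(\hat\theta)'\hat\Upsilon\hat\psi_N(\hat\theta)=0$ holds with probability approaching one, where $\hat G(\theta)=\mathbb{E}_N[\partial_\theta\psi(X,\theta,\widehat\eta)]$. A mean-value expansion row by row gives
\[
\hat\psi_N(\hat\theta)=\hat\psi_N(\theta_0)+\bar G(\hat\theta-\theta_0),
\]
with $\bar G$ evaluated at intermediate points. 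Hence
\[
\sqrt n(\hat\theta-\theta_0)=-(\hat G'\hat\Upsilon\bar G)^{-1}\hat G'\hat\Upsilon\sqrt n\,\hat\psi_N(\theta_0).
\]
Two things remain. First, $\hat G(\hat\theta),\bar G\overset{p}{\to}-J_0$. Second, $\sqrt n\hat\psi_N(\theta_0)=\sqrt n\mathbb{E}_N\psi(X,\theta_0,\eta_0)+\sum_{k}\sum_{\bm e\in\mathcal E_k}O_{P_n}(\rho_{n,k})+o_{P_n}(1)$. Then $\hat\Upsilon\to\Upsilon$ and (vii) give part (i), using that $\sqrt n\mathbb{E}_N\psi(X,\theta_0,\eta_0)=O_{P_n}(1)$ by a variance computation under the AHK representation and (vi).

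\emph{Jacobian step.} On the event $\widehat\eta\in\Gamma_n(\eta_0)$, which has probability tending to one by (v), split
\begin{align*}
\|\hat G(\tilde\theta)-\mathbb{E}\,\partial_\theta\psi(X,\theta_0,\eta_0)\|
\le{}& \mathbb{E}_N[B(X,\widehat\eta)]\,\|\tilde\theta-\theta_0\|^\alpha
+\sup_{\eta\in\Gamma_n(\eta_0)}\|(\mathbb{E}_N-\mathbb{E})\partial_\theta\psi(X,\theta_0,\eta)\|\\
&+\|\mathbb{E}[\partial_\theta\psi(X,\theta_0,\widehat\eta)-\partial_\theta\psi(X,\theta_0,\eta_0)]\|.
\end{align*}
The last term is $o(1)$ by (viii). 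For the first two terms I would avoid a uniform LLN over $\eta$. Instead I bound $\mathbb{E}_N[\sup_{\eta\in\Gamma_n(\eta_0)}|a(X,\eta)-a(X,\eta_0)|]$, whose expectation tends to zero by (viii) and Markov's inequality. The remaining piece is a pointwise LLN for $a(X,\eta_0)$ in SE arrays under the $1+\delta$ moment of (iii)--(iv), using a truncation argument together with the dissociation variance bound. The first term then vanishes because $\|\tilde\theta-\theta_0\|^\alpha\to0$.

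\emph{Score step (main obstacle).} Write
\[
\hat\psi_N(\theta_0)=\mathbb{E}_N\psi(X,\theta_0,\eta_0)+(\mathbb{E}_N-\mathbb{E})f(\widehat\eta)+m_{P_n}(\widehat\eta).
\]
For the bias term I would use a Taylor expansion along the segment $\eta_0+\tau(\widehat\eta-\eta_0)$ with (\ref{eq:orthogonal}) and the Lipschitz bound on $D_\eta m_{P_n}$ from (ii). This gives $\|m_{P_n}(\widehat\eta)\|\le C_4\|\widehat\eta-\eta_0\|^2_{P_n,2}=o_{P_n}(n^{-1/2})$ by (v). The empirical process term is bounded on the localisation event by $\sup_{f\in\mathcal F_n}|(\mathbb{E}_N-\mathbb{E})f|$, where (ii) gives $\sup_f\|f\|_{P_n,2}\le \sigma_n:=2\sqrt{C_2}C_1n^{-1/4}$. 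Here I invoke the local maximal inequality for SE arrays from Section~\ref{sec:maximal_inequalities_SE}. That inequality is based on the Hoeffding-type decomposition into projections indexed by $\bm e\in\mathcal E_k$, and the $k$-th order piece scales like $n^{-k/2}$ times a local entropy integral. Plugging in the VC characteristics $(A_n,v_n)$, $\sigma_n\asymp n^{-1/4}$, the condition $n^{1/4}\|F_n\|_{P_n,2}\le A_n\vee\overline N$ (so that the log term is $\log(A_n\vee\overline N)$), and $\|F_n\|_{P_n,r}$ for the Hoffmann--Jørgensen term should yield the bound $n^{-1/2}\sum_k\sum_{\bm e}\rho_{n,k}$ after multiplying by $\sqrt n$. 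This bookkeeping is where I expect the difficulty. The $k=1$ term must reduce to $\sigma_n\sqrt{v_n\log}$, which is $o(1)$, while the higher-order degenerate pieces produce the exponents $k/2$ and $n^{1-1/(2k)}$. The constraint $A_n\ge e^{2(K-1)}/16\vee e$ is presumably there to make the theorem's hypotheses applicable.

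\emph{Part (ii).} Condition (\ref{eq:rate_condition}) makes each $\rho_{n,k}=o(1)$. It then suffices to show the CLT
\[
\Psi_{0,n}^{-1/2}\sqrt n\mathbb{E}_N\psi(X,\theta_0,\eta_0)\overset{d}{\to}N(0,I).
\]
I would take the Hájek projection onto $\{U_{\bm e\odot\bm i}:\bm e\in\mathcal E_1\}$. Its $k$-th component is an average over $N_k$ i.i.d. terms, and the terms are independent across $k$. The variance of this projection is $\sum_k (n/N_k){\rm Var}(\mathbb E[\psi\mid U_{\cdot}])\to\Psi_{0,n}$. The remainder has variance $O(1/n)\cdot$ bounded quantities by (vi) and dissociation. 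Lyapunov's condition for the projection follows from the $2+\delta$ moment. Assumption (b) permits standardisation, and Slutsky's lemma gives the result uniformly along sequences $P_n$.
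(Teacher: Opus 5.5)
Your proposal is correct and follows the paper's proof essentially step for step. The shared steps are the first-order condition with a row-wise mean-value expansion (the paper uses the equivalent integrated Jacobian), Jacobian consistency via the equicontinuity condition (viii) plus a multiway LLN, and the orthogonality bias bounded by a multiple of $\|\widehat\eta-\eta_0\|_{P_n,2}^2$. They continue with localisation, the Hoeffding decomposition and Corollary~\ref{cor:local_maximal_ineq_VC} with $\sigma_{\bm e}\asymp n^{-1/4}\le 1\le\|P_{\bm e}F_n\|_{P_n,2}$, and, for part (ii), a H\'ajek projection with Lyapunov and Cram\'er--Wold. The bookkeeping you defer is short: bound $\|M_{\bm e}\|_{P_n,2}\le (n\,\mathbb E[F_n^r])^{1/r}=n^{1/r}\|F_n\|_{P_n,r}$ (maximum bounded by the sum of $r$-th moments plus conditional Jensen), then multiply the corollary's second display by $\sqrt{n/|I_{\bm N,\bm e}|}$ and use $|I_{\bm N,\bm e}|\ge n^k$, $\|F_n\|_{P_n,r}\ge 1$ and $n^{1/r}\le n^{k/r}$; this gives exactly $\rho_{n,k}$, and Markov's inequality completes part (i).
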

A proof can be found in Section~\ref{sec:Proof of Theorem gmm} in the appendix. 
The additional condition (b) in statement (2) of the theorem is a non-degeneracy requirement: the first-order projections associated with the clustering dimensions must jointly have nonzero variance in every direction of the $q$-dimensional score.  {More precisely, it requires that, in every direction $b\in\R^q$ of the score, at least one clustering dimension $k$ with $\mu_k>0$ has $b'{\rm Var}(\mathbb{E}[\psi(X,\theta_0,\eta_0)|U_{\bm e_k}])b$ bounded away from zero; note that dimensions with $\mu_k=0$ do not contribute to $\Psi_{0,n}$. The condition is mild  in the sense that non-degeneracy of a single such dimension (in every direction) suffices.} Related condition was also imposed in, e.g., \cite{davezies2021empirical}, \cite{chiang2022multiway}, and \cite{chiang2023inference}.
 When $K=1$, the setup is i.i.d. and $\Psi_{0,n}={\rm Var}_{P_n}(\psi)$, so non-degeneracy may hold in the usual way. {By contrast, if cell observations in a $K\geq2$ array are i.i.d., all first-order cluster projections vanish under the $\sqrt n$ normalisation; inference then belongs to the conventional $\sqrt N$ regime (\citealp{belloni2015uniform}, among others) rather than the non-degenerate multiway-cluster regime studied here. }

 The expansion below, and much of the analysis that follows, involves the empirical process associated with $\mathbb{E}_N$, defined for a measurable $f:\calS\to\R$ with $P_n|f|<\infty$ by
\begin{align}
	\Gn(f) = \frac{\sqrt{n}}{N} \sum_{\bm{i} \in [\bm{N}]} \Bigl\{ f(X_{\bm{i}}) - P_nf \Bigr\}, \label{eq:empirical_process}
\end{align}
which centres the sample mean process at $P_nf$ and scales it by the effective sample size $\sqrt{n}$ rather than $\sqrt{N}$. 

 To build intuition, note that the first-order condition implies the expansion
\begin{align}
\sqrt{n}(\hat \theta - \theta_0)
\;\approx\;M\left\{\sqrt{n}\,
\mathbb{E}_N[\psi(X,\theta_0,\eta_0)]
\,+\, \sqrt{n}\,\mathbb{E}[f(\eta)]_{\eta=\hat \eta}\,
+\, \Gn\bigl(f(\hat \eta)\bigr)\right\}, \label{eq:expansion_intuition}
\end{align}
 {where \(M:=\left( J_0' \Upsilon  J_0\right)^{-1} J_0' \Upsilon\) and}
\(
f(\eta)=\psi(X,\theta_0,\eta)-\psi(X,\theta_0,\eta_0).
\)
Such a decomposition is standard in semiparametric theory; see, for example, \citet{andrews1994asymptotics}.

The first term is asymptotically normal. The second term is controlled by the orthogonality condition \eqref{eq:orthogonal}, and is therefore negligible. Consequently, the main technical challenge is to control the localised empirical process
\(\Gn\bigl(f(\hat \eta)\bigr),\)
which is non-standard due to the dependence of \(\hat \eta\) on the full sample.
A standard approach is cross-fitting, which removes this dependence. Conditional on \(\hat \eta\), one may apply Hoeffding-type decomposition and Markov's inequality to obtain, under suitable smoothness conditions, with probability $1-o(1)$
\[
\Gn\bigl(f(\hat \eta)\bigr)
\;\lesssim\;
\|\hat \eta - \eta_0\|^{v}_{P_n,2}
\quad \text{for some } v>0.
\]

An alternative is a localisation argument. Suppose there exists a sequence of shrinking { nuisance neighborhoods \(\{\Gamma_n(\eta_0)\}\) such that $P_n(\hat \eta \in \Gamma_n(\eta_0)) \to 1$.} Then, with probability \(1-o(1)\),
\[
\bigl|\Gn(f(\hat \eta))\bigr|
\;\le\;
\sup_{\eta\in\Gamma_n(\eta_0)}
\bigl|\Gn(f(\eta))\bigr|,
\]
which removes the stochastic dependence on \(\hat \eta\). The nuisance neighborhoods induce deterministic function classes through \(\{f(\eta):\eta\in\Gamma_n(\eta_0)\}\), and can often be constructed tightly when the convergence rate of $\|\widehat\eta - \eta_0\|_{P_n,2}$ is available. This is typically the case, as the same rate is also needed to verify the orthogonality condition regardless of whether cross-fitting is used.

In the classical semiparametric literature, the function class \(\mathcal{F}\) is typically fixed, and stochastic equicontinuity follows from standard functional CLT arguments. In contrast, with machine-learning first stages, a fixed \(\mathcal{F}\) is generally too large to control, necessitating shrinking (localised) classes.
This localisation strategy underlies the i.i.d.\ analyses of \citet{belloni2015uniform} further generalised in \citet{belloni2018uniformly}, which rely on maximal inequalities from \citet{CCK2014AoS} to control the supremum. Since comparable results are unavailable under multiway clustering, we develop the required global and local maximal inequalities in Section~\ref{sec:maximal_inequalities_SE}.

\begin{remark}[Choosing the envelopes $F_n$]
   For the coordinatewise class in Theorem~\ref{thm:gmm}, one may take $
    F_n(X)= {1\vee}\sup_{\substack{\eta\in\Gamma_n(\eta_0)\\j\in[q]}}
    \left|f_j(\eta)-\mathbb E_{P_n}[f_j(\eta)]\right| $.
    The $L_r(P) $ integrability of this object can be ensured by, for example, the existence of a local Hölder modulus $L(X)$ such that for some $\alpha>0$,
    \[
    \|\psi(X,\theta_0,\eta)-\psi(X,\theta_0,\eta_0)\|
    \leq L(X)\|\eta-\eta_0\|_{P_n,2}^{\alpha}
    \quad\text{for all }\eta\in\Gamma_n(\eta_0),
    \]
    with $\|L\|_{P_n,r}<\infty$ and $\|\psi(X,\theta_0,\eta_0)\|_{P_n,r}<\infty$.
\hfill $\lozenge$\end{remark}

\begin{remark}[Heterogeneous cluster]
    \label{rem:cluster_heterogeneity_normal}
    Theorem~\ref{thm:gmm} is stated for the baseline setup with one observation per cell. The results extend to cluster-size heterogeneity with zero or multiple observations per cell for a modified estimator. The stochastic order remains governed by the smallest cluster dimension $n$, while heterogeneous cell sizes affect the moment condition, envelope, and asymptotic variance through the cell-aggregated score. A formal statement of the modified estimator and the results is given in the appendix.
\end{remark}

\begin{remark}[Alternative asymptotics for semiparametric estimation]\label{rem:intuition}
Making the empirical process component in the asymptotic expansion negligible is not the only route to asymptotic linearity and normality in DML and related two-step estimation problems. An alternative arises when the nuisance parameter $\eta$ itself admits an asymptotically linear representation and satisfies a central limit theorem. A canonical example is the density-weighted average derivative estimator studied by \citet{powell1989semiparametric}. In such settings, valid inference can instead be conducted under small-bandwidth asymptotics, as developed by \citet{cattaneo2014small}. Under this regime, the empirical process term $\Gn(f(\widehat\eta))$ need not vanish asymptotically; rather, its limiting distribution can be explicitly characterised using a CLT for quadratic forms\footnote{See, e.g. \cite{de1987central}.} and may be of the same order as, or even dominate, the usual asymptotic linear component. This framework is particularly relevant when the nuisance parameter is estimated using kernel-based methods. It is also worth noting that (leave-one-out) cross-fitting continues to play a role in this literature.
\hfill $\lozenge$\end{remark}

\begin{remark}[Multiway-clustering stability]
    One may alternatively pursue asymptotic normality of the debiased GMM  estimator without using a maximal inequality by directly controlling the empirical process component $\Gn(f(\widehat\eta))$ appearing in Remark~\ref{rem:intuition}, through a multiway-clustering stability condition analogous to that of \cite*{chen2022debiased}. 
Establishing the asymptotic negligibility of the empirical process term under this type of conditions, and verifying them for concrete first-stage learners under multiway dependence, require additional arguments beyond the present paper; we leave this route for future research.
\hfill $\lozenge$\end{remark}

 \subsection{Verification of complexity and rate conditions: three examples}\label{sec:examples}
Under Assumption~\ref{assu:gmm_reg}, to apply Theorem~\ref{thm:gmm} it suffices to verify the high-level VC-type condition in \eqref{eq:VC-type-rate} and the rate condition in \eqref{eq:rate_condition}. Verifying these conditions is not entirely straightforward in general, owing to their abstract nature.
Below, we discuss several examples of machine learning estimators for the first-stage nuisance function $\eta$ that satisfy these two conditions. 

To isolate the role of the first-stage nuisance parameter learner, we consider $\psi(\cdot,\eta)$ as a map in $\eta$ that preserves the VC-type properties of the underlying function class $\mathcal{G}_n$ to which $\eta$ belongs. For instance, $\psi(\cdot,\eta)$ may be a monotone or Lipschitz transformation of $\eta$, or a finite combination such as sums, products, minima, or maxima; see Section 3.6 of \citet{gine2016mathematical}.

\begin{theorem}[Complexity/rate for machine learners]
\label{eg:vctype}
    Suppose the nuisance realisation set of Assumption~\ref{assu:gmm_reg} is taken as $\mathcal{H}_n=\mathcal{G}_n$ (so that $\Gamma_n(\eta_0)\subset\mathcal{G}_n$), $\psi(.,\eta)$ is a map that preserves the VC-type characteristics of $\mathcal{G}_n $ and $\Vert F_n\Vert_{P_n,r} = {O(1)}$ with some $r>4$. \footnote{Here, ``preserves the VC-type characteristics'' means that the localised, centred class $\mathcal F_n$ of Theorem~\ref{thm:gmm} is of VC-type with the same characteristics as $\mathcal G_n$ up to universal constants; this holds for the transformations listed above because localisation to $\Gamma_n(\eta_0)\subset\mathcal{H}_n=\mathcal{G}_n$ passes to a subclass, and subtracting the fixed function $\psi(\cdot,\theta_0,\eta_0)$ and the constants $\mathbb E_{P_n}[f_j(\eta)]$ alter $(A,v)$ only by universal constants, with the envelope adjusted accordingly (see Section 3.6 of \citealt{gine2016mathematical}).} Since $\|F_n\|_{P_n,2}=O(1)$ and $n\le\overline N$, the envelope-growth condition $n^{1/4}\|F_n\|_{P_n,2}\le A_n\vee\overline N$ of Theorem~\ref{thm:gmm} holds for all $n$ sufficiently large. Then, in each of the following three cases, Condition~\eqref{eq:VC-type-rate} holds with the stated characteristics, and Condition~\eqref{eq:rate_condition} holds under the additional rate restriction stated in that case:
    \begin{enumerate}
        \item (Generalised linear models with $\ell^1$-regularisation) If for a monotonic link $\Lambda$ $$\mathcal{G}_n = \{x\mapsto \Lambda(x^\top\beta): \beta\in \mathbb{R}^p,\Vert\beta \Vert_0<s, \Vert\beta \Vert_2{\le B_n \text{ for some finite } B_n} \},$$ then $\mathcal{G}_n$ is VC-type with $v_n= {2(s+1)}$ and $A_n = \frac{C \ ep}{s}\vee {((e^{2(K-1)}/16)\vee e)}$ for some  {universal} constant $C<\infty$. {Assume further that} ${s\log(p/s)} \vee {s\log(\overline{N})} = o(n^{1/4})$.

        \item (Regression trees) Let $\{R_l\}_{l=1}^L$ denote random partitions of a regression tree with axis-aligned threshold splits based on features in $\mathbb{R}^p$, and the output on each leaf is a constant. The corresponding function class can be written as $$\mathcal{G}_n = \left\{x\mapsto g(x): g(x) = \sum_{l=1}^L \mu_{l} 1\{x\in R_l\},\ \bigcup_{l=1}^LR_l = \mathbb{R}^p,  \ |\mu_l|{\le M_n \text{ for some finite } M_n}\right\}.$$ Then, $\mathcal{G}_n$ is a VC-subgraph class with pseudo VC-dimension of order $O(L\log(Lp))$. For some $C<\infty$, $v_n=2CL\log(2Lp)$ and $A_n= C\vee{((e^{2(K-1)}/16)\vee e)}$. Further assume that $ {L\log(2Lp)\log(A_n \vee \overline{N})} = o(n^{1/4})$.

        \item (Deep neural networks) Consider a feed-forward neural network with the ReLU activation function, $p$ features, $L-2$ hidden layers, $U$ total hidden units, and $W-1$ total parameters. Let $\mathcal{G}_n$ be the class of functions generated by such a neural network with a fixed structure. Then, $\mathcal{G}_n$ is a VC-subgraph class with pseudo VC-dimension of order $O(LW\log(pU))$. For some $C<\infty$, $v_n=2CLW\log(pU)$ and $A_n= C\vee{((e^{2(K-1)}/16)\vee e)}$. Further assume that  ${LW\log(pU)\log(A_n \vee \overline{N})} = o(n^{1/4})$.

    \end{enumerate}    
\end{theorem}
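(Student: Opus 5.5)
The plan has three parts. First, bound the VC-type characteristics of $\mathcal G_n$ in each of the three cases. Second, transfer those characteristics to $\mathcal F_n$ through the footnoted preservation property. Third, substitute $(A_n,v_n)$ and $\|F_n\|_{P_n,r}=O(1)$ into \eqref{eq:rate_condition}.

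For the third step I would do the algebra once, generically. With $\|F_n\|_{P_n,r}=O(1)$ and $r>4$, the second term in \eqref{eq:rate_condition} is $O(v_n\log(A_n\vee\overline N)/n^{1/2-1/r})$, and $1/2-1/r>1/4$. The first term has denominator $n^{1-1/(2k)}\ge n^{1/2}$ for every $k\ge1$. Hence both terms are $o(1)$ as soon as
\[
v_n\log(A_n\vee\overline N)=o(n^{1/4}).
\]
In each case I would then check that the stated restriction implies this display, and write $\log(A_n\vee\overline N)\le \log A_n+\log\overline N$.

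\textbf{Case (1), sparse GLMs.} I would write $\mathcal G_n$ as a union over the $\binom{p}{s}\le(ep/s)^s$ supports $S$ with $|S|\le s$. For a fixed support, $\{x\mapsto x_S^\top\beta\}$ is a vector space of dimension $s$, so its subgraphs form a VC class of index at most $s+2$. A monotone $\Lambda$ preserves the VC-subgraph property (Section 3.6 of \citealt{gine2016mathematical}). The uniform entropy bound for VC-subgraph classes then gives a bound of the form $(C/\varepsilon)^{2(s+1)}$ for each support. Summing the covering numbers over supports multiplies this by $(ep/s)^s$, and this factor is absorbed into the $A$-characteristic, giving $A_n=Cep/s$ and $v_n=2(s+1)$. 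Taking the maximum with $(e^{2(K-1)}/16)\vee e$ only enlarges $A$, so the covering bound still holds. The rate then becomes
\[
s\{\log(ep/s)+\log\overline N\}=o(n^{1/4}),
\]
which is the assumed condition up to constants.

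\textbf{Cases (2) and (3), trees and neural networks.} I would quote known pseudo-dimension bounds. For trees with $L$ leaves built from axis-aligned splits on $p$ features, counting the split choices gives at most $(Lp)^{O(L)}$ partitions, so the pseudo-dimension is $O(L\log(Lp))$. For ReLU networks I would use the bound $O(LW\log U)$ of Bartlett et al., adjusted to include $p$. A VC-subgraph class with pseudo-dimension $V$ satisfies the uniform entropy bound with $v=2V$ and a universal $A$. The bounded output constraints $|\mu_l|\le M_n$, or a bounded network output, ensure that the envelope is finite, which matches the requirement $\|F_n\|_{P_n,r}=O(1)$. The rate condition is then exactly the stated assumption.

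The main obstacle is case (1). There the bound has to be uniform over all finite discrete $Q$, and the union over supports has to be handled so that $(ep/s)^s$ combines with $(A/\varepsilon)^{s+1}$ into the form $(A'/\varepsilon)^{v}$. This is straightforward because $\varepsilon\le1$, but some care with the constants is needed. The remaining steps, including the check $\|F_n\|_{P_n,2}\le$ the envelope-growth bound, which the statement already notes, are routine.
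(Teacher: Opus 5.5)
Your proposal is correct and follows essentially the same route as the paper. It takes a union over $\binom{p}{s}$ supports of VC-subgraph classes of index $s+2$ composed with the monotone link, uses Theorem 2.6.7 of van der Vaart and Wellner to get $(C/\varepsilon)^{2(s+1)}$ per support with $(ep/s)^s$ absorbed into $A_n$, cites pseudo-dimension bounds for trees and ReLU networks, and reduces the rate condition to $v_n\log(A_n\vee\overline N)=o(n^{1/4})$ via $n^{1-1/(2k)}\ge n^{1/2}$ and $1/2-1/r>1/4$. Two small fixes: absorbing $(ep/s)^s$ into $(A/\varepsilon)^{2(s+1)}$ relies on $ep/s\ge1$, not on $\varepsilon\le1$; and your tree-counting heuristic should count the partitions induced on $m$ sample points, i.e.\ $(pm)^{O(L)}$, before solving $2^m\le(pm)^{O(L)}$ to get $m=O(L\log(Lp))$.
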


A proof can be found in Section~\ref{sec:Proof of Theorem vctype} in the appendix.

Case (1) admits generalised linear sparse models, including sparse linear, logit, and exponential models as special cases. These models correspond to LASSO-type ($\ell^1$-penalty) machine learners for a sparse generalised linear model. Cases (2) and (3) correspond to the regression tree and deep neural networks, respectively.  More details are given in the appendix on how the rate conditions are obtained. Basic definitions and textbook treatments of these methods can be found in e.g. \cite{chernozhukov2024applied}.

\subsection{Variance estimation.} 
\label{sec:variance}
For hypothesis testing using the results given in Theorem~\ref{thm:gmm}, a missing piece is the unknown asymptotic variance $V_{0,n}$. In this section, we propose a full-sample variance estimator that takes into account (1) multiway clustering dependence and (2) estimation errors from both high-dimensional nuisance estimation and the GMM estimation. To account for the multiway clustering dependence, we follow the formulation of the multiway clustering-robust variance estimator in \cite{DDG2018}, except that the empirical scores here are replaced by the Neyman orthogonalised scores with estimated nuisance parameters. 

For any $\bm i, \bm j \in [\bm N]$, let $i_k$ and $j_k$ denote their $k$-th elements, and let $\1_{k}\{\bm i, \bm j\}$ indicate whether the two observations $\bm i, \bm j$ share the same cluster at $k$-th dimension, i.e.,
$\1_{k}\{\bm i, \bm j\} = \1\{ i_k = j_k \}.$
We define the estimator for the middle term $\Psi_{0,n}$ as follows: 
\begin{align*}
    \hat{\Psi}_N(\hat\theta) = & \sum_{ k=1 }^K  \hat\Psi_{N,k}(\hat\theta), \\
    \hat\Psi_{N,k}(\hat\theta)= & \frac{n}{N^2} \sum_{\bm i,\bm j\in[\bm N]} \psi(X_{\bm i},\hat\theta,\widehat\eta)  \psi(X_{\bm j},\hat\theta,\widehat\eta)' \1_{k}\{\bm i, \bm j\}.  
\end{align*}
Let $\hat J_N(\hat\theta) := -\partial_\theta\hat\psi_N(\theta)\big|_{\theta=\hat\theta}$ denote the empirical Jacobian. The estimator for $V_{0,n}$ is then given as follows:
\begin{align}
\label{eq:var_est}
    \hat{V}=\left(\hat{J}_N(\hat\theta)' \hat{\Upsilon} \hat{J}_N(\hat\theta)\right)^{-1} \hat{J}_N(\hat\theta)' \hat{\Upsilon} \hat{\Psi}_N(\hat\theta) \hat{\Upsilon} \hat{J}_N(\hat\theta) \left(\hat{J}_N(\hat\theta)' \hat{\Upsilon}  \hat{J}_N(\hat\theta)\right)^{-1}.
\end{align}

\begin{remark}[Variance estimation under cluster-size heterogeneity]
    \label{rem:cluster_heterogeneity_var}
    The estimator in \eqref{eq:var_est} is given for the baseline case with one observation per cell. Under cluster-size heterogeneity, the same construction applies after replacing the scores by the cell-aggregated scores and replacing \(N\) by the total number of unit observations \(\sum_{\bm i\in[\bm N]}M_{\bm i}\). Section~\ref{app:cluster_heterogeneity} in the appendix gives the formal expression and consistency result.
\end{remark}

\begin{remark}[Full-sample implementation]
    The full-sample implementation of the DML estimation and multiway cluster-robust inference is straightforward based on the formulas given above. 
    A dedicated section in Appendix~\ref{app:implementation} presents a detailed algorithm with companion packages and examples to illustrate the implementation in software.
\end{remark}


\begin{theorem}[Consistent variance estimation] 
\label{thm:var}
    {Suppose Assumption~\ref{assu:gmm_reg} holds, $\hat\theta\overset{p}{\to}\theta_0\in{\rm int}(\Theta)$, and $\widehat\Upsilon\overset{p}{\to}\Upsilon$ for some positive-definite limit $\Upsilon$ whose eigenvalues are bounded above and away from zero, as well as, for some finite constant $C_5$ independent of $n$ and $P_n$,}
    \begin{align}
        & \mathbb{E}\left[ \sup_{\eta\in \Gamma_n(\eta_0)} \left\Vert \psi(X,\theta_0,\eta)\right\Vert^{2+\delta}\right]< {C_5}, \label{eq:var_moment1} \\
        &\mathbb{E}\left[\sup_{\eta\in \Gamma_n(\eta_0)} \left\Vert \partial_\theta\psi(X,\theta_0,\eta) \right\Vert^{2+\delta}\right]< {C_5} \label{eq:var_moment2},  \\
    &\mathbb{E}\left[\sup_{\eta\in\Gamma_n(\eta_0)}[B(X,\eta)]^{2+\delta}\right]<{C_5} \label{eq:var_moment3}   , 
    \end{align}
   { where $B(X,\eta)$ is defined in Assumption~\ref{assu:gmm_reg}(iv)}{, and that Assumption~\ref{assu:gmm_reg}(viii) holds additionally with $a(x,\eta)$ equal to each of $\|\psi(x,\theta_0,\eta)-\psi(x,\theta_0,\eta_0)\|^2$, $\|\partial_\theta\psi(x,\theta_0,\eta)\|^2$, and $B(x,\eta)^2$. Then $\|\hat{V}-V_{0,n}\|\overset{p}{\to}0$ as $n\to\infty$, where $V_{0,n}$ is as defined in Theorem~\ref{thm:gmm}. } 
\end{theorem}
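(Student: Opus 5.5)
Since $\hat V$ is a continuous function of $(\hat J_N(\hat\theta),\hat\Upsilon,\hat\Psi_N(\hat\theta))$, and $J_0'\Upsilon J_0$ has eigenvalues bounded away from zero by Assumption~\ref{assu:gmm_reg}(vii) and the eigenvalue bounds on $\Upsilon$, the continuous mapping theorem (applied along sequences, with all limits bounded uniformly in $n$) reduces the claim to two statements: $\|\hat J_N(\hat\theta)-J_0\|\overset{p}{\to}0$ and $\|\hat\Psi_N(\hat\theta)-\Psi_{0,n}\|\overset{p}{\to}0$. The Jacobian part is essentially already contained in the proof of Theorem~\ref{thm:gmm}, and I would reuse it. Split
\[
\hat J_N(\hat\theta)-J_0=-\mathbb{E}_N[\partial_\theta\psi(X,\hat\theta,\widehat\eta)-\partial_\theta\psi(X,\theta_0,\widehat\eta)]-\mathbb{E}_N[\partial_\theta\psi(X,\theta_0,\widehat\eta)-\partial_\theta\psi(X,\theta_0,\eta_0)]-(\mathbb{E}_N-\mathbb{E})\partial_\theta\psi(X,\theta_0,\eta_0).
\]
The first term is bounded by $\mathbb{E}_N[\sup_{\Gamma_n(\eta_0)}B]\,\|\hat\theta-\theta_0\|^\alpha=O_{P_n}(1)o_{P_n}(1)$ on the event $\widehat\eta\in\Gamma_n(\eta_0)$, whose probability tends to one by Assumption~\ref{assu:gmm_reg}(v). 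For the second term, I bound its norm by $\mathbb{E}_N[\omega_n(X;\delta_n)]$ for a deterministic $\delta_n\downarrow0$ with $P_n(\|\widehat\eta-\eta_0\|_{P_n,2}\le\delta_n)\to1$. This is $o_{P_n}(1)$ by Markov and Assumption~\ref{assu:gmm_reg}(viii). The third term is an SE-array sample mean of a function with uniformly bounded $1+\delta$ moments. It vanishes by a weak law for SE arrays: truncate, then use the variance bound $\mathrm{Var}(\mathbb{E}_N f)\lesssim n^{-1}\|f\|_{P_n,2}^2$ from the Hoeffding-type decomposition (or Theorem~\ref{theorem:global_maximal_ineq} with a singleton class).

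For $\hat\Psi_N$, I write $\hat\psi_{\bm i}:=\psi(X_{\bm i},\hat\theta,\widehat\eta)=\psi_{\bm i}+\Delta_{\bm i}$ with $\psi_{\bm i}:=\psi(X_{\bm i},\theta_0,\eta_0)$. A mean-value expansion in $\theta$ gives
\[
\|\Delta_{\bm i}\|\le\|\psi(X_{\bm i},\theta_0,\widehat\eta)-\psi_{\bm i}\|+\Big(\sup_{\Gamma_n}\|\partial_\theta\psi(X_{\bm i},\theta_0,\cdot)\|+\sup_{\Gamma_n}B(X_{\bm i},\cdot)\Big)\|\hat\theta-\theta_0\|.
\]
Each $\hat\Psi_{N,k}$ is a bilinear form $\frac{n}{N^2}\sum_{i_k}S_{k,i_k}S_{k,i_k}'$ with $S_{k,i_k}=\sum_{\bm i:\,i_k\text{ fixed}}\hat\psi_{\bm i}$. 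By Cauchy--Schwarz in this form, the cross and remainder terms are controlled by the quantity
\[
R_k:=\frac{n}{N^2}\sum_{i_k}\Big\|\sum_{\bm i:\,i_k\text{ fixed}}\Delta_{\bm i}\Big\|^2\le\frac{n}{N_k}\cdot\frac{1}{N}\sum_{\bm i}\|\Delta_{\bm i}\|^2 .
\]
Here I used Jensen within each $i_k$-slice, which has $N/N_k$ elements. Since $n/N_k\le1$, it suffices to show $\mathbb{E}_N\|\Delta\|^2=o_{P_n}(1)$, together with $\hat\Psi^{0}_{N,k}:=\frac{n}{N^2}\sum\psi_{\bm i}\psi_{\bm j}'\1_k\{\bm i,\bm j\}=O_{P_n}(1)$. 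The quantity $\mathbb{E}_N\|\Delta\|^2$ is handled exactly as in the Jacobian step. The extra part of (viii), with $a=\|\psi(x,\theta_0,\eta)-\psi(x,\theta_0,\eta_0)\|^2$, $\|\partial_\theta\psi\|^2$ and $B^2$, together with (ii), makes the nuisance piece $o_{P_n}(1)$. The moment bounds \eqref{eq:var_moment2}--\eqref{eq:var_moment3} give $O_{P_n}(1)$ multipliers of $\|\hat\theta-\theta_0\|^2=o_{P_n}(1)$.

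The main obstacle is the oracle consistency $\hat\Psi^0_{N,k}-\mu_k\mathrm{Var}(\mathbb{E}[\psi\mid U_{\bm e_k}])\overset{p}{\to}0$ for each $k$, under only $2+\delta$ moments and drifting $P_n$. I would use the AHK representation \eqref{eq:AHK_representation} and the Hoeffding-type decomposition $\psi_{\bm i}=\sum_{\bm e\neq\bm 0}h_{\bm e}(U\text{'s indexed by }\bm i\odot\bm e)$, so that $S_{k,i_k}/(N/N_k)=\bar h_{k,i_k}+r_{k,i_k}$. Here $\bar h_{k,i_k}:=\mathbb{E}[\psi\mid U_{\bm e_k,i_k}]$ is i.i.d.\ over $i_k$, and $r_{k,i_k}$ collects the slice averages of the higher-order projections. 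The diagonal term $\frac{n}{N_k}\cdot\frac1{N_k}\sum_{i_k}\bar h_{k,i_k}\bar h_{k,i_k}'$ converges to $\mu_k\mathrm{Var}(\bar h_k)$ by a weak LLN for i.i.d.\ triangular arrays. This LLN uses the uniform $(1+\delta/2)$-moment of $\|\bar h\|^2$, which follows from Assumption~\ref{assu:gmm_reg}(vi) and Jensen. It also uses $n/N_k\to\mu_k$. The remainders satisfy $\mathbb{E}\|r_{k,i_k}\|^2\lesssim\max_{k'\ne k}N_{k'}^{-1}\|\psi\|^2_{P_n,2}\to0$, since each higher-order projection averages over at least one further independent index. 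Combined with Cauchy--Schwarz, this makes their contribution $o_{P_n}(1)$. I would also need to handle the mean: $\mathbb{E}\psi_{\bm i}=0$, so no centring correction is needed.
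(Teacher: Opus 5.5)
Your proof is correct. Its outer structure matches the paper's. You reduce to $\hat J_N(\hat\theta)\to J_0$, reused from Theorem~\ref{thm:gmm}, and to slice-wise consistency of $\hat\Psi_{N,k}(\hat\theta)$. You then split each $\hat\Psi_{N,k}$ into a plug-in error and an oracle error. Your bound $R_k\le (n/N_k)\,\mathbb{E}_N\|\Delta\|^2$, from Cauchy--Schwarz within $i_k$-slices, is the paper's bound on $\mathcal{I}_1$ via $R_n$.

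Where you genuinely differ is the oracle step. The paper proves Lemma~\ref{lemma:second_moment_lln} by direct moment calculations on the raw scores:
\begin{itemize}
\item it computes $\mathbb{E}[T_cT_c']$ by counting index pairs under (SE);
\item it truncates $h$, because the variance of $T_cT_c'$ involves fourth moments while only $2+\delta$ are available;
\item it bounds that variance by counting the index quadruples that (D) does not render independent.
\end{itemize}
Its limit is $\mathbb{E}[\psi(X_{\bm 1})\psi(X_{\bm 2-\bm e_k})']$, which then has to be identified with $\mathrm{Var}(\mathbb{E}[\psi\mid U_{\bm e_k}])$ via the computation in the proof of Lemma~\ref{lemma:hajek}.

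You instead apply the Hoeffding decomposition to each slice average. The leading term is then an exact i.i.d.\ sequence $\bar h_{k,i_k}$. The remainder is $L^2$-small at rate $n^{-1}$ by the conditional mean-zero (degeneracy) property of the $\pi_{\bm e}$ recorded in Section~\ref{sec:maximal_inequalities_SE}. What your route buys:
\begin{itemize}
\item it needs only an i.i.d.\ weak law for $\bar h\bar h'$ with $1+\delta/2$ moments, with no truncation of fourth-moment objects;
\item it lands directly on $\mu_k\mathrm{Var}(\mathbb{E}[\psi\mid U_{\bm e_k}])$;
\item it makes explicit why the higher-order (double-counted) contributions vanish.
\end{itemize}
The cost is reliance on the AHK representation and on that degeneracy property. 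The paper's lemma uses only (SE) and (D) through elementary counting.

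There is also a smaller economy in your approach. You only need $\mathbb{E}_N\|\partial_\theta\psi(X,\theta_0,\widehat\eta)\|^2$ and $\mathbb{E}_N B(X,\widehat\eta)^2$ to be $O_{P_n}(1)$. For that, the sup-envelope bounds \eqref{eq:var_moment2}--\eqref{eq:var_moment3} with Markov's inequality on the event $\widehat\eta\in\Gamma_n(\eta_0)$ suffice. So of the three extra equicontinuity conditions, your argument actually uses only the one for $\|\psi(x,\theta_0,\eta)-\psi(x,\theta_0,\eta_0)\|^2$. The paper invokes Lemma~\ref{lemma:4_3} for all three to obtain convergence to population limits, which is more than is needed.
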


A proof can be found in Section~\ref{sec:Proof of Theorem var} in the appendix.

Theorem~\ref{thm:var} establishes the consistency of the variance estimator using the full sample. {Neither the VC-type condition \eqref{eq:VC-type-rate}, the rate condition \eqref{eq:rate_condition}, nor the non-degeneracy condition (b) of Theorem~\ref{thm:gmm} is required for this result. The non-degeneracy condition is needed only when $\hat V$ is used for Studentised inference based on the normal approximation of Theorem~\ref{thm:gmm}(ii)}. The extra moment conditions mildly strengthen the moment conditions in Theorem~\ref{thm:gmm}. As in Theorem~\ref{thm:gmm}, these local integrability conditions can be delivered by integrability conditions of the score, Jacobian, and the Hessian, given enough smoothness in $\eta$. 

$\hat{V}$ is positive semi-definite by construction because each $\hat\Psi_{N,k}(\hat\theta)$ is positive semi-definite mechanically. This can be seen easily in the case $K=2$, in which case each $\hat\Psi_{N,k}(\hat\theta)$ reduces to a one-way cluster variance estimator. A caveat is that, for a $K\geq2$ array whose cells are i.i.d., the first-order cluster projections vanish and the non-degenerate $\sqrt n$ theory does not apply, then this variance estimator would overestimate the asymptotic variance $V_{0,n}$ and result in a conservative test, which is well-known in the cluster robust inference literature (e.g., see \cite{mackinnon2021wild}). A potential fix for this issue is to remove double-counting terms in $\hat\Psi_{N}(\hat\theta)$ by defining 
\begin{align*}
\tilde{\Psi}_N(\hat\theta) = & \sum_{k=1}^K\sum_{\bm e \in  \calE_k }  (-1)^{k+1}\tilde\Psi_{N,\bm{e}}(\hat\theta), \\
    \tilde\Psi_{N,\bm{e}}(\hat\theta) =& \frac{n}{N^2} \sum_{\bm{i,j\in [N]}} \psi(X_{\bm i},\hat\theta,\widehat\eta)  \psi(X_{\bm j},\hat\theta,\widehat\eta)' I_{\bm e}\{\bm i, \bm j\},  
\end{align*}
where $I_{\bm e}\{\bm i, \bm j\}= \1\{\bm{e}\odot \bm{i} = \bm{e}\odot \bm{j}\} $, which instead indicates whether the two observations share the same clusters over the whole support of $\bm e$. This is basically a DML version of the $K$-way generalisation of variance estimator proposed in \cite{Cameron2011}, referred to as the CGM estimator\footnote{Other candidates for inference procedures include the modified multiway empirical likelihood and the modified multiway jackknife variance estimator proposed in \cite{chiang2024multiway}. While these methods are computationally more demanding, they can potentially deliver improved higher-order asymptotic properties; see Theorem 3 therein.}. In a parametric setting, \cite*{DDG2018} shows that these two types of variance estimators are both consistent for the asymptotic variance under non-degeneracy. However, the CGM estimator involves more terms to calculate and is not guaranteed to be positive semi-definite. In practice, it is rarely true that multi-dimensional data is i.i.d because of the common existence of unobserved heterogeneous effects.

For some degenerate yet cluster-dependent scenarios, such as those studied by \cite{menzel2021bootstrap}, the estimator itself may fail to satisfy asymptotic normality. In such cases, standard inference procedures—including CGM-type variance estimators as well as the approach proposed in this paper—become invalid. In this context, \citet{menzel2021bootstrap} proposes bootstrap-based inference methods that are uniformly valid, but they require the choice of tuning parameters and are typically conservative.
In the two-way clustering setting, \cite{davezies2025analytic} develop a simple analytical inference that remains valid under  non-Gaussian degeneracy, while avoiding the need for tuning parameters. Complementarily, \cite{hounyo2025projection} propose bootstrap procedures that are adaptive across a range of non-degenerate and (Gaussian) degenerate cases.

Extending these approaches to our setting is substantially more involved due to the presence of two-step estimators and machine learning-based first stages. In particular, degeneracy changes the effective stochastic order of the leading term, thereby tightening the rate requirements on the first-stage estimators to ensure that their estimation error is asymptotically negligible. Moreover, incorporating the strategy of \citet{davezies2025analytic} is highly non-trivial in our framework, as it relies on conditioning arguments that, in the presence of multiway dependence and generated regressors, further complicate the first-stage convergence requirements. Addressing these challenges would require new techniques, and we leave them for future research.

\vskip 0.15in

\section{Maximal Inequalities under separate exchangeability}
\label{sec:maximal_inequalities_SE}
Let $(X_\i)_{\i\in [\bm{N}]}$ satisfy Conditions (SE) and (D). In this section, we establish inequalities that control the $r$-th moment of the supremum, over a pointwise measurable class $\calF$, of the empirical process $\Gn$ defined in \eqref{eq:empirical_process},
	\(
 \mathbb{E}\bigl[\|\Gn\|_\calF^r \bigr] ,
	\)
	for some $r \in [1,\infty)$ for SE arrays.
Throughout this section, assume without loss of generality that $Pf = 0$ for all $f \in \calF$. 
 Before presenting the main results, let us first introduce the Hoeffding-type decomposition from \cite{chiang2023inference}. For any \(\bm{i}\in [\bm{N}]\) and {   $\bm{e}\in \{0,1\}^{K} \setminus \{\bm{0}\}$}, define
	\begin{align*}
		(P_{\bm{e}}f)\Bigl(\{U_{\bm{i}\odot \bm{e}'}\}_{\bm{e}'\le \bm{e}}\Bigr)
		&=\mathbb{E}\Bigl[f(X_{\bm{i}})\,\Big|\,\{U_{\bm{i}\odot \bm{e}'}\}_{\bm{e}'\le \bm{e}}\Bigr]. 
	\end{align*}
	{Let $\bm{e}_k\in\mathcal{E}_1$ be a $K$-dimensional vector with all zero elements except for the $k$-th entry.} We then define recursively for \(k=1,2,\dots, K\) that
	\begin{align*}
		(\pi_{\bm{e}_k}f)(U_{\bm{i}\odot \bm{e}_k})
		&=(P_{\bm{e}_k}f)(U_{\bm{i}\odot \bm{e}_k}),
	\end{align*}
	and for \(\bm{e}\in \bigcup_{k=2}^K\calE_k\) set
	\begin{align*}
		(\pi_{\bm{e}}f)\Bigl(\{U_{\bm{i}\odot \bm{e}'}\}_{\bm{e}'\le \bm{e}}\Bigr)
		=\;& (P_{\bm{e}}f)\Bigl(\{U_{\bm{i}\odot \bm{e}'}\}_{\bm{e}'\le \bm{e}}\Bigr) \nonumber\\
		&\quad-\sum_{\substack{\bm{e}'\le \bm{e}\\ \bm{e}'\ne \bm{e}}}
		(\pi_{\bm{e}'}f)\Bigl(\{U_{\bm{i}\odot \bm{e}''}\}_{\bm{e}''\le \bm{e}'}\Bigr). 
	\end{align*}
	Note that by the AHK representation \eqref{eq:AHK_representation}, for a fixed \(\bm{e}\) the distributions of
	\[
	(P_{\bm{e}}f)\Bigl(\{U_{\bm{i}\odot \bm{e}'}\}_{\bm{e}'\le \bm{e}}\Bigr)
	\quad\text{and}\quad
	(\pi_{\bm{e}}f)\Bigl(\{U_{\bm{i}\odot \bm{e}'}\}_{\bm{e}'\le \bm{e}}\Bigr)
	\]
	do not depend on the index \(\bm{i}\). Hence, we shall write \(P_{\bm{e}}f\) and \(\pi_{\bm{e}}f\) for a generic \(\bm{i}\).

{

To illustrate, consider the projection operators under two-way and three-way clustering.
\begin{example}[Two-way and three-way clustering]
\label{eg:two_way_three_way}
For two-way clustering, let $K=2$ and $\bm{i}=(i_1,i_2)$. Then
\begin{alignat*}{3}
    P_{10}f
    &= \mathbb{E}\!\left[f(X_{i_1,i_2})\mid U_{10}\right], \quad &
    P_{01}f
    &= \mathbb{E}\!\left[f(X_{i_1,i_2})\mid U_{01}\right], \quad &
    P_{11}f
    &= f(X_{i_1,i_2}), \\
    \pi_{10}f &= P_{10}f, &
    \pi_{01}f &= P_{01}f, &
    \pi_{11}f &= P_{11}f-P_{10}f-P_{01}f.
\end{alignat*}
For three-way clustering, let $K=3$ and $\bm{i}=(i_1,i_2,i_3)$. The one-way components are
\begin{align*}
    \pi_{100}f
    &=P_{100}f
      =\mathbb{E}\!\left[f(X_{i_1,i_2,i_3})\mid U_{100}\right],\\
    \pi_{010}f
    &=P_{010}f
      =\mathbb{E}\!\left[f(X_{i_1,i_2,i_3})\mid U_{010}\right],\\
    \pi_{001}f
    &=P_{001}f
      =\mathbb{E}\!\left[f(X_{i_1,i_2,i_3})\mid U_{001}\right].
\end{align*}
The two-way components are
\begin{align*}
    \pi_{110}f
    &=P_{110}f-P_{100}f-P_{010}f, &
    P_{110}f
    &=\mathbb{E}\!\left[
        f(X_{i_1,i_2,i_3})
        \mid U_{100},U_{010},U_{110}
      \right],\\
    \pi_{101}f
    &=P_{101}f-P_{100}f-P_{001}f, &
    P_{101}f
    &=\mathbb{E}\!\left[
        f(X_{i_1,i_2,i_3})
        \mid U_{100},U_{001},U_{101}
      \right],\\
    \pi_{011}f
    &=P_{011}f-P_{010}f-P_{001}f, &
    P_{011}f
    &=\mathbb{E}\!\left[
        f(X_{i_1,i_2,i_3})
        \mid U_{010},U_{001},U_{011}
      \right].
\end{align*}
Finally, the three-way component is
\begin{align*}
    \pi_{111}f
    &=
    P_{111}f
    -\pi_{100}f-\pi_{010}f-\pi_{001}f
    -\pi_{110}f-\pi_{101}f-\pi_{011}f,\\
    P_{111}f
    &=
f(X_{i_1,i_2,i_3}).
\end{align*}
$\lozenge$
\end{example}

}

	Now, fix any \(1\le k\le K\) and let \(\bm{e}\in \calE_k\). Then, by Lemma~1 in \cite{chiang2023inference}, for any \(\ell\in \supp(\bm{e})\) the random variable
	\(
	(\pi_{\bm{e}}f)\Bigl(\{U_{\bm{i}\odot \bm{e}'}\}_{\bm{e}'\le \bm{e}}\Bigr)
	\)
	has mean zero conditionally on \(\{U_{\bm{i}\odot \bm{e}'}\}_{\bm{e}'\le \bm{e}-\bm{e}_{\ell}}\).
	In addition, define
	\(
	I_{\bm{N},\bm{e}} = \{\bm{i}\odot \bm{e} : \bm{i}\in [\bm{N}]\}.
	\)
	Then, we have
	\(
	\bigl|I_{\bm{N},\bm{e}}\bigr| = \prod_{k'\in\supp(\bm{e})} N_{k'}.
	\)
	Accordingly, define
	\begin{align*}
		H_{\bm{N}}^{\bm{e}}(f)
		=\frac{1}{\bigl|I_{\bm{N},\bm{e}}\bigr|} \sum_{\bm{i}\in I_{\bm{N},\bm{e}}}
		(\pi_{\bm{e}}f)\Bigl(\{U_{\bm{i}\odot \bm{e}'}\}_{\bm{e}'\le \bm{e}}\Bigr).
	\end{align*}
	 We now obtain the Hoeffding-type decomposition
	\begin{align}
		\mathbb{E}_N f = \sum_{k=1}^K \sum_{\bm{e}\in\calE_k} H_{\bm{N}}^{\bm{e}}(f). \label{hoeffding}
	\end{align}
	To bound \(\mathbb{E}\bigl[ \|\Gn(f)\|_{\calF} \bigr] \), it thus suffices to control each individual term \(\mathbb{E}\bigl[\|H_{\bm{N}}^{\bm{e}}(f)\|_{\calF}\bigr]\) separately.
	
	Finally, fix any \(1\le k\le K\) and \(\bm{e}\in \calE_k\). Define the \emph{uniform entropy integral} by
	\begin{align*}
		J_{\bm{e}}(\delta) = J_{\bm{e}}(\delta,\calF,F)
		:= \int_{0}^{\delta} \sup_{Q} \Biggl\{ 1+ \log N\Bigl(P_{\bm{e}}\calF,\|\cdot\|_{Q,2},\tau\|P_{\bm{e}}F\|_{Q,2}\Bigr) \Biggr\}^{k/2} d\tau, 
	\end{align*}
	where
	\(
	P_{\bm{e}}\calF := \{P_{\bm{e}}f : f\in \calF\},
	\)
	and the supremum is taken over all finite discrete distributions \(Q\).
	
The following result is a general global maximal inequality for SE empirical processes with an arbitrary index order $K$ and for a general order of moment $r\in[1,\infty)$. Its proof follows the arguments in the proof of Corollary B.1 in \cite{chiang2023inference} with some modifications to account for a more general class of functions.
\begin{theorem}[Global maximal inequality for SE processes]\label{theorem:global_maximal_ineq}
	Suppose $\calF$ is a pointwise measurable class of functions $f:\calS\to \R$.
	Let $(X_\i)_{\i\in [\bm{N}]}$ be a sample from $\calS$-valued random vectors $(X_\i)_{\i\in \N^K}$  {satisfying Conditions (SE) and (D)}. Pick any $1 \le k \le K$ and $\bm{e} \in \mathcal{E}_{k}$. Then, for any $r \in [1,\infty)$, we have
	\[
	|I_{\bN,\e}|^{1/2}\left (\mathbb{E} \left [ \left \| H_{\bN}^\e(f)  \right \|_{\calF}^{{r}} \right ] \right)^{1/{r}} \lesssim
	J_\e(1)  \| F \|_{P,r\vee 2}.
	\]
\end{theorem}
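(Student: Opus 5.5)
The plan is to reduce the bound on $H_{\bN}^\e$ to a degenerate $U$-process of order $k$ built from i.i.d.\ inputs, and then apply a decoupling, symmetrisation and chaining argument for Rademacher chaos of order $k$. This follows the route of Corollary~B.1 in \cite{chiang2023inference}, with finite classes replaced by covering arguments.

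\textbf{Step 1: reduction to a degenerate $U$-process.} Fix $\e\in\calE_k$ and relabel its support as $\{1,\dots,k\}$. By the AHK representation \eqref{eq:AHK_representation}, $H_{\bN}^\e(f)$ is an average over $\bm i\in I_{\bN,\e}$ of $h_f(\{U_{\bm i\odot\e'}\}_{\e'\le\e})$, where $h_f=\pi_\e f$. The kernel $h_f$ is a function of the independent variables $U_{(i_{\ell})\e_\ell}$, $\ell\in\supp(\e)$, together with the higher-order interaction variables. By Lemma~1 of \cite{chiang2023inference}, $h_f$ has mean zero conditionally on the variables indexed by $\e'\le\e-\e_\ell$, for every $\ell\in\supp(\e)$. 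This is the degeneracy needed for $k$-fold symmetrisation. Conditioning on the interaction variables, or absorbing them into an enlarged i.i.d.\ index as in the cited proof, places us in the setting of a multi-index, separately indexed $U$-statistic with completely degenerate kernel.

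\textbf{Step 2: symmetrisation and decoupling.} Using the conditional mean-zero property one coordinate at a time, I will apply the symmetrisation inequality iteratively. This introduces independent Rademacher sequences $\varepsilon^{(1)}_{i_1},\dots,\varepsilon^{(k)}_{i_k}$, one for each dimension in $\supp(\e)$, and gives
\[
\mathbb{E}\|H_{\bN}^\e\|_\calF^r\lesssim \mathbb{E}\Bigl\|\frac{1}{|I_{\bN,\e}|}\sum_{\bm i}\varepsilon^{(1)}_{i_1}\cdots\varepsilon^{(k)}_{i_k}\,h_f\Bigr\|_\calF^r .
\]
Because indices along different dimensions are already distinct, no decoupling of diagonals is needed.

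\textbf{Step 3: conditional chaining.} Conditionally on the data, the process above is a Rademacher chaos of order $k$. Its $\psi_{2/k}$-Orlicz norm is controlled by the $L_2(Q_N)$ distance, where $Q_N$ is the empirical measure on the cells. The Bonami/hypercontractivity inequality, combined with Dudley's entropy integral for $\psi_{2/k}$ processes, then yields
\[
|I_{\bN,\e}|^{1/2}\,\mathbb{E}_\varepsilon\|\cdot\|_\calF\lesssim \|\pi_\e F\|_{Q_N,2}\,\int_0^1\{1+\log N(\pi_\e\calF,\|\cdot\|_{Q_N,2},\tau\|\pi_\e F\|_{Q_N,2})\}^{k/2}d\tau .
\]
The same bound holds for the $L^r$ moment, with constants depending on $r$ and $k$. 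The exponent $k/2$ is exactly what appears in $J_\e$.

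The main technical nuisance is passing from $P_\e\calF$, which is the class entering $J_\e$, to $\pi_\e\calF$. Since $\pi_\e f=\sum_{\e'\le\e}\pm P_{\e'}f$ after inclusion–exclusion, and since each $P_{\e'}$ with $\e'\le\e$ is a conditional expectation of $P_\e f$, I will bound covering numbers of $\pi_\e\calF$ by products of coverings of $P_\e\calF$ under suitably averaged measures. This uses Jensen's inequality, as in Lemma~A.2-type arguments of \cite{CCK2014AoS} and \cite{gine2016mathematical}, to take the supremum over finite discrete $Q$. Only $2^k$ factors arise, so constants change but the entropy integral stays comparable to $J_\e(1)$.

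\textbf{Step 4: integration over the data.} Taking expectations gives a factor $(\mathbb{E}\|P_\e F\|_{Q_N,2}^r)^{1/r}$. For $r\le2$ Jensen's inequality bounds this by $\|P_\e F\|_{P,2}$. For $r>2$, Jensen's inequality applied to the empirical average bounds it by $\|P_\e F\|_{P,r}$. In both cases conditional Jensen gives $\|P_\e F\|_{P,r\vee2}\le\|F\|_{P,r\vee2}$.

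Pointwise measurability lets us restrict all suprema to a countable subclass, so the measurability issues are handled. I expect the main obstacle to be Step~2 with general $r$: carrying out the iterated symmetrisation rigorously, one dimension at a time, while the kernel also depends on interaction variables shared across dimensions.
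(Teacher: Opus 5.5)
You follow the paper's architecture: the SE symmetrisation inequality, a conditional $\psi_{2/k}$ entropy bound for the order-$k$ Rademacher chaos, the Orlicz-to-$L^r$ comparison, and Jensen's inequality on the envelope. The symmetrisation inequality is Lemma~B.1 of \cite{chiang2023inference}; it is dimension-free, holds for every $r\ge1$, and already copes with the shared interaction variables, so Step~2 is not the real obstacle. The routes differ only in how you pass from $\pi_\e\calF$ to $P_\e\calF$.

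\textbf{The paper's route.} It makes this passage before any chaining, by a conditional Jensen step on the symmetrised process. Write $\pi_\e f=\sum_{\bm 0\neq\e'\le\e}\pm P_{\e'}f$. Let $\mathcal G_{\e'}$ be the $\sigma$-field generated by the Rademachers and by the AHK variables with labels $\i\odot\e''$, $\e''\le\e'$. Then
\[
\sum_{\i\in I_{\bN,\e}}\varepsilon_{1,i_1}\cdots\varepsilon_{k,i_k}(P_{\e'}f)(\{U_{\i\odot\e''}\}_{\e''\le\e'})
=\mathbb{E}\Bigl[\sum_{\i\in I_{\bN,\e}}\varepsilon_{1,i_1}\cdots\varepsilon_{k,i_k}(P_{\e}f)(\{U_{\i\odot\e''}\}_{\e''\le\e})\,\Big|\,\mathcal G_{\e'}\Bigr].
\]
Convexity of $\|\cdot\|_\calF^r$ on a countable subclass therefore replaces $\pi_\e f$ by $P_\e f$ at the cost of a constant. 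The chaining is then done directly on $P_\e\calF$, with envelope $P_\e F$, in $L_2(\mathbb P_{I_{\bN,\e}})$. That is exactly the object inside $J_\e$, so no comparison of covering numbers is ever needed.

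\textbf{Your route.} You chain on $\pi_\e\calF$ and convert covering numbers afterwards. This can be made to work, but your sentence ``constants change but the entropy integral stays comparable to $J_\e(1)$'' hides the real work.
\begin{itemize}
\item $\pi_\e F$ is not an envelope of $\pi_\e\calF$; use $\sum_{\e'\le\e}P_{\e'}F$ instead.
\item The off-the-shelf lemma for projected classes (Lemma~\ref{lemma:conditional_exp}) normalises by the projected envelope and costs $\varepsilon\mapsto\varepsilon^2$. That loss is harmless for VC-type classes. For a general entropy integral, though, it yields a bound by $\int_0^1\{1+\log N(\tau^2)\}^{k/2}d\tau$. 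This can be infinite while $J_\e(1)$ is finite, for example when $\log N(\tau)\asymp\tau^{-\alpha}$ with $1\le\alpha k<2$.
\end{itemize}

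To avoid the loss, proceed as follows.
\begin{itemize}
\item Let $\tilde Q_{\e'}$ be $\mathbb P_{I_{\bN,\e}}$ with the coordinates $\e''\not\le\e'$ replaced by an independent draw from their true product-uniform law.
\item Use the bound $\|P_{\e'}f-P_{\e'}g\|_{\mathbb P_{I_{\bN,\e}},2}\le\|P_\e f-P_\e g\|_{\tilde Q_{\e'},2}$.
\item Normalise by $\|P_\e F\|_{\tilde Q_{\e'},2}$.
\item Since $\tilde Q_{\e'}$ is not finitely discrete, invoke the standard approximation fact: uniform entropy over finitely discrete measures controls covering numbers under any $Q$ with $0<Q(P_\e F)^2<\infty$, up to a factor in the radius.
\end{itemize}
Step~4 must then integrate $\max_{\e'}\|P_\e F\|_{\tilde Q_{\e'},2}$ rather than $\|P_\e F\|_{\mathbb P_{I_{\bN,\e}},2}$. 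The same Jensen argument still gives $\|F\|_{P,r\vee2}$.

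With these points supplied your proof is valid. It gains nothing over the paper's conditional Jensen reduction, which sidesteps all of them.
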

A proof can be found in Section~\ref{sec:Proof of Theorem global_maximal_ineq} in the appendix.

Although the global maximal inequality works for  {a general moment order $r$}, in the case that the supremum of the first absolute moment is concerned, local maximal inequalities usually provide sharper bounds. 
The following is a  novel local maximal inequality for SE empirical processes. 
\begin{theorem}[Local maximal inequality for SE processes]\label{theorem:local_max_ineq}
		Suppose $\calF$ is a pointwise measurable class of functions $f:\calS\to \R$.
	Let $(X_\i)_{\i\in [\bm{N}]}$ be a sample from $\calS$-valued random vectors $(X_\i)_{\i\in \N^K}$  {satisfying Conditions (SE) and (D)}. Set $\e\in \{0,1\}^K\setminus\{\bm{0}\}$ and let $\sigma_\e$ be a constant such that 
	$\sup_{f\in\calF}\|P_\e f \|_{P,2}\le \sigma_\e \le \|P_\e F\|_{P,2}$,
	\[ \delta_\e=\sigma_\e/\|P_\e F\|_{P,2}\quad \text{ and } \quad M_\e=\max_{t\in [n]}(P_\e F)\left(\{U_{(t,...,t)\odot \e'}\}_{\e'\le \e}\right),\] then
	\begin{align*}
		|I_{\bN,\e}|^{1/2}\mathbb{E} \left [ \left \| H_{\bN}^\e(f)  \right \|_{\calF} \right ]  \lesssim
		J_\e(\delta_\e)
			\|P_\e F\|_{P,2}
	 +\frac{J_\e^2(\delta_\e)\|M_\e\|_{P,2}}{\sqrt{n}\delta_\e^2}.
	\end{align*}
\end{theorem}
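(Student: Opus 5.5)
We prove the local maximal inequality by reducing $H_{\bN}^\e$ to a sum of independent blocks and then running the i.i.d. argument of \cite{CCK2014AoS}, with the $k$-th order entropy integral arising from $k$ rounds of symmetrisation. Fix $\e\in\calE_k$ and, relabelling coordinates, take $\supp(\e)=\{1,\dots,k\}$. Then $H_{\bN}^\e(f)$ is an average over $\bm i\in I_{\bN,\e}$ of $\pi_\e f$ evaluated at $\{U_{\bm i\odot\e'}\}_{\e'\le\e}$. This is a generalised $k$-variate U-statistic-type array in the independent row variables $U_{(i_1,0,\dots)}$, column variables, and so on.

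\textbf{Step 1 (transversal partition).} Partition the index set $\prod_{\ell\le k}[N_\ell]$ into at most $\prod_\ell N_\ell/n$ "transversal" families of size about $n$. Within each family the indices differ in every coordinate; for example, take $\{(t,\sigma_2(t),\dots,\sigma_k(t)):t\in[n]\}$ for cyclic shifts $\sigma_\ell$. By (D) and the AHK representation, the summands within one family are independent. By (SE), each family has the law of the diagonal $\{(t,\dots,t)\}_{t\in[n]}$; this is why $M_\e$ is a maximum over the diagonal. Applying the triangle inequality over families, it suffices to bound the expected supremum of a normalised diagonal average, since
\[
|I_{\bN,\e}|^{1/2}\,\mathbb{E}\|H_{\bN}^\e\|_\calF\lesssim \sqrt n\,\mathbb{E}\Bigl\|n^{-1}\sum_{t\le n}\pi_\e f(\cdot_t)\Bigr\|_\calF
\]
up to the counting factor. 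That factor needs care and may instead require a decoupling inequality in the style of de la Pe\~na--Gin\'e for the full multi-index array.

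\textbf{Step 2 (symmetrisation and chaining).} Use the conditional mean-zero property of $\pi_\e f$ (Lemma 1 of \cite{chiang2023inference}) to symmetrise one coordinate at a time, introducing $k$ independent Rademacher sequences. Conditionally on the $U$'s, the resulting process is a Rademacher chaos of order at most $k$. Its $\psi_{2/k}$ tails give the $\{1+\log N\}^{k/2}$ integrand, so Dudley-type chaining bounds the symmetrised supremum by $\|P_\e F\|_{\mathbb{P}_n,2}\,J_\e(\hat\sigma_n/\|P_\e F\|_{\mathbb{P}_n,2})$. Here $\hat\sigma_n^2=\sup_f\mathbb{P}_n(P_\e f)^2$, and we pass from $\pi_\e f$ to $P_\e f$ because $\pi_\e$ is a fixed signed combination of the conditional expectations $P_{\e'}$, $\e'\le\e$.

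\textbf{Step 3 (removing the random radius).} This step follows Theorem 5.2 of \cite{CCK2014AoS}. Use concavity of $\delta\mapsto J_\e(\delta)$ together with the facts that $\delta\mapsto J_\e(\sqrt\delta)$ is concave and $J_\e(c\delta)\le cJ_\e(\delta)$ for $c\ge1$. Combined with Jensen's inequality, these yield a bound in terms of $\mathbb{E}\hat\sigma_n^2$. Control $\mathbb{E}\hat\sigma_n^2\le\sigma_\e^2+\mathbb{E}\|\mathbb{P}_n (P_\e f)^2-P(P_\e f)^2\|$ by symmetrisation and contraction. This gives the self-referential bound $\lesssim\sigma_\e^2+n^{-1/2}\|M_\e\|_{P,2}\,Z$, where $Z$ is the quantity being bounded. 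Solving the resulting quadratic inequality in $Z$ produces exactly
\[
J_\e(\delta_\e)\|P_\e F\|_{P,2}+\frac{J_\e^2(\delta_\e)\|M_\e\|_{P,2}}{\sqrt n\,\delta_\e^2}.
\]

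The main obstacle is Step 1 combined with Step 3. The $k$-fold symmetrised process is not a sum of independent terms, so the contraction step for $\hat\sigma_n^2$ and the Hoffmann--J{\o}rgensen-type control of the maximum must be carried out for the decoupled chaos. We would first try bounding $\sup_f\mathbb{P}_n(P_\e f)^2$ directly on the independent diagonal blocks of Step 1, where $P_\e f$ is a genuine i.i.d. average in $t$. That reduces the problem to the one-way case $K=1$, which is handled by \cite{CCK2014AoS}.
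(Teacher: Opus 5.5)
Your Step~1 fails, and the problem is not a counting constant that a decoupling inequality could fix. For $\e\in\calE_k$ with $k\ge2$ the summands $\pi_\e f$ are degenerate. The $|I_{\bN,\e}|^{-1/2}$ scale of $H_{\bN}^\e$ therefore comes from cancellation \emph{across} transversal families, and the triangle inequality throws that away. Write $H_{\bN}^\e(f)=|\calG|^{-1}\sum_{G\in\calG}A_f(G)$, where $A_f(G)$ is the average of $\pi_\e f$ over a family of size $m_\e$. The triangle inequality then gives only $|I_{\bN,\e}|^{1/2}\mathbb{E}\|H_{\bN}^\e(f)\|_\calF\le\sqrt{|I_{\bN,\e}|/m_\e}\,\cdot\sqrt{m_\e}\,\mathbb{E}\|A_f(G)\|_\calF$, and the prefactor is $n^{(k-1)/2}$ in a balanced array.

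So the main term should never be split. Your Step~2, run on the whole of $I_{\bN,\e}$, needs no partition: symmetrise via Lemma B.1 of \cite{chiang2023inference}, use Jensen to replace $\pi_\e f$ by $P_\e f$, and apply $\psi_{2/k}$ chaining to the conditional chaos. Lemma~\ref{lemma:uniform_entropy_integral}(iv) with Jensen then reduces everything to $\|P_\e F\|_{P,2}J_\e(z)$, with $z^2=\mathbb{E}[\sigma^2_{I_{\bN,\e}}]/\|P_\e F\|_{P,2}^2$. Here $\sigma^2_{I_{\bN,\e}}=\sup_{f\in\calF}|I_{\bN,\e}|^{-1}\sum_{\i\in I_{\bN,\e}}(P_\e f)^2(\{U_{\i\odot\e'}\}_{\e'\le\e})$ is the random radius over the full array, not over $n$ diagonal points.

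The transversal partition (Lemma~\ref{lemma:partition}) belongs only in the bound on $\mathbb{E}[\sigma^2_{I_{\bN,\e}}]$. There it costs nothing, because the summands are nonnegative and only need to be controlled at scale $\sigma_\e^2$. Averaging group-wise suprema gives $\mathbb{E}[\sigma^2_{I_{\bN,\e}}]\le\mathbb{E}\sup_{f}D_{f,\e}$, where $D_{f,\e}=m_\e^{-1}\sum_{t\le m_\e}(P_\e f)^2(\{U_{(t,\dots,t)\odot\e'}\}_{\e'\le\e})$ is an i.i.d.\ average. Your last paragraph points this way, so the proposal is repairable, but Step~1 must be deleted rather than patched. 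Two further corrections are needed.

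\emph{Step~3's self-reference.} After symmetrisation, contraction, Cauchy--Schwarz and Hoffmann--J{\o}rgensen on the diagonal, what reappears is the one-way Rademacher average $\mathbb{E}\|m_\e^{-1/2}\sum_{t}\varepsilon_t P_\e f\|_\calF$. That is not the full-array chaos you are bounding, so the inequality is not self-referential in your $Z$. You can close it in either of two ways:
\begin{itemize}
\item Bound the diagonal process separately, e.g.\ by Lemma~\ref{lemma:local_max_ineq_indep} applied to $(\epsilon,u)\mapsto\epsilon\,P_\e f(u)$. Its entropy integral is dominated by $J_\e$ since $k/2\ge1/2$.
\item As the paper does, bound it by $\|P_\e F\|_{P,2}J_\e(\tilde z)$ with $\tilde z^2=\mathbb{E}\sup_f D_{f,\e}/\|P_\e F\|_{P,2}^2$. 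Then solve the fixed point in $\tilde z$ using Lemma~2.1 of \cite{vanderVaartWellner2011} and Lemma~\ref{lemma:uniform_entropy_integral}(iii).
\end{itemize}

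\emph{Group size versus $n$.} Transversal groups have size $m_\e=\min_{k'\in\supp(\e)}N_{k'}$, which exceeds $n$ when $\supp(\e)$ omits the shortest dimension. The argument therefore first gives the bound with $\sqrt{m_\e}$ and $M^{(m_\e)}_\e=\max_{t\le m_\e}(P_\e F)(\{U_{(t,\dots,t)\odot\e'}\}_{\e'\le\e})$. The block-maximum comparison $\|M^{(m_\e)}_\e\|_{P,2}/\sqrt{m_\e}\le\sqrt2\,\|M_\e\|_{P,2}/\sqrt n$ is what converts it to the stated form.
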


A proof can be found in Section~\ref{sec:Proof of theorem:local_max_ineq} in the appendix.

\begin{remark}\label{rem:local_max_ineq}
Although our proof strategy broadly follows that of Theorem 5.1 in \cite{ChenKato2019b} for U-processes with modifications accounting for SE structures, a key divergence arises. In Theorem~5.1, the Hoffmann--J\o rgensen inequality (which requires independence) is applied via the classical \(U\)-statistic technique of Hoeffding averaging (see, for example, Section~5.1.6 in \citealt{serfling1980approximation}). However, the more intricate dependence structure inherent to separately exchangeable arrays renders Hoeffding averaging inapplicable in our context. To address this challenge, we introduce an alternative approach by establishing Lemma~\ref{lemma:partition}, which partitions the index set \(I_{\bN,\e}\) into transversal groups of size \(m_\e=\min_{k'\in\supp(\e)}N_{k'}\ge n\)  (see Lemma~\ref{lemma:partition} for its definition). Together with AHK representation \eqref{eq:AHK_representation}, this yields i.i.d. elements within each group, thereby facilitating the application of the Hoffmann--J\o rgensen inequality; an elementary block-maximum comparison then converts the resulting bound into the form stated in Theorem~\ref{theorem:local_max_ineq}, with \(M_\e\) and \(\sqrt n\).
\hfill $\lozenge$\end{remark}



In practice, bounding the uniform entropy integrals appearing on the right-hand side of maximal inequalities can be involved. Fortunately, many function classes arising in econometrics, machine learning, and statistics can be shown to be of VC-type, in the sense of \eqref{eq:VC-type-def}. Under this assumption, the entropy terms entering the maximal inequalities admit substantially simpler bounds. We therefore derive a local maximal inequality under the VC-type condition, which is the version utilised in the proof of Theorem~\ref{thm:gmm}.
\begin{corollary}\label{cor:local_maximal_ineq_VC}
	Under the same setting as in Theorem~\ref{theorem:local_max_ineq}. In addition, suppose $\calF$ is of VC-type with characteristics $A\ge (e^{2(K-1)}/16)\vee e$ and $v\ge 1$, then for each $\e\in \calE_k$ with $k=\Vert\e\Vert_0$, one has
	\begin{align*}
		|I_{\bN,\e}|^{1/2}\mathbb{E} \left [ \left \| H_{\bN}^\e(f)  \right \|_{\calF} \right ]  \lesssim&
        \sigma_\e\{v\log (A\|P_\e F\|_{P,2}/\sigma_\e)\}^{k/2} +\frac{\|M_\e\|_{P,2}}{\sqrt{n}}\{v\log (A \|P_\e F\|_{P,2}/\sigma_\e)\}^k.
	\end{align*}
    {If, in addition, $\|P_\e F\|_{P,2}\le \sigma_\e\,(A\vee\overline N)$, then also
    \begin{align*}
		|I_{\bN,\e}|^{1/2}\mathbb{E} \left [ \left \| H_{\bN}^\e(f)  \right \|_{\calF} \right ]  \lesssim
		\sigma_\e\{v\log (A\vee \overline N)\}^{k/2} +\frac{\|M_\e\|_{P,2}}{\sqrt{n}}\{v\log (A\vee \overline N)\}^k,
	\end{align*}
    since then $\log(A\|P_\e F\|_{P,2}/\sigma_\e)\le \log(A(A\vee\overline N))\le 2\log(A\vee\overline N)$.}
\end{corollary}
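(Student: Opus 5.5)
The corollary follows from Theorem~\ref{theorem:local_max_ineq} once the uniform entropy integral $J_\e(\delta_\e)$ is bounded under the VC-type condition. The plan has three steps: transfer the VC-type property from $\calF$ to the projected class $P_\e\calF$; bound $J_\e(\delta)$ by an explicit function of $\delta$; substitute into the theorem.

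\emph{Step 1: entropy of $P_\e\calF$.} The first step, and the one I expect to be the main obstacle, is to show that
\[
\sup_Q N\bigl(P_\e\calF,\|\cdot\|_{Q,2},\tau\|P_\e F\|_{Q,2}\bigr)\le (cA/\tau)^{c'v}
\]
for universal constants $c,c'$. This does not follow from a pointwise argument, since $P_\e$ is a conditional expectation and $Q$ ranges over finite discrete measures on the space of $\{U_{\bm i\odot\e'}\}_{\e'\le\e}$. I would use the standard device (e.g.\ Lemma 5.2 of Chen and Kato, or Section 3.6 of Gin\'e and Nickl): for a finite discrete $Q$, form the mixture measure $R = Q\otimes(\text{conditional law of }X_{\bm i}\text{ given the conditioning variables})$ on $\calS$. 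Jensen's inequality then gives
\[
\|P_\e f-P_\e g\|_{Q,2}\le \|f-g\|_{R,2},\qquad \|P_\e F\|_{Q,2}\le\|F\|_{R,2}.
\]
The measure $R$ need not be discrete, so I would approximate it by empirical measures, so that the sup over finite discrete measures in \eqref{eq:VC-type-def} still controls covering numbers under $R$. This costs only a factor of $2$ in $\tau$. The normalisation mismatch also needs care: $\|P_\e F\|_{Q,2}$ can be smaller than $\|F\|_{R,2}$. I would handle this by noting that $\|F\|_{R,2}\ge\|P_\e F\|_{Q,2}$ goes in the wrong direction, and instead accept the standard VC-type entropy bound for $P_\e\calF$ with envelope $P_\e F$, as in Chen and Kato's treatment of Hoeffding projections. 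The conclusion is $\log N\lesssim v\log(A/\tau)$, where $A\ge e$ lets constants be absorbed.

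\emph{Step 2: bounding $J_\e(\delta)$.} With $A\ge e$ we have $1+v\log(A/\tau)\le 2v\log(A/\tau)$, so
\[
J_\e(\delta)\lesssim \int_0^\delta \{v\log(A/\tau)\}^{k/2}\,d\tau .
\]
Substituting $u=\log(A/\tau)$ and integrating by parts (an incomplete gamma estimate) gives
\[
\int_0^\delta \log^{k/2}(A/\tau)\,d\tau\lesssim \delta\log^{k/2}(A/\delta),
\]
provided $\log(A/\delta)\gtrsim k/2$. This is exactly where the assumption $A\ge e^{2(K-1)}/16$ enters, together with $\delta_\e\le 1$: the derivative comparison needs $\log(A/\delta)\ge$ a constant depending on $K$, so the implicit constant stays uniform. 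Hence
\[
J_\e(\delta_\e)\lesssim \delta_\e\{v\log(A/\delta_\e)\}^{k/2}.
\]

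\emph{Step 3: substitution.} Plugging into Theorem~\ref{theorem:local_max_ineq}, the first term becomes
\[
\delta_\e\|P_\e F\|_{P,2}\{v\log(A/\delta_\e)\}^{k/2}=\sigma_\e\{v\log(A\|P_\e F\|_{P,2}/\sigma_\e)\}^{k/2}.
\]
The second term becomes
\[
\frac{J_\e^2(\delta_\e)}{\delta_\e^2}\cdot\frac{\|M_\e\|_{P,2}}{\sqrt n}\lesssim\{v\log(A\|P_\e F\|_{P,2}/\sigma_\e)\}^{k}\,\frac{\|M_\e\|_{P,2}}{\sqrt n}.
\]
This gives the first display of the corollary. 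The second display follows from monotonicity of $\log$ and the elementary bound stated in the corollary: $\log(A(A\vee\overline N))\le2\log(A\vee\overline N)$, with the factor $2^{k}$ absorbed into $\lesssim$.
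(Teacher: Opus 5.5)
Your three-step structure (transfer the VC-type property to $P_\e\calF$, bound $J_\e(\delta_\e)$, substitute into Theorem~\ref{theorem:local_max_ineq}) matches the paper, which defers to the proof of Corollary~5.3 in \cite{ChenKato2019b} together with Lemma~\ref{lemma:conditional_exp}. The gap is Step~1. You correctly identify it as the crux, but you do not establish it. Your mixture-measure argument gives $\|P_\e f-P_\e g\|_{Q,2}\le\|f-g\|_{R,2}$. However, the normaliser $\|F\|_{R,2}=\{Q\,P_\e(F^2)\}^{1/2}$ is not comparable, uniformly in $Q$, to $\|P_\e F\|_{Q,2}=\{Q(P_\e F)^2\}^{1/2}$. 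A point mass $Q$ at a value where the conditional law of $F$ is very dispersed makes the ratio arbitrarily large. So no covering bound relative to the envelope $P_\e F$ follows, and no factor of $2$ repairs this. Saying you would then ``accept the standard VC-type entropy bound for $P_\e\calF$'' assumes exactly what Step~1 was meant to prove.

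The missing ingredient is Lemma~\ref{lemma:conditional_exp} (Lemma A.2 of \cite{ghosal2000testing}). It applies because, by the AHK representation, $P_\e f(u)=\int f(\tau(u,w))\,d\varrho(w)$. Here $u$ collects $\{U_{\i\odot\e'}\}_{\e'\le\e}$, and $\varrho$ is the fixed product law of the remaining AHK variables, which are independent of $u$. Moreover, $\{f\circ\tau:f\in\calF\}$ is VC-type with the same $(A,v)$, since finite discrete measures on the product space push forward through $\tau$ to finite discrete measures on $\calS$.

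The mechanism of that lemma is what resolves your normalisation problem:
\begin{itemize}
\item use $|P_\e f-P_\e g|^2\le 2\,P_\e F\cdot P_\e|f-g|$;
\item cover in $L_1$ under the envelope-weighted measure $(P_\e F)(u)\,Q(du)\otimes\varrho(dw)$, whose envelope mass is exactly $\|P_\e F\|_{Q,2}^2$.
\end{itemize}
This removes the mismatch at the price of squaring the scale rather than losing a constant factor. It yields
\[
\sup_Q N\bigl(P_\e\calF,\|\cdot\|_{Q,2},\varepsilon\|P_\e F\|_{Q,2}\bigr)\le(4A/\varepsilon^2)^v\le(2A/\varepsilon)^{2v},
\]
which is of your claimed form $(cA/\varepsilon)^{c'v}$. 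With this, your Steps~2 and~3 go through as written.

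A minor point on Step~2: the incomplete-gamma estimate $\int_0^\delta\log^{k/2}(B/t)\,dt\le C_k\,\delta\log^{k/2}(B/\delta)$ only needs $\log(B/\delta)\ge1$, which holds since $A\ge e$ and $\delta_\e\le1$. So you overstate the role of $A\ge e^{2(K-1)}/16$; that condition is bookkeeping inherited from \cite{ChenKato2019b}.
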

A proof is provided in Section~\ref{sec:proof of cor:local_maximal_ineq_VC} in the appendix. Corollary~\ref{cor:local_maximal_ineq_VC} provides a maximal inequality specialised to VC type classes. Compared to existing maximal inequality for i.i.d. data (e.g., Corollary 5.1, \citealp{chernozhukov2014gaussian}), Corollary~\ref{cor:local_maximal_ineq_VC} applies to the Hoeffding components of separately exchangeable arrays. In particular, when $K=k=1$, the Hoeffding component $|I_{\bN,\e}|^{1/2}\mathbb{E} \left [ \left \| H_{\bN}^\e(f)  \right \|_{\calF} \right ]$ coincides with the ordinary empirical process, and our bound reduces to the i.i.d. maximal inequality up to universal constants. For $k\ge 2$, symmetrisation produces a homogeneous Rademacher chaos process of order \(k\), which raises the power of the entropy factors by $k$. Although this entails a larger entropy penalty, the normalisation term $|I_{\bN,\e}|^{-1/2}$, which equals \(n^{-k/2}\) in a balanced array, would decay faster than the increased entropy penalty, given the complexity and the envelope terms are properly controlled. That results in a tighter bound for higher-order Hoeffding terms.

\section{Simulation}
\label{sec:simulation}
In this section, we conduct a Monte Carlo simulation comparing cross-fitting and cross-fitting-free DML methods in a simple partial linear IV model using various first-step machine learners. The purpose is to {  assess} whether the cross-fitting-free DML approach delivers finite-sample performance comparable to that of cross-fitting DML in terms of bias, SD, RMSE, and coverage, while offering substantial computational advantages in many settings. 

We consider a partial linear IV model,
\begin{align}
\label{eq:PLM}
    Y = D'\theta_0 + g(X) + U, \quad E[U|X,Z] = 0, 
\end{align}
where $D$ are endogenous variables and $\theta$ are the parameters of interest; $X$ is a $d$-dimensional vector of exogenous variables;  $Z$ is a vector of exogenous and excludable variables. $g(.)$ is a nuisance function that will be specified, but is treated as unknown for the simulation.

 {The Neyman-orthogonal moment function for \eqref{eq:PLM}} is given by
\begin{align*}
    \psi(W,\theta_0, \eta_0 ) =  \left(Z - \operatorname{E}[Z|X] \right) \left(Y - \operatorname{E}[Y|X]  -  \left({D}-\operatorname{E}[D|X]\right)'\theta_0\right).
\end{align*} 
where $W=\{Y,D,X,Z\}$; $\eta_0 = \{\operatorname{E}[Y|X], \operatorname{E}[D|X], \operatorname{E}[Z|X]\}$. In this case, $\theta_0$ is exactly identified, and so the proposed debiased GMM collapses to an IV estimator with plug-in nuisance estimates of $\eta_0$.

The particular DGP for \eqref{eq:PLM} under two-way clustering is given as follows  {(with a slight abuse of notation, in this section $N$ and $M$ denote the numbers of clusters in the two dimensions, so that the number of cells is $NM$)}: for $i=1,\ldots, N$ and $j=1,\ldots, M$,
\begin{align*}
    Y_{ij} =& \theta D_{ij}+ g(X_{ij}) + U_{ij}, \quad \theta =1,\\
    g(X_{ij}) =& \sin(X_{ij,1} + X_{ij,2}) + 0.5 X_{ij,3} + \cdots + 0.5 X_{ij,d},  \\
    D_{ij} = & 0.5 Z_{ij} + 0.2(X_{ij,1}+X_{ij,2})^2 + V_{ij}, \\
    Z_{ij} = & X_{ij,2}X_{ij,3} + \varepsilon_{ij},
\end{align*}
where $d\geq 3$. We note that the nuisance conditional expectation functions under this DGP are nonlinear in the observables, while the nonlinearity can be well-captured by a small number of polynomial spline bases. 

To feature the two-way cluster dependence and endogeneity of $D_{ij}$, we consider the following underlying heterogeneous components for generating the exogenous variables $X$ and error terms:
\begin{align*}
     X_{ij,k} =& \alpha_{i,k} + \gamma_{j,k} + \epsilon_{ij,k}, \quad k=1,\ldots,d,  \\
    U_{ij} = & \alpha_{i,u} + \gamma_{j,u}+\epsilon_{ij,u}, \\
    V_{ij} = & \alpha_{i,v} + \gamma_{j,v}+\epsilon_{ij,v}, \\
    \varepsilon_{ij} = & \alpha_{i,\varepsilon} + \gamma_{j,\varepsilon}+\epsilon_{ij,\varepsilon}.
\end{align*}
 {For each cell $(i,j)$, $(\epsilon_{ij,1}, \ldots,\epsilon_{ij,d})$ is a random draw from a jointly normal distribution with mean 0, unit variances, and pairwise covariance $\rho>0$; each of the pairs $(\alpha_{i,u},\alpha_{i,v})$, $(\gamma_{j,u},\gamma_{j,v})$, and $(\epsilon_{ij,u},\epsilon_{ij,v})$ is a random draw from a bivariate normal distribution with mean 0, unit variances, and covariance $\rho_{uv}>0$, which governs the level of endogeneity; and $\alpha_{i,\varepsilon}$, $\gamma_{j,\varepsilon}$, and $\epsilon_{ij,\varepsilon}$ are each standard normal and mutually independent. All draws are independent across the row index $i$, the column index $j$, and the cell index $(i,j)$, and the shock families attached to $X$, to $(U,V)$, and to $\varepsilon$ are mutually independent, so that $E[U|X,Z]=0$ holds as required by \eqref{eq:PLM}.} 


We consider the following machine learning methods for estimating the unknown nuisance functions. The first machine learner we consider is a multiplier bootstrap LASSO (mpb-LASSO) from \cite{chiang2023inference}. This approach is presented in \cite{chiang2023inference} with a theoretical guarantee on the penalty level and the performance bound of the $l^2$ convergence rate of the corresponding LASSO estimator under the multiway cluster dependence. We further consider cross-validation ridge, elastic net, random forest, and neural network. We briefly introduce the multiway cluster-robust multiplier bootstrap below since it is more specifically designed for this setting.  {The particular tuning choices for each machine learner are described in the implementation details following the tables below.}

For random forest and neural network, we use $X = (X_1,...,X_d)$ as the inputs directly. For LASSO, ridge, and elastic net, we first approximate the conditional expectations in $\eta_0$ by spline basis functions. Let $x = (x_1,...,x_d)$ denote some generic $d$-dimensional vector. We define the spline transformation of degree $l$ and $m$ knots as follows:
\begin{align*}
    \operatorname{SP}(x;l,m) =& \operatorname{sp}(x_1;l,m) \otimes \cdots \otimes \operatorname{sp}(x_d;l,m), \\
    \operatorname{sp}(x_k;l,m) =& \{1,x_k, x_k ^2,\ldots,x_k^l, (x_k - \tau_{k,1} )^l _{+}, \ldots,  (x_k - \tau_{k,m} )^l _{+} \}, \ k=1,\ldots, d.
\end{align*}
where $(z)_{+} = z 1\{z>0\}$; $(\tau_{k,1},...,\tau_{k,m})$ are ordered (interior) knots for $x_k$. We note that the dimension of $ \operatorname{SP}(x;l,m)$ is $(1+l+m)^d$. As a common and convenient choice, we consider the natural cubic splines ($l=3$), which have the number of free slope parameters $(m+2)^d$.

\subsection{Multiplier bootstrap LASSO} 
To determine the appropriate penalty level $\lambda$ for LASSO, the multiplier bootstrap approach is implemented as follows. Let $w_{ij}$ denote the vector of spline basis, i.e. $w_{ij} = \operatorname{SP}(X_{ij};l,m)$.\footnote{We suppose $w_{ij}$ is standardized after the spline transformation.} Let $\tilde{\varepsilon}$ be the residual from the initial post-LASSO estimation with the penalty level $\tilde{\lambda} = \log(NM) \left(\frac{\log p}{NM}\right)^{1/2}$. We define the sample average of the empirical score as
\begin{align*}
    \tilde{S} = \frac{1}{NM} \sum_{i=1}^N\sum_{j=1}^M w_{ij}\tilde{\varepsilon}_{ij}.
\end{align*}
Let $\{\xi_i^b\}_{i=1}^N$ and $\{\zeta_j^b\}_{j=1}^M$ each be a sequence of random draws from $N(0,1)$, for each $b=1,...,B$. Following \cite{chiang2023inference}, we then define the corresponding multiplier bootstrap version of $\tilde{S}$ as, for each $b$, 
\begin{align*}
    \tilde{S}_{H}^b = \frac{1}{N} \sum_{i=1}^N \xi_i^b \left[ \frac{1}{M}\sum_{j=1}^M w_{ij} \tilde{\varepsilon}_{ij} \right] + \frac{1}{M} \sum_{j=1}^M \zeta_j^b \left[ \frac{1}{N}\sum_{i=1}^N w_{ij} \tilde{\varepsilon}_{ij} \right].
\end{align*}
Let $Q\left(\left\Vert  \tilde{S}_{H}^b \right\Vert_{\infty};1-\alpha\right)$ be the $1-\alpha$-th quantile of $\left\Vert  \tilde{S}_{H}^b \right\Vert_{\infty}$. Then, for some $\alpha\to 0$ and $c>1$, we set $\lambda = 2cQ\left(\left\Vert  \tilde{S}_{H}^b \right\Vert_{\infty};1-\alpha\right)$. We then obtain the post-LASSO nuisance estimates using the common penalty level $\lambda$. 

\subsection{Simulation results} 

To examine the proposed method and compare it with the existing approach with cross-fitting, we consider various setups from a combination of sample sizes $(N,M)$, the number of cross-fitting folds $L$, the dimension of covariates $X$, and 5 different first-step machine learners. The comparison is in terms of Monte Carlo empirical bias, standard deviation, coverage, and computation time.

\begin{table}[htbp]
\small
\centering
\caption{Simulation results: DML with/without cross-fitting under a moderate sample size}
\label{tab:sim_results_n50}
\scriptsize
\setlength{\tabcolsep}{2.2pt}
\renewcommand{\arraystretch}{1.10}
\begin{tabular}{@{}ccclrrrrrr@{}}
\toprule
$N=M$ & $d_X$ &$L$ & Learner & Bias & SD & RMSE & Coverage & Runtime/mpb & CF/No-CF \\
\midrule\midrule
50 & 3 & 0 & mpb-LASSO & 0.019 & 0.154 & 0.155 & 0.926 & 1.00$\times$ & Ref. \\
 &  &  & Ridge & -0.008 & 0.146 & 0.146 & 0.939 & 2.16$\times$ & Ref. \\
 &  &  & Elastic net & -0.007 & 0.147 & 0.147 & 0.941 & 2.41$\times$ & Ref. \\
 &  &  & Random forest & -0.029 & 0.118 & 0.122 & 0.925 & 13.40$\times$ & Ref. \\
 &  &  & Neural network & 0.008 & 0.149 & 0.150 & 0.937 & 73.87$\times$ & Ref. \\
50 & 3 & 2 & mpb-LASSO & 0.059 & 0.555 & 0.558 & 0.955 & 1.00$\times$ & 1.76$\times$ \\
 &  &  & Ridge & -0.016 & 0.236 & 0.236 & 0.945 & 2.65$\times$ & 2.16$\times$ \\
 &  &  & Elastic net & 0.092 & 3.725 & 3.724 & 0.958 & 6.14$\times$ & 4.49$\times$ \\
 &  &  & Random forest & -0.033 & 0.096 & 0.101 & 0.968 & 6.91$\times$ & 0.91$\times$ \\
 &  &  & Neural network & 0.030 & 0.131 & 0.135 & 0.930 & 150.01$\times$ & 3.57$\times$ \\
50 & 3 & 4 & mpb-LASSO & 0.040 & 0.164 & 0.169 & 0.990 & 1.00$\times$ & 7.98$\times$ \\
 &  &  & Ridge & -0.009 & 0.152 & 0.152 & 0.972 & 2.85$\times$ & 10.53$\times$ \\
 &  &  & Elastic net & -0.017 & 0.161 & 0.161 & 0.975 & 4.15$\times$ & 13.77$\times$ \\
 &  &  & Random forest & -0.033 & 0.108 & 0.113 & 0.987 & 12.11$\times$ & 7.22$\times$ \\
 &  &  & Neural network & 0.023 & 0.138 & 0.140 & 0.976 & 274.30$\times$ & 29.64$\times$ \\\hline
50 & 4 & 0 & mpb-LASSO & 0.014 & 0.149 & 0.150 & 0.933 & 1.00$\times$ & Ref. \\
 &  &  & Ridge & 0.000 & 0.144 & 0.143 & 0.930 & 10.41$\times$ & Ref. \\
 &  &  & Elastic net & -0.005 & 0.144 & 0.144 & 0.932 & 29.25$\times$ & Ref. \\
 &  &  & Random forest & -0.023 & 0.125 & 0.127 & 0.933 & 5.20$\times$ & Ref. \\
 &  &  & Neural network & 0.016 & 0.146 & 0.147 & 0.926 & 26.63$\times$ & Ref. \\
50 & 4 & 2 & mpb-LASSO & 0.007 & 1.585 & 1.584 & 0.924 & 1.00$\times$ & 1.12$\times$ \\
 &  &  & Ridge & -0.033 & 0.118 & 0.122 & 0.966 & 4.62$\times$ & 0.50$\times$ \\
 &  &  & Elastic net & -0.059 & 0.421 & 0.425 & 0.958 & 8.78$\times$ & 0.34$\times$ \\
 &  &  & Random forest & -0.027 & 0.105 & 0.109 & 0.973 & 3.77$\times$ & 0.81$\times$ \\
 &  &  & Neural network & 0.023 & 0.126 & 0.128 & 0.937 & 66.83$\times$ & 2.82$\times$ \\
50 & 4 & 4 & mpb-LASSO & -0.003 & 0.167 & 0.167 & 0.990 & 1.00$\times$ & 7.17$\times$ \\
 &  &  & Ridge & -0.023 & 0.150 & 0.152 & 0.974 & 11.36$\times$ & 7.82$\times$ \\
 &  &  & Elastic net & -0.038 & 0.293 & 0.296 & 0.986 & 40.25$\times$ & 9.87$\times$ \\
 &  &  & Random forest & -0.025 & 0.116 & 0.119 & 0.988 & 5.52$\times$ & 7.61$\times$ \\
 &  &  & Neural network & 0.026 & 0.134 & 0.136 & 0.977 & 95.23$\times$ & 25.65$\times$ \\
\bottomrule
\end{tabular}

\vspace{0.35em}
\begin{minipage}{0.96\linewidth}
\scriptsize
\emph{Notes:}   $L=0$ denotes the no-cross-fitting specification, while $L>0$ denotes cross-fitting with $L$ folds. Runtime/MPB is the total runtime of the corresponding learner divided by the total runtime of mpb-LASSO within the same $(N,M,L,d_X)$ setting. CF/No-CF is the total runtime of a learner with cross-fitting divided by the total runtime of the same learner without cross-fitting, holding $(N,M,d_X)$ fixed; ``Ref.'' denotes the reference case, corresponding to the no-cross-fitting specification for the same learner.
\end{minipage}
\end{table}

\begin{table}[htbp]
\small
\centering
\caption{Simulation results: DML with/without cross-fitting under a small sample size}
\label{tab:sim_results_n25}
\scriptsize
\setlength{\tabcolsep}{2.2pt}
\renewcommand{\arraystretch}{1.10}
\begin{tabular}{@{}ccclrrrrrr@{}}
\toprule
$N=M$ & $d_X$& $L$ & Learner & Bias & SD & RMSE & Coverage & Runtime/MPB & CF/No-CF \\
\midrule\midrule
25 & 3 & 0 & mpb-LASSO & 0.120 & 0.237 & 0.265 & 0.871 & 1.00$\times$ & Ref. \\
 &  &  & Ridge & -0.014 & 0.229 & 0.229 & 0.921 & 2.04$\times$ & Ref. \\
 &  &  & Elastic net & -0.015 & 0.231 & 0.232 & 0.918 & 4.44$\times$ & Ref. \\
 &  &  & Random forest & -0.029 & 0.155 & 0.158 & 0.946 & 4.79$\times$ & Ref. \\
 &  &  & Neural network & 0.028 & 0.204 & 0.206 & 0.891 & 79.86$\times$ & Ref. \\
25 & 3 & 2 & mpb-LASSO & 0.666 & 24.783 & 24.779 & 0.956 & 1.00$\times$ & 1.77$\times$ \\
 &  &  & Ridge & -0.038 & 0.318 & 0.320 & 0.949 & 2.93$\times$ & 2.55$\times$ \\
 &  &  & Elastic net & -0.125 & 1.107 & 1.114 & 0.928 & 12.34$\times$ & 4.92$\times$ \\
 &  &  & Random forest & -0.032 & 0.116 & 0.120 & 0.985 & 2.88$\times$ & 1.06$\times$ \\
 &  &  & Neural network & 0.046 & 0.223 & 0.228 & 0.927 & 188.26$\times$ & 4.17$\times$ \\
25 & 3 & 4 & mpb-LASSO & -0.013 & 3.559 & 3.557 & 0.984 & 1.00$\times$ & 7.66$\times$ \\
 &  &  & Ridge & -0.036 & 0.311 & 0.313 & 0.969 & 3.25$\times$ & 12.19$\times$ \\
 &  &  & Elastic net & -0.066 & 0.546 & 0.550 & 0.988 & 11.47$\times$ & 19.79$\times$ \\
 &  &  & Random forest & -0.037 & 0.131 & 0.136 & 0.999 & 4.37$\times$ & 6.99$\times$ \\
 &  &  & Neural network & 0.020 & 0.174 & 0.175 & 0.981 & 387.44$\times$ & 37.16$\times$ \\\hline
25 & 4 & 0 & mpb-LASSO & 0.086 & 0.265 & 0.278 & 0.899 & 1.00$\times$ & Ref. \\
 &  &  & Ridge & -0.019 & 0.165 & 0.166 & 0.949 & 3.82$\times$ & Ref. \\
 &  &  & Elastic net & 0.001 & 0.223 & 0.223 & 0.920 & 6.85$\times$ & Ref. \\
 &  &  & Random forest & -0.035 & 0.151 & 0.155 & 0.952 & 3.22$\times$ & Ref. \\
 &  &  & Neural network & 0.033 & 0.196 & 0.198 & 0.897 & 42.23$\times$ & Ref. \\
25 & 4 & 2 & mpb-LASSO & 0.075 & 1.047 & 1.049 & 0.954 & 1.00$\times$ & 1.67$\times$ \\
 &  &  & Ridge & -0.002 & 0.569 & 0.569 & 0.921 & 3.53$\times$ & 1.54$\times$ \\
 &  &  & Elastic net & 0.119 & 13.084 & 13.078 & 0.899 & 2.34$\times$ & 0.57$\times$ \\
 &  &  & Random forest & -0.027 & 0.122 & 0.125 & 0.974 & 1.91$\times$ & 0.99$\times$ \\
 &  &  & Neural network & 0.041 & 0.211 & 0.215 & 0.941 & 108.49$\times$ & 4.28$\times$ \\
25 & 4 & 4 & mpb-LASSO & -0.334 & 8.643 & 8.645 & 0.984 & 1.00$\times$ & 8.87$\times$ \\
 &  &  & Ridge & -0.031 & 0.177 & 0.179 & 0.977 & 4.21$\times$ & 9.78$\times$ \\
 &  &  & Elastic net & -0.071 & 1.680 & 1.680 & 0.975 & 3.03$\times$ & 3.93$\times$ \\
 &  &  & Random forest & -0.031 & 0.137 & 0.140 & 0.996 & 2.63$\times$ & 7.25$\times$ \\
 &  &  & Neural network & 0.030 & 0.176 & 0.178 & 0.990 & 182.14$\times$ & 38.27$\times$ \\
\bottomrule
\end{tabular}

\vspace{0.35em}
\begin{minipage}{0.96\linewidth}
\scriptsize
\emph{Notes:}   $L=0$ denotes the no-cross-fitting specification, while $L>0$ denotes cross-fitting with $L$ folds. Runtime/MPB is the total runtime of the corresponding learner divided by the total runtime of mpb-LASSO within the same $(N,M,L,d_X)$ setting. CF/No-CF is the total runtime of a learner with cross-fitting divided by the total runtime of the same learner without cross-fitting, holding $(N,M,d_X)$ fixed; ``Ref.'' denotes the reference case, corresponding to the no-cross-fitting specification for the same learner.
\end{minipage}
\end{table}

The implementation details for 5 machine learners are as follows. The cubic natural splines with 3 knots are used for mpb-LASSO, ridge, and elastic net with even weights, with raw dimension $5^d$. Spline bases are generated as tensor products and then standardized. For mpb-LASSO, $\lambda_{\rm initial}=\log(n)\sqrt{\log(p)/n}$, $B=500$, $\alpha=0.10$, $c=1.10$, and $\lambda=2c q_{1-\alpha}$. For ridge and elastic net, 5-fold CV with the min-MSE rule is used to choose the penalty level. Raw features are used for the random forest. Raw features and the outcomes are both standardized for the neural net.
The random forest uses 500 trees, $mtry=\max\{1,\lfloor \sqrt{d_X}\rfloor\}$, and minimum node size 5. The neural net uses two dense layers with $(64,32)$ hidden units, ReLU activation, Adam with learning rate $10^{-3}$, 30 epochs, batch size 32, and validation split 0.2 with the MSE loss and patience 5 for early stopping. Each learner is used separately for the nuisance functions of $Y$, $D$, and $Z$. For spline learners, knots, centering, and scaling are computed using the training sample.

We first conduct the comparison under a moderate sample size. Under multiway clustering, the effective sample size is decided by the clustering dimension with the smallest number of clusters. In Table~\ref{tab:sim_results_n50}, $N=M=50$ implies an effective sample size of 50 for the second step; the same for the first step without cross-fitting. With cross-fitting, however, the effective sample size for the first step is even smaller. With $L=2$, the auxiliary sample for nuisance estimation has an effective size $25$; With $L=4$, it has an effective sample size $38$ or $37$. Therefore, we expect the nuisance estimation to be less accurate and lead to a poor finite sample performance of the second-step estimator. With a larger dimension of the covariates, denoted as $d_X$, it corresponds to a larger degree of freedom in the nuisance estimations, which may exacerbate the poor finite-sample performance due to a limited effective sample size.

The upper section of Table~\ref{tab:sim_results_n50} compares Monte Carlo finite-sample performance of DML approaches with different cross-fitting folds $L$, with $L=0$ denoting no cross-fitting. The results are based on 1000 Monte Carlo replications. The dimension of $X$ is $d_X=3$, which corresponds to $125$ degrees of freedom in the nuisance function of LASSO, Ridge, and Elastic net. The effective degree of freedom for random forest and neural network increases on $d_X$ too, but it also depends on the particular choices of their architectures. The results show that the DML approach without cross-fitting has a smaller Monte Carlo bias across all 5 first-step learners, and a smaller SD and RMSE for the penalised linear learners (mpb-LASSO, ridge, and elastic net) when $L$ is small; for random forest and the neural network, cross-fitting attains a somewhat smaller SD and RMSE at $L=2$. When the number of cross-fitting folds increases from $2$ to $4$, the Monte Carlo SD and the bias shrink markedly for the penalised linear learners, while the SD increases slightly for random forest and the neural network; overall, the results remain comparable to the DML approach without cross-fitting, if not worse. The empirical coverage rate is in most cases slightly better with cross-fitting, but that is achieved at the cost of a much wider confidence interval, {  as reflected in the generally larger SDs.}. Moreover, as $L$ increases, the cross-fitting DML approach tends to suffer from over-coverage. The lower section of Table~\ref{tab:sim_results_n50}  displays the results with $d_X =4 $, so the effective degree of freedom increases for each first-step learner. The pattern largely remains the same. \footnote{The reason that a large dimension does not deteriorate the estimation and inference could be due to the fact that the extra covariate in the DGP contributes to the variation of $Y$. With the same error term $U$, it means a larger proportion of variation in $Y$ can be explained by the observables, and therefore it leads to a more accurate nuisance estimation. But we expect the pattern to reverse as we increase the dimension of $X$ further due to higher complexity induced in the first-step learners.} 

What's more, the DML approach without cross-fitting exhibits a substantial gain in computational efficiency when $L=4$, and the gain is the biggest when the neural network is employed as the first-step learner. The second-last column of Table~\ref{tab:sim_results_n50} displays the total (Monte Carlo replication) runtime of a learner divided by the total runtime of the mpb-LASSO. It shows that it takes significantly more time for the neural network to finish the task. The last column displays the ratio of the total runtime with and without cross-fitting, for each learner correspondingly. With $L=4$, the total runtime of the DML method with cross-fitting ranges from about $7$ to $14$ times that of the corresponding method without cross-fitting across the non-neural-network learners. For the neural network first step, the ratio is up to almost $30$. At $L=2$, by contrast, the runtime comparison is mixed: for some learners the ratio is below one, i.e., cross-fitting with $L=2$ was faster than the full-sample fit in those cells, reflecting learner-specific tuning and fitting costs on subsamples.\footnote{We run our R programs on HPCC with 25 cores on one computing node. We compute the Monte Carlo replications in parallel on the 25 cores. With $1000$ Monte Carlo replications, each core is responsible for $40$ replications. Within each replication, the first-step learners run sequentially. The total runtime is recorded as the time difference, adding up across replications for each learner type.} This is a significant amount of computational gain achieved through our approach, given that the machine learners are already more or less computationally costly even without cross-fitting when the sample is large. 

Table~\ref{tab:sim_results_n25} displays the results under a smaller sample size. The sample sizes $N=M=25$ imply an effective sample size of $25$ for the second step and the first step without cross-fitting. However, the effective sample size for the first step is as small as $ {12}$ when $L=2$ and $ {18}$ when $L=4$. With such a small effective sample, the task can be a bit demanding. It shows that the DML based on high-dimensional linear methods with $l1$ or $l2$ penalty all perform poorly when cross-fitting is used. Particularly, the DML approach with the mpb-LASSO first-step suffers from severe bias and/or large SD when $L=2$. The situation gets better when we increase $L$ from $2$ to $4$, but the potential over-coverage and computation time as trade-offs are also very costly. 

\section{Conclusion}
\label{sec:conslusion}
This paper develops a cross-fitting-free asymptotic theory for two-step debiased GMM estimators under multiway clustered dependence and shows that valid inference in such settings hinges on new empirical process techniques. By combining orthogonal moment conditions with a localisation-based argument, we demonstrate that the impact of high-dimensional or nonparametric nuisance estimation can be controlled without sample splitting, even when the effective sample size is determined by the number of independent cluster units. The resulting estimators are shown to be asymptotically linear and normal under separately exchangeable sampling, providing a practical inference framework for empirically relevant clustered environments. {Our simulation results further demonstrate that the proposed approach enjoys the comparable or even better finite-sample properties of bias and standard deviation when the sample size and the number of cross-fitting folds are small, and, otherwise, a substantial gain in terms of computation.}  A central contribution is the derivation of new global and local maximal inequalities for possibly uncountable, pointwise measurable function classes under multiway dependence, which fill a gap in the existing theory and may be useful beyond the DML context. Future work may further explore stability-type conditions under multiway clustering and extend these tools to other two-step and high-dimensional problems with complex dependence structures.

\appendix

\section{Appendix}
Throughout the appendix, for $0 < \beta < \infty$, let $\psi_{\beta}$ be the function on $[0,\infty)$ defined by $\psi_{\beta} (x) = e^{x^{\beta}}-1$. Let $\| \cdot \|_{\psi_{\beta}}$ denote the associated Orlicz norm, i.e., 
$\| \xi \|_{\psi_\beta}=\inf \{ C>0: \mathbb{E}[ \psi_{\beta}( | \xi | /C)] \leq 1\}$ for a real-valued random variable $\xi$.

\subsection{Cluster-size heterogeneity}
\label{app:cluster_heterogeneity}

This section provides an extension of the main results to the case in which cells may contain heterogeneous numbers of unit observations. {The results of this section are asymptotic and inherit the uniformity convention of Section~\ref{sec:dml_gmm_main}: statements are uniform over $\{\mathcal{P}_n\}$ in the sequential sense, $P_n$ denotes the data generating measure along an arbitrary sequence $\{P_n\}$ with $P_n\in\mathcal{P}_n$, and all expectations, variances, probabilities, and stochastic orders below are taken under $P_n$.} Consider the array $\{(M_{\bm i}, \{X_{\bm i m}\}_{1\le m\le M_{\bm i}}: \bm i \in \mathbb{N}^K )\}$ introduced in Section~\ref{sec:heterogeneous_clusters}. For a unit-level score $\psi(X,\theta,\eta)$, define the cell-aggregated score
$$
    \bar\psi(X_{\bm i},\theta,\eta) = \sum_{m=1}^{M_{\bm i}}
    \psi(X_{\bm i m},\theta,\eta).
$$
Define $m_0=\mathbb E[M_{\bm i}]>0$ and the population moment condition
$$
    \mathbb E[\bar\psi(X_{\bm i},\theta_0,\eta_0)] / m_0=0.
$$
The full-sample unit-weighted empirical moment is
$$
    \hat{\bar{\psi}}_N(\theta) = \frac{
    \sum_{\bm i\in[\bm N]}\sum_{m=1}^{M_{\bm i}} \psi(X_{\bm i m},\theta,\widehat{\bar{\eta}}) }{ \sum_{\bm i\in[\bm N]}M_{\bm i} }
    = \frac{\mathbb E_N\bar\psi(X_{\bm i},\theta,\widehat{\bar{\eta}})} {\mathbb E_NM_{\bm i}}.
$$
where $\widehat{\bar{\eta}}$ is a first-stage machine learner of $\eta_0$ such that it treats observations at the unit level instead of the cell level. The corresponding cross-fitting-free debiased GMM estimator under cluster-size heterogeneity is given by
$$
    \widehat{\bar{\theta}} = \arg\min_{\theta\in\Theta} \hat{\bar{\psi}}_N(\theta)' \hat{\bar\Upsilon} \hat{\bar{\psi}}_N(\theta).
$$
where $\hat{\bar\Upsilon}$ denotes some weighting matrix that is formed the same way as $\hat{\Upsilon}$ under the baseline setup, except that the within-cell scores are first aggregated. For example, see the middle term of the variance estimator given below under cluster-size heterogeneity.

Define the population Jacobian for the normalized cell-level score
$$
     \bar{J}_0 = - \partial_\theta \left[
    \frac{\mathbb E[\bar\psi(X_{\bm i},\theta,\eta_0)]}{m_0}
    \right]_{\theta=\theta_0}.
$$
Also define
$$
    \bar{\Psi}_{0,n} := \frac{\mu_1}{m_0^2} {\rm Var}( \mathbb{E}[\bar{\psi}(X_{\bm i},\theta_0,\eta_0)|U_{1,0,...,0}]) +...+ \frac{\mu_K}{m_0^2} {\rm Var}( \mathbb{E}[\bar{\psi}(X_{\bm i},\theta_0,\eta_0)|U_{0,0,...,1}]).
$$
Let
$
    \bar{V}_{0,n}
    =
    \left(\bar{J}_0^{\prime}\bar \Upsilon \bar{J}_0\right)^{-1}
    \bar{J}_0^{\prime} \bar \Upsilon
    \bar{\Psi}_{0,n}
    \bar \Upsilon \bar{J}_0
    \left(\bar{J}_0^{\prime}\bar \Upsilon \bar{J}_0\right)^{-1}
$
where $\bar \Upsilon$ is the probability limit of $\hat{\bar\Upsilon}$. To state the asymptotic normality for this estimator, we impose the following conditions in place of Assumption~\ref{assu:gmm_reg}.
\begin{assumption}
\label{assu:gmm_reg_hetero}
{For each $n$ large enough, the following conditions hold uniformly over $P_n\in\mathcal{P}_n$, with all constants below independent of $n$ and $P_n$.}

\begin{enumerate}[(i)]
    \item The cell-level array $\{(M_{\bm i}, \{X_{\bm i m}\}_{1\le m\le M_{\bm i}}: \bm i \in \mathbb{N}^K )\}$
    satisfies the separate exchangeability and dissociation conditions as in Section~2.

    \item $c\le m_0\le C$ and $\mathbb E[M_{\bm i}^{1+\delta}]< C$ for some $\delta>0$ and constants $0<c\le C<\infty$.

    \item Define the normalized cell score $\psi^{\rm H}(X_{\bm i},\theta,\eta) = m_0^{-1}\bar\psi(X_{\bm i},\theta,\eta)$ and Assumption~\ref{assu:gmm_reg} holds for $\psi^{\rm H}$ and $\widehat{\bar{\eta}}$.

    \item For $j\in[q]$, let $\bar f_j(\eta) = \bar\psi_j(X_{\bm i},\theta_0,\eta)
        - \bar\psi_j(X_{\bm i},\theta_0,\eta_0)$,
    and define the scalar localized cell-level class
    $$
        \bar{ \mathcal F}_n = \left\{ m_0^{-1}\left(\bar f_j(\eta)-\mathbb E_{P_n}[\bar f_j(\eta)]\right): \eta\in\Gamma_n(\eta_0),\ j\in[q] \right\}.
    $$
    Let $\bar{F}_n\ge1$ be a measurable envelope for $\bar{ \mathcal F}_n$. For some moment exponent $r> 2$, $\|\bar{F}_n\|_{P_n,r}<\infty$, $n^{1/4}\|\bar{F}_n\|_{P_n,2}\le A_n\vee\overline{N}$, and $\bar{ \mathcal F}_n$ is VC-type with characteristics $A_n\ge (e^{2(K-1)}/16)\vee e$ and $v_n\ge1$. The rate condition in \eqref{eq:rate_condition} holds for $\bar{F}_n$.
\end{enumerate}
\end{assumption}

Assumption~\ref{assu:gmm_reg_hetero}(ii) is a mild regularity condition on the heterogeneous cell sizes. Assumption~\ref{assu:gmm_reg_hetero}(iv) is imposed on the class of cell-aggregated scores. If cell sizes are uniformly bounded and the unit-level score class is VC-type, this condition follows from the standard properties of VC-type classes. If cell sizes are unbounded, then it makes sense to consider the entropy and moment conditions at the cell level.

\begin{theorem}[Asymptotic linearity and normality with cluster-size heterogeneity]
\label{thm:heterogeneous_cells}
Suppose Assumption~\ref{assu:gmm_reg_hetero} holds, $\widehat{\bar{\theta}}\overset{p}{\to}\theta_0\in{\rm int}(\Theta)$, and $\hat{\bar \Upsilon} \overset{p}{\to}\bar \Upsilon$ for some positive-definite matrix $\bar \Upsilon$ whose eigenvalues are bounded above and away from zero. Then{, uniformly over $P_n\in\mathcal{P}_n$ as $n\to\infty$,}
$$
    \sqrt n(\widehat{\bar{\theta}}-\theta_0) = \left(\bar{J}_0^{\prime}\bar \Upsilon \bar{J}_0\right)^{-1}   \bar{J}_0^{\prime}\bar \Upsilon
    \sqrt n  \frac{1}{m_0} \mathbb E_N[ \bar\psi(X_{\bm i},\theta_0,\eta_0)]+
    \sum_{k=1}^K\sum_{\bm e\in\mathcal E_k}O_{P_n}(\rho_{n,k}^{\rm H}) + o_{P_n}(1),
$$
where $\rho_{n,k}^{\rm H}$ is the same as $\rho_{n,k}$ in Theorem~\ref{thm:gmm}, with $F_n$ replaced by $\bar{F}_n$. If the rate condition in Assumption~\ref{assu:gmm_reg_hetero}(iv) holds and the smallest eigenvalue of $\bar{\Psi}_{0,n}$ is bounded away from zero, then{, uniformly over $P_n\in\mathcal{P}_n$ as $n\to\infty$,}
$$
    \bar{V}_{0,n}^{-1/2}\sqrt n(\widehat{\bar{\theta}}-\theta_0)\overset{d}{\to} N(0,I_d).
$$
\end{theorem}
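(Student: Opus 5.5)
The plan is to reduce Theorem~\ref{thm:heterogeneous_cells} to Theorem~\ref{thm:gmm} applied to the normalised cell score $\psi^{\rm H}(X_{\bm i},\theta,\eta)=m_0^{-1}\bar\psi(X_{\bm i},\theta,\eta)$. The ratio $\mathbb E_N M_{\bm i}/m_0$ is then handled as a random scalar multiplying the criterion. The starting point is the identity
\[
\hat{\bar\psi}_N(\theta)=\frac{m_0}{\mathbb E_N M_{\bm i}}\,\mathbb E_N[\psi^{\rm H}(X_{\bm i},\theta,\widehat{\bar\eta})]=:\hat c_N\,\hat\psi^{\rm H}_N(\theta).
\]
Hence $\widehat{\bar\theta}$ minimises $\hat c_N^2\,\hat\psi^{\rm H}_N(\theta)'\hat{\bar\Upsilon}\hat\psi^{\rm H}_N(\theta)$. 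Because $\hat c_N$ does not depend on $\theta$, the set of minimisers is the same as for the debiased GMM estimator \eqref{gmm_estimator} built from the cell-level array $\{(M_{\bm i},\{X_{\bm im}\})\}$, with score $\psi^{\rm H}$, nuisance estimator $\widehat{\bar\eta}$ and weight $\hat{\bar\Upsilon}$. On the event $\mathbb E_N M_{\bm i}>0$ we have exact equality; this event has probability tending to one by the law of large numbers below, and Assumption~\ref{assu:gmm_reg_hetero}(ii) gives $m_0\ge c>0$.

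Next, check the hypotheses of Theorem~\ref{thm:gmm} for this cell-level problem. By Assumption~\ref{assu:gmm_reg_hetero}(i), the cell-level observations $\tilde X_{\bm i}:=(M_{\bm i},\{X_{\bm im}\}_{m\le M_{\bm i}})$ form an SE and dissociated array in a suitable measurable space $\calS$. This is the space of finite sequences, with $M_{\bm i}$ included in the observation. Assumption~\ref{assu:gmm_reg} holds for $\psi^{\rm H}$ and $\widehat{\bar\eta}$ by Assumption~\ref{assu:gmm_reg_hetero}(iii). The localised class $\mathcal F_n$ of Theorem~\ref{thm:gmm}, built from $\psi^{\rm H}$, is exactly $\bar{\mathcal F}_n$. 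So the envelope, VC-type and rate requirements are supplied by Assumption~\ref{assu:gmm_reg_hetero}(iv). The Jacobian $-\partial_\theta\mathbb E[\psi^{\rm H}(\cdot,\theta,\eta_0)]|_{\theta_0}$ coincides with $\bar J_0$. The middle matrix $\Psi_{0,n}$ for $\psi^{\rm H}$ equals $\bar\Psi_{0,n}$, since the factor $m_0^{-2}$ is pulled out of the conditional variances. Consistency and the weight limit are assumed directly.

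Theorem~\ref{thm:gmm}(i) then yields the stated linear representation with $\rho^{\rm H}_{n,k}$, because its leading term is $(\bar J_0'\bar\Upsilon\bar J_0)^{-1}\bar J_0'\bar\Upsilon\sqrt n\,\mathbb E_N[\psi^{\rm H}(\cdot,\theta_0,\eta_0)]$, which is the displayed term. Part (ii) follows in the same way from Theorem~\ref{thm:gmm}(ii) under the rate condition and the non-degeneracy of $\bar\Psi_{0,n}$.

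The main obstacle is justifying that the random denominator is harmless, namely that $\mathbb E_N M_{\bm i}\overset{p}{\to}m_0$ uniformly over $\mathcal P_n$. Without this, the event on which the two criteria share minimisers could fail. I would get this from the Hoeffding decomposition \eqref{hoeffding} applied to $f=M_{\bm i}-m_0$. Truncating at level $\tau$ controls the remainder through $\mathbb E[M^{1+\delta}]<C$ uniformly in $P_n$. The truncated part is bounded by Theorem~\ref{theorem:global_maximal_ineq} with $r=2$ and a singleton class, which gives the $O(\tau n^{-1/2})$ order. Letting $\tau\to\infty$ slowly then gives the convergence. A secondary point is that the application of Theorem~\ref{thm:gmm} uses cell-level data of varying length; its proof only uses (SE), (D) and the measurability of the cell score, so it carries over once $\calS$ is chosen as above.
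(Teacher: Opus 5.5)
Your proposal is correct and follows essentially the paper's route. Like the paper, you rewrite $\hat{\bar\psi}_N$ in terms of the normalised cell score $\psi^{\rm H}$, invoke Theorem~\ref{thm:gmm} for the cell-level array (with $\bar{\mathcal F}_n$, $\bar J_0$ and $\bar\Psi_{0,n}$ in the roles of $\mathcal F_n$, $J_0$ and $\Psi_{0,n}$), and control $\mathbb E_N M_{\bm i}$ by a law of large numbers. The only differences are minor refinements: you note that the positive scalar $m_0/\mathbb E_N M_{\bm i}$ cancels exactly from the argmin, so $\mathbb E_N M_{\bm i}\overset{p}{\to}m_0$ is needed only to guarantee $\mathbb E_N M_{\bm i}>0$ with probability approaching one, whereas the paper carries this factor as $1+o_{P_n}(1)$ through a Slutsky step; and you prove the LLN by truncation plus Theorem~\ref{theorem:global_maximal_ineq}, where the paper simply cites Lemma~\ref{lemma:4_3}.
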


\begin{proof}
{Fix an arbitrary sequence $\{P_n\}$ with $P_n\in\mathcal{P}_n$ satisfying the conditions in the statement of the theorem; in accordance with the uniformity convention of Section~\ref{sec:dml_gmm_main}, all statements below are along this sequence.} The proof follows by applying the arguments in Theorem~\ref{thm:gmm} to the normalized cell-level score $\psi^{\rm H}(X_{\bm i},\theta,\eta)$. The only additional step is to account for the random denominator $\mathbb E_N M_{\bm i}$ in $\hat{\bar{\psi}}_N(\theta)$. By Lemma \ref{lemma:4_3}, $\mathbb E_N M_{\bm i}\overset{p}{\to}m_0$.
Moreover,
$$
    \hat{\bar \psi}_N(\theta_0) = \frac{\mathbb E_N\bar\psi(X_{\bm i},\theta_0,\widehat{\bar{\eta}})}
    {\mathbb E_NM_{\bm i}} = \frac{m_0}{\mathbb E_NM_{\bm i}}
    \mathbb E_N\psi^{\rm H}(X_{\bm i},\theta_0,\widehat{\bar{\eta}}).
$$
Note that $\frac{m_0}{\mathbb{E}_N M_{\bm i}} = 1+o_{P_n}(1)$. The orthogonality and localisation arguments for $\mathbb E_N\psi^{\rm H}(X_{\bm i},\theta_0,\widehat{\bar{\eta}})$ remain the same, with $\psi$ replaced by $\psi^{\rm H}$.

The stated form of $\bar{\Psi}_{0,n}$ follows from the Hájek projection of the cell-level score:
$$
    \sqrt n  \mathbb E_N\psi^{\rm H}(X_{\bm i},\theta_0,\eta_0)= \sum_{k=1}^K
    \frac{\sqrt n}{N_k} \sum_{i_k=1}^{N_k} \mathbb E[
    \psi^{\rm H}(X_{\bm i},\theta_0,\eta_0) \mid  {U_{i_k\bm e_k}}]  + o_{P_n}(1).
$$
Substituting $\psi^{\rm H}=m_0^{-1}\bar\psi$ gives the expression for $\bar{\Psi}_{0,n}$. The conclusion follows from Slutsky's lemma and the same Cramér-Wold and Lyapunov CLT argument used in Theorem~\ref{thm:gmm}.
\end{proof}

Next, we consider variance estimation with cluster-size heterogeneity. Define $\bar M = \sum_{\bm i\in[\bm N]}M_{\bm i}$. 
For $k=1,\ldots,K$, define
\begin{align*}
    \hat{\bar{\Psi}}_N & = \frac{n}{\bar M^2} \sum_{k=1}^K  
    \sum_{\bm i,\bm j\in[\bm N]}
    \bar\psi(X_{\bm i}, \widehat{\bar{\theta}}, \widehat{\bar{\eta}})
    \bar\psi(X_{\bm j}, \widehat{\bar{\theta}}, \widehat{\bar{\eta}})'
    1\{i_k=j_k\}, \\
    \hat{\bar{J}}_N & = - \frac{1}{\bar M} \sum_{\bm i\in[\bm N]} \sum_{m=1}^{M_{\bm i}} \partial_\theta \psi(X_{\bm i m},\widehat{\bar{\theta}},\widehat{\bar{\eta}}).
\end{align*}
Then, the variance estimator is given by
$$
    \hat {\bar{V}} = \left(\hat{\bar{J}}_N^{\prime}\hat{\bar\Upsilon}\hat{\bar{J}}_N\right)^{-1} \hat{\bar{J}}_N^{\prime}\hat{\bar{\Upsilon}} \hat{\bar{\Psi}}_N
    \hat{\bar{\Upsilon}} \hat{\bar{J}}_N  \left(\hat{\bar{J}}_N^{\prime}\hat{\bar{\Upsilon}} \hat{\bar{J}}_N\right)^{-1}.
$$

\begin{theorem}[Variance estimation with heterogeneous cell sizes]
\label{thm:heterogeneous_cells_var}
Suppose the conditions of Theorem~\ref{thm:heterogeneous_cells} hold. In addition, suppose the moment conditions in \eqref{eq:var_moment1}--\eqref{eq:var_moment3} hold instead with $\psi^{\rm H}(X_{\bm i},\theta_0,\eta)$, $\partial_\theta\psi^{\rm H}(X_{\bm i},\theta_0,\eta)$, and $B^{\rm H}(X_{\bm i},\eta)$, respectively, where $B^{\rm H}(X_{\bm i},\eta)$ is the local Hölder modulus of $\partial_\theta\psi^{\rm H}(X_{\bm i},\theta_0,\eta)$, and that Assumption~\ref{assu:gmm_reg}(viii) holds additionally with $a(x,\eta)$ equal to each of $\|\psi^{\rm H}(x,\theta_0,\eta)-\psi^{\rm H}(x,\theta_0,\eta_0)\|^2$, $\|\partial_\theta\psi^{\rm H}(x,\theta_0,\eta)\|^2$, and $B^{\rm H}(x,\eta)^2$. Then{, uniformly over $P_n\in\mathcal{P}_n$ as $n\to\infty$,}
$$
   \left \|\hat {\bar{V}}-\bar{V}_{0,n}\right\|\overset{p}{\to} 0.
$$

\end{theorem}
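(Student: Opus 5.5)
The plan is to reduce Theorem~\ref{thm:heterogeneous_cells_var} to Theorem~\ref{thm:var} applied to the normalised cell score $\psi^{\rm H}=m_0^{-1}\bar\psi$, and to handle the random normaliser $\bar M$ separately. The key observation is that $\bar M = N\,\mathbb E_N M_{\bm i}$. This gives the identities
\[
\hat{\bar J}_N = -\frac{m_0}{\mathbb E_N M_{\bm i}}\,\mathbb E_N\partial_\theta\psi^{\rm H}(X_{\bm i},\widehat{\bar\theta},\widehat{\bar\eta}),
\qquad
\hat{\bar\Psi}_N = \Bigl(\frac{m_0}{\mathbb E_N M_{\bm i}}\Bigr)^2\hat\Psi^{\rm H}_N(\widehat{\bar\theta}).
\]
Here $\hat\Psi^{\rm H}_N$ is the estimator $\hat\Psi_N$ of Section~\ref{sec:variance} built from $\psi^{\rm H}$ with the plug-ins $(\widehat{\bar\theta},\widehat{\bar\eta})$. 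Its normaliser is $n/N^2$, and one checks that $n/\bar M^2 = (n/N^2)(\mathbb E_N M_{\bm i})^{-2}$. The unit-level sum $\sum_m\partial_\theta\psi(X_{\bm i m},\cdot)$ is by definition $\partial_\theta\bar\psi(X_{\bm i},\cdot)$, so $\hat{\bar J}_N$ is exactly the cell-level empirical Jacobian up to the same scalar factor.

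\textbf{Step 1.} Show that $c_N:=m_0/\mathbb E_N M_{\bm i}\overset{p}{\to}1$. The array $\{M_{\bm i}\}$ is SE and dissociated by Assumption~\ref{assu:gmm_reg_hetero}(i) and has a $1+\delta$ moment by (ii). Lemma~\ref{lemma:4_3} (the LLN for SE arrays, already used in the proof of Theorem~\ref{thm:heterogeneous_cells}) therefore gives $\mathbb E_N M_{\bm i}-m_0\overset{p}{\to}0$, and $m_0\ge c>0$ yields the claim.

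\textbf{Step 2.} Apply the argument of Theorem~\ref{thm:var} verbatim to $\psi^{\rm H}$. Assumption~\ref{assu:gmm_reg} holds for $\psi^{\rm H}$ and $\widehat{\bar\eta}$ by Assumption~\ref{assu:gmm_reg_hetero}(iii). The moment bounds \eqref{eq:var_moment1}--\eqref{eq:var_moment3} and the additional equicontinuity conditions of Assumption~\ref{assu:gmm_reg}(viii) are assumed for $\psi^{\rm H}$, $\partial_\theta\psi^{\rm H}$ and $B^{\rm H}$. Consistency $\widehat{\bar\theta}\overset{p}{\to}\theta_0$ and $\hat{\bar\Upsilon}\overset{p}{\to}\bar\Upsilon$ are inherited from Theorem~\ref{thm:heterogeneous_cells}. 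The relevant limits are $\bar J_0 = -\partial_\theta\mathbb E[\psi^{\rm H}]|_{\theta_0}$ and $\bar\Psi_{0,n}$, which is exactly $\Psi_{0,n}$ of Theorem~\ref{thm:gmm} with $\psi$ replaced by $\psi^{\rm H}$. Hence the proof of Theorem~\ref{thm:var} yields $\|\mathbb E_N\partial_\theta\psi^{\rm H}(\cdot,\widehat{\bar\theta},\widehat{\bar\eta})+\bar J_0\|\overset{p}{\to}0$ and $\|\hat\Psi^{\rm H}_N(\widehat{\bar\theta})-\bar\Psi_{0,n}\|\overset{p}{\to}0$.

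That proof rests on three ingredients:
\begin{enumerate}[(a)]
\item a mean-value expansion in $\theta$ controlled by $B^{\rm H}$;
\item replacing $\widehat{\bar\eta}$ by $\eta_0$ via the localisation $\widehat{\bar\eta}\in\Gamma_n(\eta_0)$ w.p.a.1 together with (viii);
\item showing the oracle $\hat\Psi_{N,k}$ with $\eta_0,\theta_0$ converges to $\mu_k{\rm Var}(\mathbb E[\psi^{\rm H}|U_{\bm e_k}])$ using the Hoeffding decomposition \eqref{hoeffding}.
\end{enumerate}
None of these uses the one-unit-per-cell structure beyond the fact that the cell-level score array is SE and dissociated.

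\textbf{Step 3.} Combine the pieces. $\hat{\bar J}_N = c_N\cdot(\text{consistent for }\bar J_0)\overset{p}{\to}\bar J_0$ and $\hat{\bar\Psi}_N = c_N^2\hat\Psi^{\rm H}_N\overset{p}{\to}\bar\Psi_{0,n}$. The sequence $\bar\Psi_{0,n}$ is bounded by the moment conditions, so the $c_N^2-1$ perturbation is $o_{P_n}(1)$. The smallest singular value of $\bar J_0$ is bounded below by Assumption~\ref{assu:gmm_reg}(vii) for $\psi^{\rm H}$, and the eigenvalues of $\bar\Upsilon$ are bounded above and away from zero. Therefore $(\hat{\bar J}_N'\hat{\bar\Upsilon}\hat{\bar J}_N)^{-1}$ converges to $(\bar J_0'\bar\Upsilon\bar J_0)^{-1}$, and $\|\hat{\bar V}-\bar V_{0,n}\|\overset{p}{\to}0$ follows from continuity of the sandwich map on a set of uniformly well-conditioned arguments.

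\textbf{Main obstacle.} The main obstacle is Step 2(b)--(c): transferring the variance-consistency argument when cell sizes may be unbounded. The relevant quantities, such as the square of the aggregated score, are only controlled through the cell-level moment assumptions on $\psi^{\rm H}$. I would first verify that every expectation invoked in the proof of Theorem~\ref{thm:var} is of a cell-level function of the SE array $\{(M_{\bm i},\{X_{\bm im}\}_m)\}$. That proof can then be quoted as is, with $X_{\bm i}$ reinterpreted as the cell-level record.
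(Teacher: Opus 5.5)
Your proposal is correct and follows the paper's own proof. Both apply the cell-level consistency results from the proof of Theorem~\ref{thm:var} to $\psi^{\rm H}$, use Lemma~\ref{lemma:4_3} to obtain $\mathbb E_N M_{\bm i}\overset{p}{\to}m_0$, and conclude by continuous mapping. Your explicit identities $\hat{\bar\Psi}_N=c_N^2\hat\Psi^{\rm H}_N(\widehat{\bar\theta})$ and $\hat{\bar J}_N=-c_N\mathbb E_N\partial_\theta\psi^{\rm H}(X_{\bm i},\widehat{\bar\theta},\widehat{\bar\eta})$ spell out the rescaling step that the paper leaves implicit.

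One small correction to your ingredient (c): the oracle middle-matrix term in Theorem~\ref{thm:var} is handled by Lemma~\ref{lemma:second_moment_lln}, a truncation plus variance-bound argument, not by the Hoeffding decomposition. This does not affect your reduction.
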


\begin{proof}
{Fix an arbitrary sequence $\{P_n\}$ with $P_n\in\mathcal{P}_n$ satisfying the conditions in the statement of the theorem; all statements below are along this sequence, in accordance with the uniformity convention of Section~\ref{sec:dml_gmm_main}.} In the proof of Theorem \ref{thm:var}, the consistency results of $\widehat \Psi_N$ and $\widehat{J}_N$ are given with respect to the cell-level scores. By applying these results to the normalized cell-level score $\psi^{\rm H}$ and that $\mathbb E_NM_{\bm i}\overset{p}{\to}m_0$ under Lemma \ref{lemma:4_3} gives $\|\hat{\bar{\Psi}}_N- \bar{\Psi}_{0,n}\|\overset{p}{\to} 0$ and $  \hat{\bar{J}}_N(\widehat{\bar{\theta}})
    \overset{p}{\to} \bar{J}_0.$
Combining these facts with $\hat{\bar{\Upsilon}}\overset{p}{\to}\bar\Upsilon$, the full rank of $\bar{J}_0$, and the continuous mapping theorem yields $\left \|\hat {\bar{V}}-\bar{V}_{0,n}\right\|\overset{p}{\to} 0$.
\end{proof}

\subsection{Algorithm and implementation}
\label{app:implementation}

\subsubsection{Algorithm}

Consider a generic over-identified moment-restriction model with high-dimensional nuisance parameters and two-way clustering. Cell $(i,j)$ contains $M_{ij}\in\mathbb{N}_0$ unit observations. The full-sample DML-GMM algorithm with two-way cluster robust inference is given in Algorithm~\ref{alg:full_sample_dml_gmm}.

\begin{algorithm}[t]
\caption{Full-sample two-step DML--GMM with two-way clustered inference}
\label{alg:full_sample_dml_gmm}
\begin{algorithmic}[1]

\Require Observations $\{X_{ijm}:i=1,\ldots,N_1,\ j=1,\ldots,N_2,\
m=1,\ldots,M_{ij}\}$, where $M_{ij}\in\mathbb{N}_0$; a $q$-dimensional orthogonal score $\psi(X,\theta,\eta)$. Set $\bar M=\sum_{i=1}^{N_1}\sum_{j=1}^{N_2}M_{ij}$ and $ n=\min\{N_1,N_2\}$. Choose the variance estimator $c\in \{\rm PSD, CGM\}$, where PSD refers to \eqref{eq:var_est} and CGM  the one with the double-counting term removed. 

\State Choose suitable machine-learning methods for the components of
$\eta_0$. Tune the learners, and refit the selected learners on the full-sample to obtain $\widehat\eta$.

\State Define the cell score by
$\bar s_{ij}(\theta)=\sum_{m=1}^{M_{ij}}\psi(X_{ijm},\theta,\widehat\eta)$ for non-empty cell; $0$, otherwise.

\State Define the full-sample average moment $ \widehat {\bar{\psi}}(\theta) =\frac{1}{\bar M}\sum_{i=1}^{N_1}\sum_{j=1}^{N_2}\bar s_{ij}(\theta)$,
and, for any positive-definite $q\times q$ matrix $\Upsilon $, define
\[
    \widehat\theta(\Upsilon)
      =\arg\min_{\theta\in\Theta}
        \widehat {\bar{\psi}}(\theta)'\Upsilon \widehat {\bar{\psi}}(\theta).
\]

\State Compute the first-step estimate using the identity weight $\widehat\theta^{(1)}=\widehat\theta(I_q)$.

\State At $\widehat\theta^{(1)}$, form cluster sums
$ S_{i\cdot}^{(1)}
       =\sum_{j=1}^{N_2}\bar s_{ij}(\widehat\theta^{(1)})$ and $
    S_{\cdot j}^{(1)}
       =\sum_{i=1}^{N_1}\bar s_{ij}(\widehat\theta^{(1)})$. Construct the first-step middle matrix
\[
\widehat\Psi^{(1)}_{\rm PSD}
  =\frac{n}{\bar M^2}
    \left(
      \sum_{i=1}^{N_1}S_{i\cdot}^{(1)}S_{i\cdot}^{(1)\prime}
      +
      \sum_{j=1}^{N_2}S_{\cdot j}^{(1)}S_{\cdot j}^{(1)\prime}
    \right).
\]
For a CGM-type, set instead $\widehat\Psi^{(1)}_{\rm CGM}
  = \widehat\Psi^{(1)}_{\rm PSD}
    - \frac{n}{\bar M^2}\sum_{i,j} {\bar s_{ij}(\widehat\theta^{(1)})\bar s_{ij}(\widehat\theta^{(1)})'}.$

\State Construct the feasible second-step weight:
$ \widehat\Upsilon=
  \bigl(\widehat\Psi^{(1)}_{c}\bigr)^{-1}$, and compute $\widehat\theta^{(2)}
      =\widehat\theta(\widehat\Upsilon).$

\State  {Re-evaluate the cell scores and cluster sums at}
$\widehat\theta^{(2)}$ {,} and compute $\widehat\Psi^{(2)}_c$.

\State Estimate the $q\times  {d}$ Jacobian and its GMM transformation:
\[
    \widehat J =-\frac{1}{\bar M} \sum_{i=1}^{N_1}\sum_{j=1}^{N_2} \sum_{m=1}^{M_{ij}}\partial_\theta\psi(X_{ijm},\widehat\theta^{(2)},\widehat\eta),
    \qquad
    \widehat A
       =(\widehat J'\widehat\Upsilon\widehat J)^{-1}
          \widehat J'\widehat\Upsilon.
\]

\State Output $\widehat \theta^{(2)}$ and $
    \widehat{\operatorname{Var}}(\widehat\theta^{(2)})
      =\frac{\widehat A\widehat\Psi^{(2)}_c\widehat A'}{n}.
$

\end{algorithmic}
\end{algorithm}

\subsubsection{Implementation in statistical software}
Existing DML packages (e.g., \texttt{doubleML} in \texttt{R} and \texttt{Python}, \texttt{ddml} in \texttt{Stata}) provide important parts of the required
procedure, such as machine-learning wrappers, model-specific orthogonal
scores, cross-fitting or full-sample estimation, or clustered covariance estimators. To our knowledge, however, existing packages do not provide the particular combination needed here: nuisance learners are fitted on the full sample; the score may be an arbitrary user-supplied vector and hence may define a generic over-identified GMM model; and the variance estimator admits an arbitrary number of clustering dimensions. 

We therefore provide R and Python packages, both named \texttt{fullsampleDML}. They provide one integrated implementation of full-sample nuisance learning, user-defined moments, two-step GMM estimation, and arbitrary-$K$-way clustered inference. The PSD estimator is the default; the CGM-type estimator is also available as an option. The package supports heterogeneous cell sizes, missing cells, nonrectangular observed arrays, and nonnumeric cluster identifiers. The package, its worked synthetic example, tests, and installation instructions are available in an open repository on Zenodo.\footnote{Zenodo DOI: 10.5281/zenodo.21990288 for R package; 10.5281/zenodo.21990697 for Python package.}

The user supplies the data, clustering variables, nuisance-learning
specifications, score functions, and starting parameter value (for nonlinear GMM initialization). An analytic Jacobian is optional because the package can calculate a 
numerical Jacobian. 

The core interface is learner-agnostic. When tuning groups are not specified, the default is the supplied clustering dimension with the fewest clusters; strict multiway cluster-aware tuning can be requested. After tuning folds select hyperparameters, the selected learner is refitted on the full sample. It then computes the identity-weighted
first-step estimate, constructs the selected multiway clustered middle
matrix, computes the feasible second-step GMM estimate, and reports
cluster-robust inference. More implementation details and complementary examples can be found in the releases of the packages.

\subsection{Proofs of the main results}

\subsubsection{Proof of Theorem~\ref{thm:gmm}}\label{sec:Proof of Theorem gmm}
\quad\\
Fix an arbitrary sequence $\{P_n\}$ with $P_n \in \mathcal{P}_n$ satisfying the conditions in the statement of the theorem; in accordance with the uniformity convention of Section~\ref{sec:dml_gmm_main}, all statements below are along this sequence, so that the asserted conclusions hold uniformly over $\{\mathcal{P}_n\}$. Suppose that $\hat\theta$ is consistent for $\theta_0$ and that $\theta_0$ lies in the interior of $\Theta$. Then, with probability approaching one, $\hat\theta$ is an interior solution of \eqref{gmm_estimator}, and the first-order condition holds as follows:
\begin{align}
    0=\hat J_N(\hat\theta)' \hat\Upsilon \hat \psi_N(\hat\theta) \label{eq:foc} 
\end{align}
where $\hat J_N(\hat\theta) := -\partial_\theta \hat \psi_N(\theta)\big|_{\theta=\hat\theta}$ is an empirical Jacobian. 

\textbf{Step 1}. We consider the consistency of the Jacobian $\hat J_N(\hat\theta)$. We decompose it as $\Vert\hat J_N(\hat\theta) - J_0 \Vert \leq \Vert \hat J_N(\hat\theta) -  \hat J_N(\theta_0)\Vert + \Vert  \hat J_N(\theta_0) - J_0\Vert$. Consider $\Vert \hat J_N(\hat\theta) -  \hat J_N(\theta_0)\Vert$: 
\begin{align*}
    \Vert \hat J_N(\hat\theta) -  \hat J_N(\theta_0)\Vert =& \left\Vert \frac{1}{N} \sum_{\bm i \in [\bm N]} \partial_\theta\psi(X_{\bm i},\hat\theta,\widehat\eta) - \partial_\theta\psi(X_{\bm i},\theta_0,\widehat\eta) \right\Vert \\
    \leq &  \frac{1}{N}\sum_{\bm i \in [\bm N]} \left\Vert \partial_\theta\psi(X_{\bm i},\hat\theta,\widehat\eta) - \partial_\theta\psi(X_{\bm i},\theta_0,\widehat\eta) \right\Vert \\
    \leq &\frac{1}{N}\sum_{\bm i \in [\bm N]} B(X_{\bm i},\widehat\eta) \Vert \hat\theta-\theta_0\Vert^\alpha =O_{P_n}(1)o_{P_n}(1). 
\end{align*}
where the last line follows from Assumption~\ref{assu:gmm_reg}(iv) { and (viii)}, $\hat\theta \overset{p}{\to}\theta_0$, and Lemma~\ref{lemma:4_3}. 

For $\Vert  \hat J_N(\theta_0) - J_0\Vert$,  {note first that $J_0=-\mathbb{E}[\partial_\theta\psi(X,\theta_0,\eta_0)]$: by Assumption~\ref{assu:gmm_reg}(iv), for every $\theta\in\mathcal{N}(\theta_0)$ we have the almost-sure bound $\|\partial_\theta\psi(X,\theta,\eta_0)\|\le\|\partial_\theta\psi(X,\theta_0,\eta_0)\|+B(X,\eta_0)\|\theta-\theta_0\|^{\alpha}$, whose right-hand side is integrable by Assumptions~\ref{assu:gmm_reg}(iii) and (iv), so the derivative and the expectation in the definition of $J_0$ in Assumption~\ref{assu:gmm_reg}(vii) can be interchanged by the dominated convergence theorem. We then} again apply Lemma~\ref{lemma:4_3} under the continuity of the score{, Assumption~\ref{assu:gmm_reg}(viii),} and ${\mathbb{E} [\sup_{\eta\in\Gamma_n(\eta_0)}\|\partial_\theta\psi(X,\theta_0,\eta)\|^{1+\delta}]<\infty}$ ensured by Assumptions~\ref{assu:gmm_reg}(ii) and (iii). So, we obtain $\Vert\hat J_N(\hat\theta) - J_0 \Vert=o_{P_n}(1)$.

\noindent\textbf{Step 2}. Under the continuous differentiability of the score with respect to $\theta$, the fundamental theorem of calculus gives the expansion of the summand of $\hat \psi_N(\hat\theta)$:
\begin{equation}  
\psi(X,\widehat\theta,\widehat\eta)-\psi(X,\theta_0,\widehat\eta)
=\left\{\int_0^1\partial_\theta\psi\bigl(X,\theta_0+t(\widehat\theta-\theta_0),\widehat\eta\bigr)\,dt\right\}(\widehat\theta-\theta_0).
\label{eq:mv_expansion}
\end{equation}
Define the integrated empirical Jacobian
\[
\bar J_N:=-\int_0^1\mathbb E_N\!\left[\partial_\theta\psi\bigl(X,\theta_0+t(\widehat\theta-\theta_0),\widehat\eta\bigr)\right]dt.
\]
Then $\widehat\psi_N(\widehat\theta)=\widehat\psi_N(\theta_0)-\bar J_N(\widehat\theta-\theta_0)$, and \eqref{eq:foc} yields
\begin{align*}
\sqrt n(\widehat\theta-\theta_0)
=&\left\{\widehat J_N(\widehat\theta)'\widehat\Upsilon\bar J_N\right\}^{-1}
\widehat J_N(\widehat\theta)'\widehat\Upsilon
\Bigl\{\sqrt n\,\mathbb E_N[\psi(X,\theta_0,\eta_0)] +\Gn(f(\widehat\eta))+\sqrt n\,\mathbb E_{P_n}[f(\widehat\eta)]\Bigr\},
\end{align*}
where, recall that, $\Gn\left(f(\eta)\right): =\sqrt{n}\left( \mathbb{E}_N [  f(\eta)] - \mathbb{E}[  f(\eta)]\right) $ and $f(\eta) =  \psi(X,\theta_0,{\eta}) - \psi(X,\theta_0,{\eta}_0)$. Moreover, by Assumption~\ref{assu:gmm_reg}(iv),
\begin{equation*}
\|\bar J_N-\widehat J_N(\theta_0)\|
\leq\int_0^1\mathbb E_N[B(X,\widehat\eta)]t^\alpha
\|\widehat\theta-\theta_0\|^\alpha\,dt =\frac{\mathbb E_N[B(X,\widehat\eta)]}{1+\alpha}
\|\widehat\theta-\theta_0\|^\alpha=o_{P_n}(1).
\end{equation*}
Hence $\bar J_N\overset p\to J_0$. Together with $\widehat J_N(\widehat\theta)\overset p\to J_0$, $\widehat\Upsilon\overset p\to\Upsilon$, and Assumption~\ref{assu:gmm_reg}(vii), it is then left to show a CLT for $\mathbb{E}_N [\psi(X,\theta_0,{\eta}_0)]$ and to bound $\Gn(f(\widehat\eta))+ \sqrt{n} \mathbb{E}[  f(\eta)]_{\eta = \widehat\eta}$.

\noindent\textbf{Step 2-1}. We first bound $ \mathbb{E}[  f(\eta)]_{\eta = \widehat\eta}$ using the orthogonality condition. Recall the nuisance moment map $m_{P_n}(\eta)=\mathbb E_{P_n}[\psi(X,\theta_0,\eta)]$ of \eqref{eq:m_map} and {write} $h{:}=\widehat\eta-\eta_0$. Convexity of $\Gamma$ ensures that $\eta_0+th\in\Gamma$ on the event $\widehat\eta\in\Gamma_n(\eta_0)$. Due to Assumption~\ref{assu:gmm_reg}(ii), the fundamental theorem of calculus, and the Neyman orthogonality condition \eqref{eq:orthogonal},
\begin{align*}
\|\mathbb E_{P_n}[f(\widehat\eta)]\|
&=\left\|\int_0^1 D_\eta m_{P_n}(\eta_0+th)[h]dt\right\|=\left\|\int_0^1\{D_\eta m_{P_n}(\eta_0+th)-D_\eta m_{P_n}(\eta_0)\}[h]dt\right\|\\
&\leq\int_0^1\|D_\eta m_{P_n}(\eta_0+th)-D_\eta m_{P_n}(\eta_0)\|_{\rm op}\,\|h\|_{P_n,2}\,dt\leq\frac{C_4}{2}\|\widehat\eta-\eta_0\|_{P_n,2}^2.
\end{align*}
Assumption~\ref{assu:gmm_reg}(v) therefore implies
\[
\sqrt n\,\|\mathbb E_{P_n}[f(\widehat\eta)]\|
\leq\frac{C_4\sqrt n}{2}\|\widehat\eta-\eta_0\|_{P_n,2}^2=o_{P_n}(1).
\]

\noindent\textbf{Step 2-2}. Next, we will bound $\Gn(f(\widehat\eta))$ through the maximum inequality for multiway clustering data. Recall that following Assumption~\ref{assu:gmm_reg}, there is a localisation set $\Gamma_n(\eta_0)= \{\eta\in\mathcal{H}_n: \Vert \eta-\eta_0\Vert_{P_n,2} \le C_1n^{-1/4}\} $. Then, by Assumption~\ref{assu:gmm_reg}(v), we have $\widehat\eta\in \Gamma_n(\eta_0)$ with probability converging to one. Recall that $
\mathcal F_n=\{f_j(\eta)-\mathbb E_{P_n}[f_j(\eta)]:\eta\in\Gamma_n(\eta_0),\ j\in[q]\}$ and ${H}_{\bm{N}}^{\bm{e}}(f)
		=\bigl|I_{\bm{N},\bm{e}}\bigr|^{-1} \sum_{\bm{i}\in I_{\bm{N},\bm{e}}}
		\pi_{\bm{e}}f\Bigl(\{U_{\bm{i}\odot \bm{e}'}\}_{\bm{e}'\le \bm{e}}\Bigr)$ where $I_{\bm{N},\bm{e}} = \{\bm{i}\odot \bm{e} : \bm{i}\in [\bm{N}]\}$ and $\bigl|I_{\bm{N},\bm{e}}\bigr| = \prod_{k'\in\supp(\bm{e})} N_{k'}.$. By Hoeffding decomposition given in Section~\ref{sec:maximal_inequalities_SE} and the triangle inequality, we have with probability approaching one
\begin{align*}
\|\Gn(f(\widehat\eta))\|
&\leq \sqrt q\max_{j\in[q]}\sup_{\eta\in\Gamma_n(\eta_0)}
\left|\sqrt n\,\mathbb E_N\{f_j(\eta)-\mathbb E_{P_n}[f_j(\eta)]\}\right|\leq\sqrt{qn}\sum_{k=1}^K\sum_{\bm e\in\mathcal E_k}
\sup_{f\in \calF_n}\|H_{\bm N}^{\bm e}(f)\|.
\end{align*}

For each $\bm{e}\in\mathcal{E}_k$ and each $k=1,...,K$, {Theorem~\ref{theorem:local_max_ineq}, which is nonasymptotic and applied here with $P=P_n$ for each $n$, yields, with $\sigma_{\bm e}$, $\delta_{\bm e}$, and $M_{\bm e}$ as defined there (all computed under $P_n$; the choice of $\sigma_{\bm e}$ is made in Step 3 below),
$$\sqrt{n}\,\mathbb{E}\left[\sup_{f\in \mathcal{F}_{n}}\left\Vert {H}_{\bm{N}}^{\bm{e}}(f)\right\Vert\right]\lesssim \sqrt{\frac{n}{|I_{\bm{N,e}}|}} \left(J_{\bm{e}}(\delta_{\bm{e}}) \Vert P_{\bm{e}} F_n \Vert_{P_n,2} + \frac{J^2_{\bm{e}}(\delta_{\bm{e}})\Vert M_{\bm{e}}\Vert_{P_n,2}}{\sqrt{n}\delta^2_{\bm{e}}}\right)=:\rho_{n,\bm{e}}.$$ 
Combining this bound with Steps 1, 2, and 2-1, and applying Markov's inequality to each supremum, we obtain the following:}
\begin{equation}
        \sqrt{n}(\hat{\theta}-\theta_0) = \left( J_0' \Upsilon  J_0\right)^{-1} J_0' \Upsilon \sqrt{n} \mathbb{E}_N [\psi(X,\theta_0,{\eta}_0)] + \sum_{k=1}^K\sum_{\bm{e}\in \mathcal{E}_k} O_{P_n}(\rho_{n,\bm{e}}) +o_{P_n}(1)  \label{eq:asym_linear}
    \end{equation}
{where the $O_{P_n}(\rho_{n,\bm{e}})$ terms arise from the localised empirical process $\Gn(f(\widehat\eta))$, and the $o_{P_n}(1)$ term collects the orthogonality bias of Step 2-1 and the error from replacing the estimated sandwich matrices $\{\widehat J_N(\widehat\theta)'\widehat\Upsilon\bar J_N\}^{-1}\widehat J_N(\widehat\theta)'\widehat\Upsilon$ by their limits, which is $o_{P_n}(1)O_{P_n}(1)$ because $\sqrt n\,\mathbb E_N[\psi(X,\theta_0,\eta_0)]=O_{P_n}(1)$ by Assumption~\ref{assu:gmm_reg}(vi) and the variance computation below.}

\noindent\textbf{Step 3}. For the first statement, under Assumption~\ref{assu:gmm_reg}(ii), we have
\begin{align*}
   \sup_{\eta\in \Gamma_n}   \mathbb{E}\|f(\eta)\|^2 = \sup_{\eta\in \Gamma_n}   \mathbb{E} \|\psi(X,\theta_0,{\eta}) - \psi(X,\theta_0,{\eta}_0)\|^2 \leq  \sup_{\eta\in \Gamma_n}C_2\Vert \eta - \eta_0\Vert^2_{P_n,2}= O(n^{-1/2})
\end{align*}
It also follows that $\sup_{\eta\in \Gamma_n} \mathbb{E}\|f(\eta) -  \mathbb{E}[ f(\eta)]\|^2 =O(n^{-1/2})$. We observe that ${H}_{\bm{N}}^{\bm{e}}(f)$ is a linear combination of {$\{P_{\bm{e}'}f:\bm{e}'\le\bm{e}\}$}. By Jensen's inequality and law of iterated expectation, we have
\begin{align*}
    \sup_{f\in\mathcal{F}_{n}}  \mathbb{E}{|P_{\bm{e}}f|^2} \lesssim \sup_{\eta\in \Gamma_n} \mathbb{E}\|f(\eta) -  \mathbb{E}[ f(\eta)]\|^2 =O(n^{-1/2}).
\end{align*}

Therefore, we can take $\sigma_{\bm e}=\sigma_{{\bm e},n}$ as a sequence of positive numbers with $\sigma_{\bm e} = O(n^{-1/4})${; more precisely, set $\sigma_{\bm e}=\sqrt{C_2}C_1 n^{-1/4}$, which satisfies $\sigma_{\bm e}\le 1\le \|P_{\bm e}F_n\|_{P_n,2}$ for all $n$ sufficiently large since $F_n\ge1$}. If $\mathcal{F}_{n}$ is a VC-type class with characteristics $A_n\ge {(e^{2(K-1)}/16)\vee e}$ and $v_n\ge1$, then{, since $\|P_{\bm e}F_n\|_{P_n,2}/\sigma_{\bm e}\lesssim n^{1/4}\|F_n\|_{P_n,2}\le A_n\vee\overline{N}$ by the envelope-growth condition of the theorem, so that the second display of Corollary~\ref{cor:local_maximal_ineq_VC} applies,} applying Corollary~{\ref{cor:local_maximal_ineq_VC} (again with $P=P_n$)} gives, for all $\bm{e}\in \calE_k$ and $k=1,...,K$ 
\begin{align*}
   & \sqrt{n} \mathbb E_{P_n}\left[\sup_{f\in\mathcal{F}_{n}}\|{H}_{\bm{N}}^{\bm{e}}(f)\|\right]= \sqrt{\frac{n}{|I_{\bm{N},\bm{e}}|}} |I_{\bm{N},\bm{e}}|^{1/2}  \mathbb{E}\left[\sup_{f\in\mathcal{F}_{n}}\|{H}_{\bm{N}}^{\bm{e}}(f)\|\right] \\
   \lesssim& \frac{n^{1/4}}{|I_{\bm{N},\bm{e}}|^{1/2}}  \{v_n\log(A_n\vee \overline{N} )\}^{k/2} + \frac{\|M_{\bm{e}}\|_{P_n,2}}{|I_{\bm{N},\bm{e}}|^{1/2}} \{v_n\log(A_n\vee \overline{N} )\}^{k} .
\end{align*}
where $M_\e=\max_{t\in [n]}(P_\e F_n)\left(\{U_{(t,...,t)\odot \e'}\}_{\e'\le \e}\right) $, and, for some $r>2$, {since $P_\e F_n\ge 0$, bounding the $r$-th moment of the maximum by the sum of the $r$-th moments and applying the conditional Jensen inequality,
\begin{align*}
    \|M_{\bm{e}}\|_{P_n,2} \leq \|M_{\bm{e}}\|_{P_n,r} \leq   \left(   \sum_{t\in [n]}\mathbb{E} \left|  (P_\e F_n)\left(\{U_{(t,...,t)\odot \e'}\}_{\e'\le \e}\right) \right|^r \right)^{1/r}\leq \left(n\,\mathbb{E}[F_n^r]\right)^{1/r} = n^{1/r} \left\|  F_n  \right\|_{P_n,r}.
\end{align*}}
Then, by Markov's inequality, {for all $\bm{e}\in\calE_k$ and $k=1,\ldots,K$,
\begin{align}
    \sqrt{n}\sup_{f\in \mathcal{F}_{n}}\left\Vert {H}_{\bm{N}}^{\bm{e}}(f)\right\Vert = O_{P_n}\left( \sqrt{ \frac{n^{1/2}\{v_n\log(A_n\vee \overline{N} )\}^{k}}{|I_{\bm{N},\bm{e}}|}} + \frac{n^{1/r}\left\|  F_n  \right\|_{P_n,r} \{v_n\log(A_n\vee \overline{N} )\}^{k}}{|I_{\bm{N},\bm{e}}|^{1/2}}\right) \nonumber \\
= O_{P_n}\left(\left(\frac{v_n\log(A_n\vee \overline{N} )}{ {n^{1-1/(2k)}}}\right)^{k/2} + \left(\frac{\left\|  F_n  \right\|_{P_n,r} \{v_n\log(A_n\vee \overline{N} )\}}{n^{1/2-1/r}}\right)^{k}\right), \label{eq:vc_rate}
\end{align}
where the second line uses $|I_{\bm{N},\bm{e}}|\ge n^k$, $\|F_n\|_{P_n,r}\ge1$ (as $F_n\ge1$), $n^{1/r}\le n^{k/r}$, and $k\ge1$,}
which proves the first statement. 

For the second statement, we apply Lemma~\ref{lemma:hajek} to obtain an independent linear representation  {(where $I_1=\bigcup_{\bm e\in\calE_1}I_{\bm N,\bm e}$ and $k(\bm i)$ denotes the coordinate in which $\bm i\in I_1$ is non-zero, as defined before Lemma~\ref{lemma:hajek})},
\begin{align}
    &\sqrt{n}\mathbb{E}_N[\psi(X_{\bm i},\theta_0,\eta_0)] = \sum_{\bm i \in I_1} \frac{\sqrt{n}}{N_{k(\bm i)}} \mathbb{E}\left[\psi(X_{\bm i},\theta_0,\eta_0) | U_{\bm i}\right] + O_{P_n}(n^{-1/2})  \label{eq:hajek}  \\
    =& \frac{\sqrt{n}}{N_1}\sum_{i_1=1}^{N_1}  \mathbb{E}[\psi(X_{\bm i},\theta_0,\eta_0)| {U_{(i_1,0,...,0)}}] + ...+  \frac{\sqrt{n}}{N_K}\sum_{i_K=1}^{N_K}  \mathbb{E}[\psi(X_{\bm i},\theta_0,\eta_0)| {U_{(0,0,...,i_K)}}] +O_{P_n}(n^{-1/2}) \nonumber
\end{align}
and the variance of  $\sqrt{n}\mathbb{E}_N[\psi(X_\i,\theta_0,\eta_0)]$,
\begin{align*}
     &  {\rm Var} \left( \sqrt{n}\mathbb{E}_N[\psi(X_{\bm i}, \theta_0,\eta_0)] \right) =\sum_{\bm e\in\mathcal{E}_1}  {\frac{n}{N_{k(\bm e)}}} {\rm Cov}(\psi(X_{\bm 1},\theta_0,\eta_0),\psi(X_{\bm 2-\bm e},\theta_0,\eta_0)) + O(n^{-1}) \nonumber \\
     =& {(\mu_1+o(1))} {\rm Var}( \mathbb{E}[\psi(X,\theta_0,\eta_0)|U_{1,0,...,0}]) + ... +{(\mu_K+o(1))} {\rm Var}( \mathbb{E}[\psi(X,\theta_0,\eta_0)|U_{0,0,...,1}]) \\
     &+ O(n^{-1}) = \Psi_{0,n}+o(1),
\end{align*}
where $k(\bm e)$, denote the coordinate in which $\bm{e}\in \mathcal{E}_1$ is non-zero{, and the second equality uses $n/N_k=\mu_k+o(1)$ (which accommodates $\mu_k=0$) together with the uniform boundedness of the conditional variances implied by Assumption~\ref{assu:gmm_reg}(vi) and Jensen's inequality}.

Note that each term $\mathbb{E}\left[\psi(X_{\bm i},\theta_0,\eta_0) | U_{\bm i}\right]$ is independently and identically distributed { within each of the $K$ sums in \eqref{eq:hajek}, and the $K$ sums are mutually independent, as they are functions of disjoint collections of the i.i.d.\ AHK variables}. Given \eqref{eq:asym_linear}, \eqref{eq:vc_rate}, the rate condition of \eqref{eq:rate_condition}, the finite $2+
\delta$ moment of $\psi(X_{\bm i},\theta_0,\eta_0)$ in Assumption~\ref{assu:gmm_reg}(vi) and that $\psi(X_{\bm i},\theta_0,\eta_0)$ is mean-zero, {we proceed as follows. Fix any $t\in\mathbb R^d$ with $\|t\|=1$ and set $s_n':=t' V_{0,n}^{-1/2}\left(J_0'\Upsilon J_0\right)^{-1}J_0'\Upsilon\in\mathbb R^{1\times q}$, which is well defined for $n$ large by condition (b) and Assumption~\ref{assu:gmm_reg}(vii), with $\|s_n\|$ bounded above and, by the variance display above, 
$${\rm Var}(s_n'\sqrt n\,\mathbb E_N[\psi(X,\theta_0,\eta_0)])=s_n'\Psi_{0,n}s_n+o(1)=t't+o(1)=1+o(1).$$ Applying Lyapunov's CLT to the scalar triangular array $s_n'\sqrt n\,\mathbb E_N[\psi(X,\theta_0,\eta_0)]$, decomposed via \eqref{eq:hajek} into the $K$ mutually independent sums of i.i.d.\ summands with uniformly bounded $2+\delta$ moments, yields $s_n'\sqrt n\,\mathbb E_N[\psi(X,\theta_0,\eta_0)]\overset{d}{\to}N(0,1)$. Since $t$ was arbitrary, the Cram\'er--Wold device gives
\begin{align*}
  V_{0,n}^{-1/2}\left(J_0'\Upsilon J_0\right)^{-1}J_0'\Upsilon\,\sqrt{n}\,\mathbb{E}_N[\psi(X_{\bm i},\theta_0,\eta_0)] \overset{d}{\to} N\left(0,I_d\right),
\end{align*}
and the second statement follows by combining this with \eqref{eq:asym_linear} and Slutsky's theorem.}
\qed

\vskip 0.15in

\subsubsection{Proof of Theorem~\ref{eg:vctype}}\label{sec:Proof of Theorem vctype}
\quad\\
   \noindent \textbf {Case (1).} For any index set $I\subset\{1,...,p\}$ such that $|I| =s $, we can define a subclass of $\mathcal{G}_n$ as $\mathcal{G}_I = \{x\mapsto  {\Lambda}(x^\top_I\beta_I):\beta_I\in \mathbb{R}^{|I|},\Vert\beta_I \Vert_2{\le B_n} \}$. { {Since $|x_I^\top\beta_I|\le B_n\max_{|I|=s}\|x_I\|_2<\infty$ for every $x$ and $\Lambda$ is monotone, the class $\mathcal{G}_n$ admits the finite envelope $G(x):=\sup_{|u|\le B_n\max_{|I|=s}\|x_I\|_2}|\Lambda(u)|<\infty$}; note the covering bound below, being a VC-subgraph bound relative to the envelope, does not depend on $B_n$.}  {The functions $x\mapsto x_I^\top\beta_I$ form a subset of an $s$-dimensional vector space of functions and hence a VC-subgraph class of index at most $s+2$ (Lemma 2.6.15 of \citealt{vdVW1996}), and composition with the monotone link $\Lambda$ preserves the VC-subgraph property with the same index bound (Lemma 2.6.18(viii) of \citealt{vdVW1996}). Therefore, by Theorem 2.6.7 of \citealt{vdVW1996},} for any finite discrete measure $Q$ and for some  {universal} constant $C<\infty$,
    \begin{align*}
        N(\mathcal{G}_I,\Vert \cdot \Vert_{Q,2},\varepsilon\Vert G\Vert_{Q,2}) \leq \left(\frac{C}{\varepsilon}\right)^{ {2(s+1)}}.
    \end{align*}
    We note that $\mathcal{G}_n =  \bigcup_{|I|=s}\mathcal{G}_I$. Therefore, we can bound the covering number as follows:
    \begin{align*}
        N(\mathcal{G}_n,\Vert \cdot \Vert_{Q,2},\varepsilon\Vert G\Vert_{Q,2}) \leq&  \binom{p}{s} \left(\frac{C}{\varepsilon}\right)^{ {2(s+1)}}
        \leq \left(\frac{ep}{s}\right)^s\left(\frac{C}{\varepsilon}\right)^{ {2(s+1)}}
        \le \left(\frac{C \ ep}{s\varepsilon}\right)^{ {2(s+1)}},
    \end{align*}
     {using $s\le 2(s+1)$ and $ep/s\ge1$ in the last step.}
    Therefore, we can choose $v_n=  {2(s+1)}$ and $A_n = \frac{C \ ep}{s}\vee {((e^{2(K-1)}/16)\vee e)}$. Given this choice and that ${s\log(p/s)} \vee s\log(\overline{N})= o(n^{1/4})$, we have  $ \frac{v_n \log(A_n\vee \overline{N})}{ {n^{1-1/(2k)}}}  = o(1)$ for all $k\geq 1$ and $ \frac{v_n \log(A_n\vee \overline{N})}{n^{1/2-1/r}}  = o(1)$ for {$r> 4$} {, which proves the statement of the case (1)}.
\vskip 0.15in

   \noindent \textbf {Case (2).} Fix any $g\in \mathcal{G}_n$ and {$u\in\mathbb{R}$.} 
   We can build a decision tree by setting the splitting rule $t(x) = 1\{g(x)>{u}\}$, and the corresponding class can be written as
    \begin{align*}
       \mathcal{H}_{{n}} = \{(x,{u})\mapsto 1\{g(x)>{u}\}: g\in \mathcal{G}_n, \ {u}\in\mathbb{R}\}
    \end{align*} 
    We note that the VC dimension of the subgraph of $\mathcal{G}_n$ is equal to ${\rm VC}(\mathcal{H}_{{n}})$, and the decision tree associated with $\mathcal{H}_{{n}}$ has $2L$ partitions. 
        
    Due to \cite{leboeuf2022}, the class of classification functions induced by a binary decision tree with partition $\{R_l\}_{l=1}^{2L}$ defined in the statement has VC dimension of order $O(L\log(2Lp))$ where $p$ is the dimension of the real-valued features. Therefore, $\mathcal{G}_n$ is a VC-subgraph class with the index $V =CL\log(2Lp)$ for some constant $C\in (0,\infty)$. {Since $|\mu_l|\le M_n$ for all $l$, we can take the envelope of $\mathcal{G}_n$ as $G\equiv M_n$.} By Theorem 2.6.7 in \cite{vdVW1996}, 
    \begin{align*}
        & N(\mathcal{G}_n,\Vert \cdot \Vert_{Q,2},\varepsilon\Vert G\Vert_{Q,2}) \leq {C_0} V(16e)^V (1/\varepsilon)^{2(V-1)}\\
         \lesssim & C{C_0} L\log(2Lp) (16e/\varepsilon)^{2CL\log(2Lp)} \\
         =& {C_0}  \left([CL\log(2Lp)]^{1/2CL\log(2Lp)}16e/\varepsilon\right)^{2CL\log(2Lp)} 
    \end{align*}
    for a universal constant {$C_0$
    }. Therefore, we can take $v_n = 2CL\log(2Lp)$ and $A_n$ as some large enough constant because $[CL\log(2Lp)]^{1/2CL\log(2Lp)}$ is bounded for fixed $L$ and converges to some constant as $L$ diverges. The rate condition follows immediately.
\vskip 0.15in

   \noindent \textbf {Case (3).} Let $\mathcal{G}_n^{+}$ denote the class of functions induced by the neural network defined in the statement but with $L-1$ hidden layers and $W$ parameters. Let ${\rm sgn}(\mathcal{G}_{{n}}) : = \{{\rm sgn}(g): g\in \mathcal{G}_{{n}}\}$ and ${\rm sub}(\mathcal{G}_{{n}}) = \{(x,y)\in \mathcal{X}\times\mathbb{R}: g(x)>y, g\in \mathcal{G}_n\}$. By Theorem 14.1 of \cite{anthony2009neural}, $ {\rm VC}({\rm sub}(\mathcal{G}_{{n}}))\leq{\rm VC}({\rm sgn}(\mathcal{G}_{{n}}^{+}){)}$. By Theorem 7 of \cite{Bartlett2019}, ${\rm VC}({\rm sgn}(\mathcal{G}_{{n}}^{+})) = O(LW\log(pU))$. Then, by Theorem 2.6.7 of \cite{vdVW1996} and a similar calculation of Case (2), we can choose $A_n= C\vee{((e^{2(K-1)}/16)\vee e)}$ for some large enough constant $C$ and $v_n = 2{C}LW\log(pU)$. The rate condition then follows.  \qed

\vskip 0.15in

\subsubsection{Proof of Theorem~\ref{thm:var}}\label{sec:Proof of Theorem var}

Fix an arbitrary sequence $\{P_n\}$ with $P_n \in \mathcal{P}_n$ satisfying the conditions in the statement of the theorem; in accordance with the uniformity convention of Section~\ref{sec:dml_gmm_main}, all statements below are along this sequence, so that the asserted conclusions hold uniformly over $\{\mathcal{P}_n\}$. We first observe that, given the consistency of $\hat\Upsilon$ for some positive definite $\Upsilon$, the consistency of $\hat{V}$ is delivered by (1) consistency of $\hat{J}_N(\hat\theta)$, (2) the invertibility of $J_0$, and (3) the consistency of $\hat{\Psi}_N(\hat\theta)$. Since (1) is established in the proof of Theorem~\ref{thm:gmm} and (2) is given in the assumption, it is left to show (3). Since $K$ is finite, it is sufficient to show for each $k=1,...,K$ that 
$$
\left\|\hat\Psi_{N,k}(\hat\theta) -\mu_k{\rm Cov}(\psi(X_{\bm 1},\theta_0,\eta_0),\psi(X_{\bm 2-\bm{e}_k},\theta_0,\eta_0))  \right\| {= o_{P_n}(1),}
$$
where, recall that, $\bm e_k \in \mathcal{E}_1$ is a $K$-dimensional vector with all zero elements except for the $k$-th entry. 
    
Consider the decomposition as follows:
    \begin{align*}
        \left\Vert \hat{\Psi}_{N,k}(\hat\theta) - \mu_k{\rm Cov}(\psi(X_{\bm 1},\theta_0,\eta_0),\psi(X_{\bm 2-\bm{e}_k},\theta_0,\eta_0))\right\Vert \leq \Vert\mathcal{I}_1\Vert + \Vert\mathcal{I}_2\Vert,
    \end{align*}
    where
    \begin{align*}
        \mathcal{I}_1 = & \frac{n}{N^2} \sum_{\bm{i,j\in [N]}} \psi(X_{\bm i},\hat\theta,\widehat\eta)  \psi(X_{\bm j},\hat\theta,\widehat\eta)' \1_{k}\{\bm i, \bm j\} -  \frac{n}{N^2}\sum_{\bm{i,j\in [N]}} \psi(X_{\bm i},\theta_0,\eta_0)  \psi(X_{\bm j},\theta_0,\eta_0)' \1_{k}\{\bm i, \bm j\} ,\\
         \mathcal{I}_2 = & \frac{n}{N^2}  \sum_{\bm{i,j\in [N]}} \psi(X_{\bm i},\theta_0,\eta_0)  \psi(X_{\bm j},\theta_0,\eta_0)' \1_{k}\{\bm i, \bm j\}  - \mu_k \mathbb{E}(\psi(X_{\bm 1},\theta_0,\eta_0)\psi(X_{\bm 2-\bm{e}_k},\theta_0,\eta_0)')
    \end{align*}

Consider $\Vert\mathcal{I}_2\Vert$. Note that $n/N_k = {\mu_k+o(1)}$. Under Conditions (SE) and (D),  {Lemma~\ref{lemma:second_moment_lln} in the Auxiliary Lemmas, applied with $h_n=\psi(\cdot,\theta_0,\eta_0)$, whose $2+\delta$ moments are bounded uniformly over $P_n\in\mathcal P_n$ by Assumption~\ref{assu:gmm_reg}(vi),} gives
   \begin{align*}
       \frac{N_k}{N^2}  \sum_{\bm{i,j\in [N]}} \psi(X_{\bm i},\theta_0,\eta_0)  \psi(X_{\bm j},\theta_0,\eta_0)' \1_{k}\{\bm i, \bm j\} = \mathbb{E}(\psi(X_{\bm 1},\theta_0,\eta_0)\psi(X_{\bm 2-\bm{e}_k},\theta_0,\eta_0)') + o_{P_n}(1).
   \end{align*}
   { {This is a triangular-array analogue of Proposition 4.1 in \cite{DDG2018}, established there for a fixed data generating process in the parametric case.} Combining the two displays with the boundedness of $\mathbb{E}(\psi(X_{\bm 1},\theta_0,\eta_0)\psi(X_{\bm 2-\bm{e}_k},\theta_0,\eta_0)')$, which follows from Assumption~\ref{assu:gmm_reg}(vi) and the Cauchy--Schwarz inequality, it} follows that $\Vert\mathcal{I}_2\Vert = o_{P_n}(1)$.

Consider $\Vert\mathcal{I}_1\Vert $. By product decomposition, triangle inequality, and Cauchy-Schwarz inequality, 
   \begin{align*}
      \Vert\mathcal{I}_1\Vert  & \lesssim R_n \left\{ \left(\mathbb{E}_N\left\Vert\psi(X_{\bm i},\theta_0,\eta_0)\right\Vert^2\right)^{1/2} + R_n \right\} \\
       R_n &= \left(\mathbb{E}_N\left\Vert \psi(X_{\bm i},\hat\theta,\widehat\eta) -\psi(X_{\bm i},\theta_0,\eta_0) \right\Vert^2\right)^{1/2}=\Vert \psi(X_{\bm i},\hat\theta,\widehat\eta) -\psi(X_{\bm i},\theta_0,\eta_0)\Vert_{\mathbb{P}_N,2},
   \end{align*}
   where we denote the $\ell^2$ empirical norm as $\Vert \cdot\Vert_{\mathbb{P}_N,2}$. We can apply Lemma~\ref{lemma:4_3} under the finite $2+\delta$ moment of $\psi(X_{\bm i},\theta_0,\eta_0)$ to obtain $\mathbb{E}_N\left\Vert\psi(X_{\bm i},\theta_0,\eta_0)\right\Vert^2 = O_{P_n}(1)$. Then it suffices to show $R_n = o_{P_n}(1)$. 
   
For each observation define $ D_{N,\bm i}:=\int_0^1\partial_\theta\psi\bigl(X_{\bm i},\theta_0+t(\widehat\theta-\theta_0),\widehat\eta\bigr)\,dt$. By the same expansion as given in \eqref{eq:mv_expansion} and Minkowski's inequality,
\begin{align*}
R_n\leq{}&\|D_N\|_{\mathbb P_N,2}\|\widehat\theta-\theta_0\|+\|\psi(X,\theta_0,\widehat\eta)-\psi(X,\theta_0,\eta_0)\|_{\mathbb P_N,2}.
\end{align*}
The Hölder condition gives
\begin{align*}
\|D_N\|_{\mathbb P_N,2}
\leq{}&\|\partial_\theta\psi(X,\theta_0,\widehat\eta)\|_{\mathbb P_N,2}+\frac{\|\widehat\theta-\theta_0\|^\alpha}{1+\alpha}
\|B(X,\widehat\eta)\|_{\mathbb P_N,2}.
\end{align*}
By \eqref{eq:var_moment1}, Assumption~\ref{assu:gmm_reg}(vi), the convergence of $\widehat\eta$, and {the equicontinuity condition imposed in the statement of the theorem (Assumption~\ref{assu:gmm_reg}(viii) applied to $\|\psi(x,\theta_0,\eta)-\psi(x,\theta_0,\eta_0)\|^2$)}, Lemma~\ref{lemma:4_3} gives
\[
\|\psi(X,\theta_0,\widehat\eta)-\psi(X,\theta_0,\eta_0)\|_{\mathbb P_N,2}^2
\overset p\longrightarrow \mathbb E_{P_n}\|\psi(X,\theta_0,\eta_0)-\psi(X,\theta_0,\eta_0)\|^2=0.
\]
   Similarly, applying Lemma~\ref{lemma:4_3} under the moment condition \eqref{eq:var_moment2}, the convergence of $\hat{\eta}$, and {the equicontinuity condition for $\|\partial_\theta\psi(x,\theta_0,\eta)\|^2$} gives
   \begin{align*}
      \left\Vert{\partial_\theta \psi(X,\theta_0,\widehat\eta)}\right\Vert_{\mathbb{P}_N,2} ^2\overset{p}{\to} \mathbb{E} \left\Vert\partial_\theta \psi(X,\theta_0,\eta_0)\right\Vert^2 = O(1).
    \end{align*}

Finally, on the event $\widehat\eta\in\Gamma_n(\eta_0)$, whose probability approaches one, \eqref{eq:var_moment3}, Assumption~\ref{assu:gmm_reg}(iv){, the equicontinuity condition for $B(x,\eta)^2$}, and Lemma~\ref{lemma:4_3} yield
\[
\|B(X,\widehat\eta)\|_{\mathbb P_N,2}^2
\overset p\longrightarrow\mathbb E_{P_n}[B^2(X,\eta_0)]=O(1).
\]
Because $\widehat\theta\overset p\to\theta_0$, the preceding bounds show that $R_n=o_{P_n}(1)$, as desired. \qed

\subsubsection{Proof of Theorem~\ref{theorem:global_maximal_ineq}}\label{sec:Proof of Theorem global_maximal_ineq}

	By symmetrisation inequality for SE processes (Lemma B.1  in \cite{chiang2023inference}; note that it is dimension free), for independent Rademacher r.v.'s $(\varepsilon_{1,i_1})$,...,$(\varepsilon_{k,i_k})$ that are independent of $(X_\i)_{\i\in\N^K}$, one has
	\begin{align*}
		|I_{\bN,\e}|^{1/2}	\left(\mathbb{E}[\|H_\bN^\e(f)\|_\calF^r]\right)^{1/r}
	=&\left (\mathbb{E} \left [ \left \|\frac{1}{\sqrt{|I_{\bN,\e}|}} \sum_{\i\in I_{\bN,\e}} (\pi_{ \e}f )(\{U_{\i \odot \e'}\}_{\e'\le \e})   \right \|_{\calF}^{r} \right ] \right)^{1/r}\\
		\lesssim& 
		\left (\mathbb{E} \left [ \left \|\frac{1}{\sqrt{|I_{\bN,\e}|}} \sum_{\i\in I_{\bN,\e}}\varepsilon_{1,i_1}...\varepsilon_{k,i_k}\cdot(\pi_{ \e}f )(\{U_{\i \odot \e'}\}_{\e'\le \e}) \right \|_{\calF}^{r} \right ] \right)^{1/r}.
	\end{align*}
	By  convexity of supremum and $\cdot \mapsto (\cdot)^r$, Jensen's inequality implies that the RHS above can be upperbounded  up to a constant that depends only on $r$, $K$, and $k$ by
	\begin{align*}
		\left (\mathbb{E} \left [ \left \|\frac{1}{\sqrt{|I_{\bN,\e}|}} \sum_{\i\in I_{\bN,\e}}\varepsilon_{1,i_1}...\varepsilon_{k,i_k}\cdot(P_\e f )(\{U_{\i \odot \e'}\}_{\e'\le \e}) \right \|_{\calF}^{r} \right ] \right)^{1/r}.
	\end{align*}
Denote $\mathbb P_{I_{\bN,\e}}=|I_{\bN,\e}|^{-1}\sum_{\i \in I_{\bN,\e}}\delta_{\{U_{\i \odot \e'}\}_{\e'\le \e}}$, the empirical measure on the support of $\{U_{\i \odot \e'}\}_{\e'\le \e}$.
Observe that conditionally on the AHK variables $\{U_{\i\odot \e'}:\i\in I_{\bN,\e},\ \e'\le \e\}$ (which determine $(X_\i)_{\i\in[\bN]}$), the object
\begin{align*}
\frac{1}{\sqrt{|I_{\bN,\e}|}}\sum_{\i\in I_{\bN,\e}}\varepsilon_{1,i_1}...\varepsilon_{k,i_k}(P_{\e}f )(\{U_{\i\odot \e'}\}_{\e'\le \e})
\end{align*}
is a homogeneous Rademacher chaos process of order $k$ with fixed coefficients.
	By Lemma~\ref{lemma: Orlicz norms} applied conditionally, the conditional $L^r$ norm is bounded from above by the conditional $\psi_{2/k}$-norm up to a constant that depends only on $(r,k)$, and, by Corollary 5.1.8 in \cite{delaPenaGine1999}, the conditional $\psi_{2/k}$-norm of the supremum is bounded by the entropy integral below, which is measurable with respect to the conditioning variables. Taking unconditional $L^r$ norms on both sides of the resulting conditional bound, one has
	\begin{align*}
		&\left (\mathbb{E} \left [ \left \|\frac{1}{\sqrt{|I_{\bN,\e}|}} \sum_{\i\in I_{\bN,\e}}\varepsilon_{1,i_1}...\varepsilon_{k,i_k} \cdot(P_{ \e}f )(\{U_{\i \odot \e'}\}_{\e'\le \e})  \right \|_{\calF}^{r} \right ] \right)^{1/r}\\
		\lesssim&
		\left(\mathbb{E} \left [ \left(\int_0^{\sigma_{I_{\bN,\e}}}\left[1+\log N\left(P_{ \e}\calF,\|\cdot\|_{\mathbb P_{I_{\bN,\e}},2},\tau\right)\right]^{k/2}d \tau\right)^{r} \right ]\right)^{1/r},
	\end{align*}
	where $\sigma_{I_{\bN,\e}}^2:=\sup_{f\in\calF}\| P_\e f \|_{\mathbb P_{I_{\bN,\e}},2}^2
	$. Using a change of variable and the definition of $J_\e$, pathwise,
	\begin{align*}
		&\int_0^{\sigma_{I_{\bN,\e}}}\left[1+\log N\left(P_{ \e} \calF,\|\cdot\|_{\mathbb P_{I_{\bN,\e}},2},\tau\right)\right]^{k/2}d \tau \\
		=&\|P_{ \e} F\|_{\mathbb P_{I_{\bN,\e}},2} \int_0^{\sigma_{I_{\bN,\e}}/\|P_{ \e} F\|_{\mathbb P_{I_{\bN,\e}},2}}\left[1+\log N\left(P_{ \e} \calF,\|\cdot\|_{\mathbb P_{I_{\bN,\e}},2},\tau\|P_{ \e} F\|_{\mathbb P_{I_{\bN,\e}},2}\right)\right]^{k/2}d \tau\\
		\le&
		\|P_{ \e} F\|_{\mathbb P_{I_{\bN,\e}},2}\, J_\e\left(\sigma_{I_{\bN,\e}}/\|P_{ \e} F\|_{\mathbb P_{I_{\bN,\e}},2}\right)
		\le J_\e\left(1\right)\|P_{ \e} F\|_{\mathbb P_{I_{\bN,\e}},2},
	\end{align*}
	where the last step uses $\sigma_{I_{\bN,\e}}\le \|P_\e F\|_{\mathbb P_{I_{\bN,\e}},2}$ (since $|P_\e f|\le P_\e F$ pointwise for every $f\in\calF$) and the monotonicity of $J_\e$. Combining the two displays,
	\begin{align*}
		\left (\mathbb{E} \left [ \left \|\frac{1}{\sqrt{|I_{\bN,\e}|}} \sum_{\i\in I_{\bN,\e}}\varepsilon_{1,i_1}...\varepsilon_{k,i_k} \cdot(P_{ \e}f )(\{U_{\i \odot \e'}\}_{\e'\le \e})  \right \|_{\calF}^{r} \right ] \right)^{1/r}\\
		\lesssim J_\e(1)\left(\mathbb{E}\left[\|P_{ \e} F\|_{\mathbb P_{I_{\bN,\e}},2}^{r}\right]\right)^{1/r}\le J_\e\left(1\right) \|F\|_{P,r\vee 2} ,
	\end{align*}
	where the final inequality holds as follows. For $r\ge 2$, by Jensen's inequality (convexity of $x\mapsto x^{r/2}$) applied to the empirical average $\mathbb P_{I_{\bN,\e}}$, the conditional Jensen inequality $(P_\e F)^r\le P_\e (F^r)$, and the fact that the summands are identically distributed,
	$\mathbb{E}[\|P_\e F\|_{\mathbb P_{I_{\bN,\e}},2}^r]=\mathbb{E}[(\mathbb P_{I_{\bN,\e}} (P_\e F)^2)^{r/2}]\le \mathbb{E}[\mathbb P_{I_{\bN,\e}} (P_\e F)^r]=\mathbb{E}[(P_\e F)^r]\le \mathbb{E}[F^r]$.
	For $1\le r<2$, by monotonicity of $L^p$ norms, $(\mathbb{E}[\|P_\e F\|_{\mathbb P_{I_{\bN,\e}},2}^r])^{1/r}\le (\mathbb{E}[\|P_\e F\|_{\mathbb P_{I_{\bN,\e}},2}^2])^{1/2}=\|P_\e F\|_{P,2}\le\|F\|_{P,2}$.
\qed

\subsubsection{Proof of Theorem~\ref{theorem:local_max_ineq}}\label{sec:Proof of theorem:local_max_ineq}
	
We first state a crucial technical lemma which will be used in the following proof, a proof of this lemma is provided in the end of this section.
	\begin{lemma}\label{lemma:partition}
		For any $\e\in \{0,1\}^K\setminus\{\bm 0\}$, let $m_\e=\min_{k'\in\supp(\e)}N_{k'}$, and note $m_\e\ge n$. Then $I_{\bN,\e}$ can be partitioned into subsets $G$'s of size $m_\e$ such that each $G$ is transversal, that is, any two distinct tuples
		$
		(i_1,i_2,\dots,i_K),$	$(i_1',i_2',\dots,i_K')\in G
		$
		satisfy
        $
            i_k\ne i_k' \text{ for all } k\in \supp(\bm e).
       $
	\end{lemma}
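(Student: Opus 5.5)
The plan is an explicit cyclic-shift (Latin-rectangle type) construction; there is no probabilistic content here. First, a reduction. Elements of $I_{\bN,\e}$ have zeros in every coordinate outside $\supp(\e)$. So $I_{\bN,\e}$ is in bijection with $\prod_{k'\in\supp(\e)}[N_{k'}]$, and transversality only involves the coordinates in $\supp(\e)$. Write $\supp(\e)=\{k_1,\dots,k_k\}$ and relabel so that $k_1$ attains the minimum, i.e.\ $N_{k_1}=m_\e$. The inequality $m_\e\ge n$ is immediate, since $n=\min_{1\le k'\le K}N_{k'}$ is a minimum over a larger set.

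Next, the groups. For each offset vector $c=(c_2,\dots,c_k)\in\prod_{j=2}^k[N_{k_j}]$, define
\[
G_c=\Bigl\{\bm i(t,c):\ t\in[m_\e]\Bigr\},\qquad i_{k_1}(t,c)=t,\quad i_{k_j}(t,c)=\bigl((t+c_j-2)\bmod N_{k_j}\bigr)+1\ \ (j\ge2),
\]
with all coordinates outside $\supp(\e)$ set to $0$. Every coordinate lies in its admissible range, so $G_c\subset I_{\bN,\e}$. When $k=1$ there are no offsets, and the single group is $I_{\bN,\e}$ itself; this group is trivially transversal because distinct tuples differ in their only active coordinate.

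Then transversality and size. Take $t\ne t'$ in $[m_\e]$. The $k_1$-coordinates of $\bm i(t,c)$ and $\bm i(t',c)$ differ, being $t$ and $t'$. For $j\ge2$ we have $0<|t-t'|<m_\e\le N_{k_j}$, so $t+c_j$ and $t'+c_j$ are not congruent mod $N_{k_j}$, and hence the $k_j$-coordinates also differ. Consequently each $G_c$ is transversal, and it has exactly $m_\e$ distinct elements.

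Finally, the partition property, which is the only step needing a little care. Given any $\bm i\in I_{\bN,\e}$, the group index is forced: $t=i_{k_1}$, and then each $c_j\in[N_{k_j}]$ is the unique residue with $c_j\equiv i_{k_j}-t+1 \pmod{N_{k_j}}$. Thus every $\bm i$ lies in exactly one $G_c$, so the groups are disjoint and cover $I_{\bN,\e}$. As a consistency check, the count is $m_\e\prod_{j\ge2}N_{k_j}=|I_{\bN,\e}|$.

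I do not expect a genuine obstacle. The one point that must not be overlooked is anchoring the construction on a coordinate attaining the minimum $m_\e$, so that $|t-t'|<N_{k_j}$ holds for every other active coordinate.
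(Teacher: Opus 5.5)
Your proposal is correct and matches the paper's proof essentially verbatim. The paper also anchors on a coordinate attaining $m_\e$, uses the same cyclic shifts $\bigl((t+g-2)\bmod m_j\bigr)+1$ in the remaining active coordinates, and argues the partition property via uniqueness of the offset, and transversality via $0<|t-t'|<m_\e\le m_j$.
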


	We now present the proof of Theorem~\ref{theorem:local_max_ineq}. Throughout, write $m_\e=\min_{k'\in\supp(\e)}N_{k'}\ge n$ as in Lemma~\ref{lemma:partition}, and, for $m\in\N$, let $M_\e^{(m)}=\max_{t\in [m]}(P_\e F)\left(\{U_{(t,...,t)\odot \e'}\}_{\e'\le \e}\right)$, so that $M_\e=M_\e^{(n)}$. We shall use the following elementary comparison: for i.i.d.\ non-negative random variables $Z_1,Z_2,\dots$ and any $m\ge n$,
	\begin{align}
		\mathbb{E}\left[\max_{t\in[m]}Z_t\right]\le \left\lceil \frac{m}{n}\right\rceil\mathbb{E}\left[\max_{t\in[n]}Z_t\right]\le \frac{2m}{n}\,\mathbb{E}\left[\max_{t\in[n]}Z_t\right],\label{eq:blockmax}
	\end{align}
	which follows by splitting $[m]$ into $\lceil m/n\rceil$ blocks of length at most $n$ and bounding the maximum by the sum of the block maxima. Applying \eqref{eq:blockmax} to $Z_t=(P_\e F)^2\left(\{U_{(t,...,t)\odot \e'}\}_{\e'\le \e}\right)$, whose maxima over $[m]$ satisfy $\max_{t\in[m]}Z_t = (M_\e^{(m)})^2$, yields, for any $m\ge n$,
	\begin{align}
		\frac{\|M_\e^{(m)}\|_{P,2}}{\sqrt{m}}\le \frac{\sqrt{2}\,\|M_\e\|_{P,2}}{\sqrt{n}}.\label{eq:blockmax2}
	\end{align}

	For an $\e\in \calE_1$ with $\supp(\e)=\{k'\}$, the summands are $N_{k'}$ i.i.d.\ random variables, and Lemma~\ref{lemma:local_max_ineq_indep}, applied with sample size $N_{k'}=m_\e$, delivers the stated bound with $\sqrt{m_\e}$ and $\|M_\e^{(m_\e)}\|_{P,2}$ in place of $\sqrt{n}$ and $\|M_\e\|_{P,2}$; by \eqref{eq:blockmax2} with $m=m_\e$, the desired result follows. Therefore, we assume $K\ge 2$ and $\e\in \calE_k$ for a $k\in\{2,...,K\} $. In this case we shall likewise first establish the bound of the theorem with $\left(\sqrt{m_\e},\, M_\e^{(m_\e)}\right)$ in place of $\left(\sqrt{n},\, M_\e\right)$; since $\delta_\e$ and $\|P_\e F\|_{P,2}$ do not involve $n$, and $M_\e$ enters the bound only through $\|M_\e\|_{P,2}/\sqrt{n}$, the stated bound then follows from \eqref{eq:blockmax2} with $m=m_\e$.  Assume without loss of generality that $\e$ consists of $1$'s in its first $k$ elements and zero elsewhere. By applying the symmetrisation of Lemma B.1 in \cite{chiang2023inference}, one has, for independent Rademacher r.v.'s $(\varepsilon_{1,i_1})$,...,$(\varepsilon_{k,i_k})$ that are independent of $(X_\i)_{\i\in\N^K}$, that 
	\begin{align*}
			|I_{\bN,\e}|^{1/2}\mathbb{E}[\|H_\bN^\e(f)\|_\calF]
		\lesssim&
		\mathbb{E}\left[\left|\frac{1}{\sqrt{|I_{\bN,\e}|}}\sum_{\i\in I_{\bN,\e}}\varepsilon_{1,i_1}...\varepsilon_{k,i_k}(\pi_{\e}f )(\{U_{\i\odot \e'}\}_{\e'\le \e})\right\|_\calF\right].
	\end{align*}
	Further, by convexity of supremum and Jensen's inequality, the RHS above can be upper-bounded up to a constant that depends only on $K$ and $k$  by
	\begin{align*}
			\mathbb{E}\left[\left|\frac{1}{\sqrt{|I_{\bN,\e}|}}\sum_{\i\in I_{\bN,\e}}\varepsilon_{1,i_1}...\varepsilon_{k,i_k}(P_{\e}f )(\{U_{\i\odot \e'}\}_{\e'\le \e})\right\|_\calF\right]
	\end{align*}
	
Denote $\mathbb P_{I_{\bN,\e}}=|I_{\bN,\e}|^{-1}\sum_{\i \in I_{\bN,\e}}\delta_{\{U_{\i \odot \e'}\}_{\e'\le \e}}$, the empirical measure on the support of $\{U_{\i \odot \e'}\}_{\e'\le \e}$.
Observe that conditionally on $\{X_\i\}_{\i\in\bN}$, the object
\begin{align*}
	R_\bN^\e(f)=\frac{1}{\sqrt{|I_{\bN,\e}|}}\sum_{\i\in I_{\bN,\e}}\varepsilon_{1,i_1}...\varepsilon_{k,i_k}(P_{\e}f )(\{U_{\i\odot \e'}\}_{\e'\le \e})
\end{align*}
is a homogeneous Rademacher chaos process of order $k$. Further, following Corollary 3.2.6 in \cite{delaPenaGine1999}, for any $f,f'\in \calF$
\begin{align*}
\left\|R_\bN^\e(f)-R_\bN^\e(f')\right\|_{\psi_{2/k}|\{X_\i\}_{\i\in \bN}}\lesssim \left\|R_\bN^\e(f)-R_\bN^\e(f')\right\|_{\mathbb P_{I_{\bN,\e}},2}.
\end{align*}
Hence the diameter of the function class $\calF$ in $\|\cdot\|_{\psi_{2/k}|\{X_\i\}_{\i\in \bN}}$-norm is upperbounded by  $ {\sigma_{I_{\bN,\e}}}$ up to a constant, where
 $\sigma_{I_{\bN,\e}}^2:=\sup_{f\in\calF}\| P_{\e}f \|_{\mathbb P_{I_{\bN,\e}},2}^2
$. 
By applying
 Fubini's theorem, Corollary 5,1.8 in \cite{delaPenaGine1999}, and a change of variables, we have
	\begin{align*}
		&\mathbb{E}\left[\left\|\frac{1}{\sqrt{|I_{\bN,\e}|}}\sum_{\i\in I_{\bN,\e}}\varepsilon_{1,i_1}...\varepsilon_{k,i_k}(P_{\e}f )(\{U_{\i\odot \e'}\}_{\e'\le \e})\right\|_\calF\right]\\
			\lesssim&
			\mathbb{E}\left[\left\|\left\|\frac{1}{\sqrt{|I_{\bN,\e}|}}\sum_{\i\in I_{\bN,\e}}\varepsilon_{1,i_1}...\varepsilon_{k,i_k}(P_{\e}f )(\{U_{\i\odot \e'}\}_{\e'\le \e})\right\|_\calF\right\|_{\psi_{2/k}|\{X_\i\}_{\i\in\bN}}\right]\\
		\lesssim&
		\mathbb{E} \left [ \int_0^{\sigma_{I_{\bN,\e}}}\left[1+\log N\left(P_{ \e}\calF,\|\cdot\|_{\mathbb P_{I_{\bN,\e}},2},\tau\right)\right]^{k/2}d \tau \right ] \\
		=&\mathbb{E} \left [
	\|P_\e F\|_{\mathbb P_{I_{\bN,\e}},2}
			 \int_0^{\sigma_{I_{\bN,\e}}/\|P_\e F\|_{\mathbb P_{I_{\bN,\e}},2}}\left[1+\log N\left(P_{ \e}\calF,\|\cdot\|_{\mathbb P_{I_{\bN,\e}},2},\tau\|P_\e F\|_{\mathbb P_{I_{\bN,\e}},2}\right)\right]^{k/2}d \tau \right ]\\
		\le&
		\mathbb{E} \left [\|P_\e F\|_{\mathbb P_{I_{\bN,\e}},2} J_\e\left(\sigma_{I_{\bN,\e}}/\|P_\e F\|_{\mathbb P_{I_{\bN,\e}},2}\right) \right ].
	\end{align*}
	By Lemma~\ref{lemma:uniform_entropy_integral}, an application of Jensen's inequality yields
	\begin{align}
		|I_{\bN,\e}|^{1/2}\mathbb{E}[\|H_{\bN}^\e (f)\|_\calF]\lesssim&
		\|P_\e F\|_{P,2} J_\e\left(z\right), \label{eq:local_maximal_ineq_prelimary_bound}
	\end{align}
	where $z:=\sqrt{\mathbb{E}[\sigma_{I_{\bN,\e}}^2]/\|P_\e F\|_{P,2}^2}$.

	We now bound 
	\begin{align*}
		\mathbb{E}[\sigma_{I_{\bN,\e}}^2]=&\mathbb{E}\left[\left\|\frac{1}{|I_{\bN,\e}|}\sum_{\i \in I_{\bN,\e}}(P_\e f )^2\left(\{U_{\i \odot \e'}\}_{\e'\le \e}\right)\right\|_\calF\right].
	\end{align*}
We aim to apply the Hoffmann--J\o rgensen inequality to handle the squared summands. However, because the summands are not independent, we invoke Lemma~\ref{lemma:partition}. By applying this lemma, we obtain a partition \(\mathcal{G}\) of \(I_{\bN,\e}\) into \(|\calG|=|I_{\bN,\e}|/m_\e\) groups, each containing \(m_\e\) i.i.d. observations. The i.i.d. property follows from the AHK representation \eqref{eq:AHK_representation} and the fact that within each group, any two observations share no common indices \(i_1,\dots,i_K\).

For each group \(G = \{\i_{1}(G), \i_{2}(G), \dots, \i_{m_\e}(G)\} \in \mathcal{G}\), we define
\[
D_{f,\e}(G) = \frac{1}{m_\e}\sum_{t=1}^{m_\e} \Bigl(P_{\e}f\Bigr)^2\!\left(\{U_{\i_t(G)\odot \e'}\}_{\e'\le \e}\right),
\]
and let
\[
D_{f,\e} = \frac{1}{m_\e}\sum_{t=1}^{m_\e} \Bigl(P_{\e}f\Bigr)^2\!\left(\{U_{(t,\dots,t)\odot \e'}\}_{\e'\le \e}\right).
\]
		Then
		we have
		\begin{align*}
			\frac{1}{|I_{\bN,\e}|}\sum_{\i \in I_{\bN,\e}}(P_{\e}f )^2\left(\{U_{\i \odot \e'}\}_{\e'\le \e}\right)=\frac{1}{|I_{\bN,\e}|/m_\e}\sum_{G\in \calG} D_{f,\e}(G).
		\end{align*}
Note that for each \(G\in\calG\), the AHK representation in \eqref{eq:AHK_representation} implies that \(D_{f,\e}\) and \(D_{f,\e}(G)\) are identically distributed. Consequently, by Jensen's inequality, we have
\begin{align*}
	\mathbb{E}\Bigl[\sigma_{I_{\bN,\e}}^2\Bigr]
	&= \mathbb{E}\!\left[\left\|\frac{1}{|I_{\bN,\e}|/m_\e}\sum_{G\in\calG} D_{f,\e}(G)\right\|_\calF\right]
	\le \mathbb{E}\!\left[\left\|D_{f,\e}\right\|_\calF\right].
\end{align*}
Let us denote this bound by
\[
B_{m_\e,\e} :=\mathbb{E}\!\left[\left\|D_{f,\e}\right\|_\calF\right]= \mathbb{E}\!\left[\left\|\frac{1}{m_\e}\sum_{t=1}^{m_\e} \Bigl(P_{\e}f\Bigr)^2\!\Bigl(\{U_{(t,\dots,t)\odot \e'}\}_{\e'\le \e}\Bigr)\right\|_\calF\right].
\]
	Thus $z\le\tilde z:=\sqrt{B_{m_\e,\e}}/\|P_\e F\|_{P,2}$.
	Note that by symmetrisation inequality for independent processes, the contraction principle (Theorem 4.12. in \citealt{ledoux1991probability}), and the Cauchy-Schwarz inequality, one has
	\begin{align*}
		B_{m_\e,\e}=&\mathbb{E}\left[\left\|\frac{1}{m_\e}\sum_{t=1}^{m_\e} (P_{\e}f )^2\left(\{U_{(t,...,t)\odot \e'}\}_{\e'\le \e}\right)\right\|_{\calF}\right]\\
		\le&
		\sigma_\e^2 +\mathbb{E}\left[\left\|\frac{1}{m_\e}\sum_{t=1}^{m_\e}\left\{ (P_{\e}f )^2\left(\{U_{(t,...,t)\odot \e'}\}_{\e'\le \e}\right)-\mathbb{E}\left[(P_{\e}f )^2\right]\right\}\right\|_{\calF}\right]\\
		\lesssim&
		\sigma_\e^2 +\mathbb{E}\left[\left\|\frac{1}{m_\e}\sum_{t=1}^{m_\e}\varepsilon_t\cdot (P_{\e}f )^2\left(\{U_{(t,...,t)\odot \e'}\}_{\e'\le \e}\right)\right\|_{\calF}\right]\\
		\lesssim&
		\sigma_\e^2 +\mathbb{E}\left[M_\e^{(m_\e)}\left\|\frac{1}{m_\e}\sum_{t=1}^{m_\e}\varepsilon_t\cdot (P_{\e}f )\left(\{U_{(t,...,t)\odot \e'}\}_{\e'\le \e}\right)\right\|_{\calF}\right]\\
		\le&
		\sigma_\e^2 +\|M_\e^{(m_\e)}\|_{P,2}\sqrt{\mathbb{E}\left[\left\|\frac{1}{m_\e}\sum_{t=1}^{m_\e}\varepsilon_t\cdot (P_{\e}f )\left(\{U_{(t,...,t)\odot \e'}\}_{\e'\le \e}\right)\right\|_{\calF}^2\right]}.
	\end{align*}
	An application of Hoffmann--J\o rgensen's inequality (Proposition A.1.6 in \citealt{vdVW1996}) gives
	\begin{align*}
		&\sqrt{\mathbb{E}\left[\left\|\frac{1}{m_\e}\sum_{t=1}^{m_\e}\varepsilon_t\cdot (P_{\e}f )\left(\{U_{(t,...,t)\odot \e'}\}_{\e'\le \e}\right)\right\|_{\calF}^2\right]}\\
		\lesssim&
		\mathbb{E}\left[\left\|\frac{1}{m_\e}\sum_{t=1}^{m_\e}\varepsilon_t\cdot (P_{\e}f )\left(\{U_{(t,...,t)\odot \e'}\}_{\e'\le \e}\right)\right\|_{\calF}\right]+\frac{1}{m_\e}\|M_\e^{(m_\e)}\|_{P,2}.
	\end{align*}
By employing analogous reasoning to that used in the initial part of the proof, we deduce that
	\begin{align*}
		&\mathbb{E}\left[\left\|\frac{1}{\sqrt{m_\e}}\sum_{t=1}^{m_\e}\varepsilon_t\cdot (P_{\e}f )\left(\{U_{(t,...,t)\odot \e'}\}_{\e'\le \e}\right)\right\|_{\calF}\right]\\
		\lesssim& \|P_\e F\|_{P,2}\int_{0}^{\tilde z} \sup_{Q}\sqrt{1+\log N(P_\e\calF,\|\cdot\|_{Q,2},\epsilon \|P_\e F\|_{Q,2})}d\epsilon.
	\end{align*}
	Note that the integral on the RHS can be bounded by $J_\e(\tilde z)$ and thus
	\begin{align*}
		B_{m_\e,\e}\lesssim& \sigma_\e^2 + m_\e^{-1}\|M_\e^{(m_\e)}\|_{P,2}^2 + m_\e^{-1/2} \|M_\e^{(m_\e)}\|_{P,2} \|P_\e F\|_{P,2}J_\e(\tilde z).
	\end{align*}

	Define $$\Delta=(\sigma_\e\vee m_\e^{-1/2}\|M_\e^{(m_\e)}\|_{P,2})/\|P_\e F\|_{P,2},$$  it then follows that
	\begin{align*}
		\tilde z^2\lesssim \Delta^2 + \frac{\|M_\e^{(m_\e)}\|_{P,2}}{\sqrt{m_\e}\|P_\e F\|_{P,2}}J_\e(\tilde z).
	\end{align*}
	By applying Lemma~\ref{lemma:uniform_entropy_integral} and  Lemma 2.1 of \cite{vanderVaartWellner2011} with $J=J_\e$, $A=\Delta$, $B=\sqrt{\|M_\e^{(m_\e)}\|_{P,2}/\sqrt{m_\e}\|P_\e F\|_{P,2}}$ and $r=1$, it yields that
	\begin{align*}
		J_\e(z)\le J_\e(\tilde z)\lesssim J_\e(\Delta)\left\{1+J_\e(\Delta)\frac{\|M_\e^{(m_\e)}\|_{P,2}}{\sqrt{m_\e}\|P_\e F\|_{P,2} \Delta^2}\right\}.
	\end{align*}
	Combining this with (\ref{eq:local_maximal_ineq_prelimary_bound}), we obtain the bound
	\begin{align}
		|I_{\bN,\e}|^{1/2}\mathbb{E}[\|H_{\bN}^\e (f)\|_\calF]\lesssim&
		J_\e(\Delta)\|P_{\e} F\|_{P,2} +\frac{J_\e^2(\Delta)\|M_\e^{(m_\e)}\|_{P,2}}{\sqrt{m_\e}\Delta^2}.\label{eq:local_maximal_ineq_main_bound}
	\end{align}
	Notice that $\delta_\e\le \Delta$ by their definitions. By Lemma~\ref{lemma:uniform_entropy_integral}(iii), one has
	\begin{align*}
		J_\e(\Delta)\le \Delta\frac{J_\e(\delta_\e)}{\delta_\e}=\max\left\{J_\e(\delta_\e), \frac{\|M_\e^{(m_\e)}\|_{P,2} J_\e(\delta_\e)}{\sqrt{m_\e}\|P_\e F\|_{P,2}\delta_\e}\right\}\le\max\left\{J_\e(\delta_\e), \frac{\|M_\e^{(m_\e)}\|_{P,2} J_\e^2(\delta_\e)}{\sqrt{m_\e}\|P_\e F\|_{P,2}\delta_\e^2}\right\},
	\end{align*}
	where the second inequality follows from the fact $J_\e(\delta_\e)/\delta_\e\ge J_\e(1)\ge 1$. Finally, using Lemma~\ref{lemma:uniform_entropy_integral}(iii),
	\begin{align*}
		\frac{J_\e^2(\Delta)\|M_\e^{(m_\e)}\|_{P,2}}{\sqrt{m_\e}\Delta^2}\le \frac{J_\e^2(\delta_\e)\|M_\e^{(m_\e)}\|_{P,2}}{\sqrt{m_\e}\delta_\e^2}
	\end{align*}
	Combining the calculations with the bound in (\ref{eq:local_maximal_ineq_main_bound}), we obtain the bound of the theorem with $\left(\sqrt{m_\e},\, M_\e^{(m_\e)}\right)$ in place of $\left(\sqrt{n},\, M_\e\right)$; as explained at the beginning of the proof, an application of \eqref{eq:blockmax2} with $m=m_\e$ then yields the desired inequality.
	\qed
    
	\vskip 0.15in
	
	\subsubsection*{Proof of Lemma~\ref{lemma:partition}}
	Write $k=\Vert\e\Vert_0$. The tuples in $I_{\bN,\e}=\{\i\odot\e:\i\in[\bN]\}$ vanish outside $\supp(\e)$, and transversality constrains only the coordinates in $\supp(\e)$, so we may identify $I_{\bN,\e}$ with $\prod_{k'\in\supp(\e)}[N_{k'}]$. Order $\supp(\e)$ so that the associated dimensions are non-increasing and denote them by $m_1\ge m_2\ge\cdots\ge m_k$, so that $m_k=\min_{k'\in\supp(\e)}N_{k'}$. It therefore suffices to partition $[m_1]\times\cdots\times[m_k]$ into transversal subsets of size $m_k$.

	If $k=1$, then $[m_1]$ is itself transversal and there is nothing to prove, so assume $k\ge2$. For $j\in[k-1]$ let
	\[
	\phi_j(t,g)=\bigl((t+g-2)\bmod m_j\bigr)+1,\qquad (t,g)\in[m_k]\times[m_j],
	\]
	denote the representative in $[m_j]$ of $t+g-1$ modulo $m_j$, and define
	\[
	\Phi(\bm g,t)=\bigl(\phi_1(t,g_1),\dots,\phi_{k-1}(t,g_{k-1}),\,t\bigr),\qquad
	\bm g=(g_1,\dots,g_{k-1})\in[m_1]\times\cdots\times[m_{k-1}].
	\]
	The candidate groups are the images $G_{\bm g}=\Phi\bigl(\{\bm g\}\times[m_k]\bigr)$ of the fibres of the first argument.

	\emph{The collection $\calG=\{G_{\bm g}\}_{\bm g}$ partitions $[m_1]\times\cdots\times[m_k]$ into sets of size $m_k$.} Since the fibres $\{\bm g\}\times[m_k]$ partition the domain of $\Phi$, it suffices to show that $\Phi$ is a bijection onto $[m_1]\times\cdots\times[m_k]$. Fix $\bm x=(x_1,\dots,x_k)$ in the codomain. Any preimage must have $t=x_k$, and for each $j\in[k-1]$ the map $g\mapsto\phi_j(x_k,g)$ is a translation on the cyclic group $\Z/m_j\Z$ and hence a bijection of $[m_j]$, so the equation $\phi_j(x_k,g_j)=x_j$ determines $g_j$ uniquely. Thus $\bm x$ has exactly one preimage.

	\emph{Every $G_{\bm g}$ is transversal.} Fix $\bm g$ and distinct $t,t'\in[m_k]$. The last coordinates $t$ and $t'$ already differ. For $j\in[k-1]$, the equality $\phi_j(t,g_j)=\phi_j(t',g_j)$ would force $t\equiv t'\ (\mathrm{mod}\ m_j)$, which is impossible because $0<|t-t'|<m_k\le m_j$. Hence the two tuples differ in every coordinate indexed by $\supp(\e)$.

	The construction thus yields $m_1\cdots m_{k-1}=|I_{\bN,\e}|/m_k$ transversal groups of size $m_k=\min_{k'\in\supp(\e)}N_{k'}=m_\e$, as claimed. Note that $m_\e$ divides $|I_{\bN,\e}|$ exactly, and that $m_\e=n$ whenever $\supp(\e)$ contains a shortest dimension, in particular when the array is balanced.
	\qed

\vskip 0.15in
        
\subsubsection{Proof of Corollary~\ref{cor:local_maximal_ineq_VC}}\label{sec:proof of cor:local_maximal_ineq_VC}

The proof follows the same arguments as in the proof of Corollary 5.3 in \cite{ChenKato2019b} using Lemma~\ref{lemma:conditional_exp}, and is not repeated here.  \qed

\vskip 0.15in

\subsection{Auxiliary Lemmas}
The following restates Theorem 5.2 in \cite{CCK2014AoS}, which is a modification of Theorem 2.1 in \cite{vanderVaartWellner2011} to allow for an unbounded envelope. 
\begin{lemma}[Local maximal inequality under i.i.d.]\label{lemma:local_max_ineq_indep}
Let $X_1,...,X_n$ be $\calS$-valued i.i.d. random variables. Suppose $0<\|F\|_{P,2}<\infty$ and let $\sigma^2$ be any positive constant such that $\sup_{f\in \calF} P f^2\le \sigma^2 \le \|F\|_{P,2}^2$. Set $\delta=\sigma/\|F\|_{P,2} $ and $B=\sqrt{\mathbb{E}[\max_{i\in [n]} F^2(X_i)]}$. Then
\begin{align*}
\mathbb{E}[\|\Gn f\|_\calF] 
\lesssim \|F\|_{P,2} J(\delta,\calF,F) + \frac{B J^2(\delta,\calF,F)}{\delta^2 \sqrt{n}}.
\end{align*}
Suppose that, in addition, $\calF$ is VC-type with characteristics $(A,v)$.
Then
\begin{align*}
\mathbb{E}[\|\Gn f\|_\calF] 
\lesssim \sigma\sqrt{v  \log\left(\frac{A\|F\|_{P,2}}{\sigma}\right)} + \frac{vB }{\sqrt{n}} \log\left(\frac{A\|F\|_{P,2}}{\sigma}\right).
\end{align*}
\end{lemma}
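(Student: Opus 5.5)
The plan is to prove the i.i.d.\ inequality by the same route the paper uses for Theorem~\ref{theorem:local_max_ineq} in the case $k=1$, which is the argument of \cite{vanderVaartWellner2011} adapted to an unbounded envelope as in \cite{CCK2014AoS}. Here $\Gn$ is the usual $n^{-1/2}\sum_{i\le n}(f(X_i)-Pf)$ and $\mathbb P_n$ is the empirical measure of $X_1,\dots,X_n$.

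\textbf{Step 1: symmetrisation and chaining.} First, I symmetrise with Rademacher variables $\varepsilon_i$ independent of the data. Conditionally on $X_1,\dots,X_n$, the process $n^{-1/2}\sum_i\varepsilon_i f(X_i)$ is sub-Gaussian with respect to $\|\cdot\|_{\mathbb P_n,2}$. Dudley's entropy bound (Corollary 5.1.8 in \cite{delaPenaGine1999} with $k=1$) then gives
\[
\mathbb{E}\|\Gn f\|_\calF\lesssim \mathbb{E}\Bigl[\|F\|_{\mathbb P_n,2}\,J\bigl(\sigma_n/\|F\|_{\mathbb P_n,2}\bigr)\Bigr],\qquad \sigma_n^2:=\sup_{f\in\calF}\mathbb P_n f^2 ,
\]
after the change of variables $\tau\mapsto\tau\|F\|_{\mathbb P_n,2}$. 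This uses the supremum over $Q$ in $J$, applied with $Q=\mathbb P_n$.

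Next, I remove the randomness in the argument of $J$. The map $(x,y)\mapsto\sqrt y\,J(\sqrt{x/y})$ is jointly concave; this is the i.i.d.\ analogue of Lemma~\ref{lemma:uniform_entropy_integral}. Applying Jensen's inequality with $x=\sigma_n^2$ and $y=\|F\|_{\mathbb P_n,2}^2$ yields
\[
\mathbb{E}\|\Gn f\|_\calF\lesssim\|F\|_{P,2}\,J(z),\qquad z^2:=\frac{\mathbb{E}\sigma_n^2}{\|F\|_{P,2}^2}.
\]

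\textbf{Step 2: controlling $\mathbb{E}\sigma_n^2$.} I bound $\mathbb{E}\sigma_n^2$ by the chain below, and I expect this step to be the main obstacle, because the envelope is unbounded.

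\emph{Centring.} Write $\mathbb{E}\sigma_n^2\le\sigma^2+\mathbb{E}\sup_f|\mathbb P_n f^2-Pf^2|$.

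\emph{Symmetrisation.} Symmetrise the second term.

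\emph{Contraction.} Apply the contraction principle (Theorem 4.12 in \cite{ledoux1991probability}), pulling out $\max_i F(X_i)$. This reduces the second term to $n^{-1/2}\,\mathbb{E}\bigl[\max_iF(X_i)\,\|n^{-1/2}\textstyle\sum_i\varepsilon_if(X_i)\|_\calF\bigr]$.

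\emph{Cauchy--Schwarz.} Bound this expectation by $B$ times the $L^2$ norm of the supremum.

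\emph{Hoffmann--J{\o}rgensen.} Proposition A.1.6 in \cite{vdVW1996} converts the $L^2$ norm into $\mathbb{E}\|n^{-1/2}\sum\varepsilon_if\|_\calF+B/\sqrt n$.

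\emph{Chaining again.} Bound the first-moment term by Step 1 as $\lesssim\|F\|_{P,2}J(z)$.

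Collecting these bounds gives the implicit inequality
\[
z^2\lesssim\Delta^2+\frac{B}{\sqrt n\,\|F\|_{P,2}}\,J(z),\qquad \Delta:=\frac{\sigma\vee(B/\sqrt n)}{\|F\|_{P,2}}.
\]

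\textbf{Step 3: solving the fixed-point inequality.} Lemma 2.1 of \cite{vanderVaartWellner2011} solves this inequality: it gives $J(z)\lesssim J(\Delta)\{1+J(\Delta)B/(\sqrt n\|F\|_{P,2}\Delta^2)\}$. I then pass from $\Delta$ to $\delta\le\Delta$ using that $t\mapsto J(t)/t$ is non-increasing and $J(\delta)/\delta\ge1$. This is exactly the last step of the proof of Theorem~\ref{theorem:local_max_ineq}, and it gives the first display of the lemma.

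\textbf{Step 4: the VC-type bound.} For a VC-type class, the definition \eqref{eq:VC-type-def} gives
\[
J(\delta)\le\int_0^\delta\sqrt{1+v\log(A/\tau)}\,d\tau\lesssim\delta\sqrt{v\log(A/\delta)} ,
\]
using $A\ge e$ and $v\ge1$. Substituting this into the first display, the leading term becomes $\|F\|_{P,2}\,\delta\sqrt{v\log(A/\delta)}=\sigma\sqrt{v\log(A\|F\|_{P,2}/\sigma)}$. The remainder term becomes $B\,v\log(A/\delta)/\sqrt n$, which is the stated bound.
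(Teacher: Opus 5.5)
Your proposal is correct. It is essentially the argument behind the cited result, Theorem 5.2 of \cite{CCK2014AoS} (adapting \cite{vanderVaartWellner2011}), which the paper restates without reproof. It is also the argument the paper replicates in its proof of Theorem~\ref{theorem:local_max_ineq}, where your Step 2 is the case with no transversal partition needed. Making $A\ge e$ and $v\ge1$ explicit in Step 4 is right, since the restated lemma leaves these conditions on the VC characteristics implicit and the logarithmic bound can fail without them.
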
 
The following restates Lemma B.3 in \cite{chiang2023inference}.
\begin{lemma}[Bounding $L^r$-norm by Orlicz norm]
	\label{lemma: Orlicz norms}
	Let $0 < \beta < \infty$ and $1 \le q < \infty$ be given, and let $m = m(\beta,q)$ be the smallest positive integer satisfying $m\beta \ge q$. Then for every real-valued random variable $\xi$, we have $(\mathbb{E}[|\xi|^{{q}}])^{1/{q}} \le (m!)^{1/( m\beta)} \| \xi \|_{\psi_\beta}$. 
\end{lemma}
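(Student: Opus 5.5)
The plan is a Bernstein--Orlicz argument: pass from the $L^q$ norm to moments of $|\xi|/C$ with $C=\|\xi\|_{\psi_\beta}$, and exploit that $e^{x^\beta}-1$ dominates a single term of its power series. By homogeneity of both sides in $\xi$, we may assume $C:=\|\xi\|_{\psi_\beta}\in(0,\infty)$. The cases $C=0$ (then $\xi=0$ a.s.) and $C=\infty$ are trivial.

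First we check that the infimum defining the Orlicz norm is attained, i.e.\ $\mathbb{E}[\psi_\beta(|\xi|/C)]\le 1$. For $C'\downarrow C$ the quantities $\psi_\beta(|\xi|/C')$ increase to $\psi_\beta(|\xi|/C)$, so monotone convergence gives the claim.

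Next comes the key pointwise step. With $m=m(\beta,q)$ and $x\ge0$, expanding the exponential and keeping only the $m$-th term of the series gives
\[
\psi_\beta(x)=\sum_{j\ge1}\frac{x^{j\beta}}{j!}\ \ge\ \frac{x^{m\beta}}{m!}.
\]
Applying this with $x=|\xi|/C$ and taking expectations yields $\mathbb{E}[(|\xi|/C)^{m\beta}]\le m!\,\mathbb{E}[\psi_\beta(|\xi|/C)]\le m!$, hence $\bigl(\mathbb{E}|\xi|^{m\beta}\bigr)^{1/(m\beta)}\le (m!)^{1/(m\beta)}C$.

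Finally, since $m\beta\ge q$, Lyapunov's (Jensen's) inequality for $L^p$ norms gives $(\mathbb{E}|\xi|^q)^{1/q}\le(\mathbb{E}|\xi|^{m\beta})^{1/(m\beta)}$. Combining this with the previous display proves the lemma.

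None of these steps is a real obstacle. The only point needing care is the attainment of the infimum in the definition of $\|\cdot\|_{\psi_\beta}$ when $\beta<1$, where $\psi_\beta$ is not convex; monotone convergence handles this regardless of convexity.
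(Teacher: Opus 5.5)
Your proof is correct: the one-term series bound $\psi_\beta(x)\ge x^{m\beta}/m!$, applied at the attained Orlicz norm and followed by Lyapunov's inequality using $m\beta\ge q$, is the standard argument behind Lemma B.3 of \cite{chiang2023inference}, which the paper simply restates without proof. Your monotone-convergence treatment of attainment of the infimum is valid for the nonconvex case $\beta<1$; equivalently, you can run the bound for every $C'>C$ and let $C'\downarrow C$.
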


The following is analogous to Lemma 5.2 in \cite{ChenKato2019b} and Lemma A.2 in \cite{CCK2014AoS}.
\begin{lemma}[Properties of $J_\e(\delta)$]\label{lemma:uniform_entropy_integral}
Suppose that $J_\e(1)<\infty$ for $\e\in\{0,1\}^K$, then for all $\e\in\{0,1\}^K$,
\begin{enumerate}[(i)]
\item $\delta \mapsto J_\e(\delta)$ is non-decreasing and concave.
\item For $c\ge1$, $J_\e(c\delta)\le c J_\e(\delta)$.
\item $\delta\mapsto J_\e(\delta)/\delta$ is non-increasing. 
\item  $(x,y)\mapsto J_\e(\sqrt{x/y})\sqrt{y}$ is jointly concave in $(x,y)\in [0,\infty)\times (0,\infty)$.
\end{enumerate}
\end{lemma}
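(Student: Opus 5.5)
The plan is to reduce all four properties to two facts about the integrand. Write
\[
J_\e(\delta)=\int_0^\delta h_\e(\tau)\,d\tau,\qquad h_\e(\tau):=\sup_Q\Bigl\{1+\log N\bigl(P_\e\calF,\|\cdot\|_{Q,2},\tau\|P_\e F\|_{Q,2}\bigr)\Bigr\}^{k/2}.
\]
The two facts are that $h_\e\ge 1$ and that $h_\e$ is non-increasing in $\tau$. Monotonicity holds because, for each fixed $Q$, the covering number is non-increasing in the radius, and a supremum of non-increasing functions is non-increasing. This also makes $h_\e$ Borel measurable, so the integral is well defined. The hypothesis $J_\e(1)<\infty$ then makes $J_\e(\delta)$ finite for every $\delta$: for $\delta\le1$ directly, and for $\delta>1$ because $h_\e(\tau)\le h_\e(1)$ when $\tau\ge1$. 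I also record that $J_\e(0)=0$.

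For (i), $J_\e$ is non-decreasing because $h_\e\ge0$. It is concave because it is the integral of a non-increasing function: for $\delta_1<\delta_2$ and $\lambda\in(0,1)$, the average slope of $J_\e$ over any interval is non-increasing in the location of the interval. Equivalently, the right derivative $h_\e(\delta+)$ is non-increasing.

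For (ii) and (iii), I substitute $\tau=\delta u$, which gives
\[
\frac{J_\e(\delta)}{\delta}=\int_0^1 h_\e(\delta u)\,du .
\]
The right-hand side is non-increasing in $\delta$ since $h_\e$ is non-increasing, and this proves (iii). Property (ii) follows at once: for $c\ge1$, $J_\e(c\delta)/(c\delta)\le J_\e(\delta)/\delta$.

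For (iv), I expect the main obstacle to be the $\sqrt y$ factor, which is not the usual perspective weight $y$. My first step is the substitution $s=\sqrt x$, $t=\sqrt y$, which writes the map as $\Phi(s,t)=t\,J_\e(s/t)$. This $\Phi$ is the perspective of the concave function $J_\e$, so it is jointly concave on $[0,\infty)\times(0,\infty)$.

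Next I show that $\Phi$ is non-decreasing in each argument. In $s$ this is clear from (i). In $t$, let $u=s/t$; the derivative of $\Phi$ in $t$ is $J_\e(u)-uJ_\e'(u)$, using one-sided derivatives. This is non-negative because concavity and $J_\e(0)=0$ give $J_\e(u)\ge u\,J_\e'(u+)$. Equivalently, (iii) shows that $tJ_\e(s/t)=s\cdot J_\e(s/t)/(s/t)$ is non-decreasing in $t$.

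Finally, the map $(x,y)\mapsto(\sqrt x,\sqrt y)$ is coordinatewise concave. Composing a jointly concave function that is non-decreasing in each argument with coordinatewise concave maps yields a concave function. Hence $(x,y)\mapsto\sqrt y\,J_\e(\sqrt{x/y})$ is jointly concave, which completes (iv).
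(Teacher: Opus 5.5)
Your proof is correct and follows essentially the same route as the standard argument behind the results the paper cites in place of a proof (Lemma A.2 of \cite{CCK2014AoS} and Lemma 5.2 of \cite{ChenKato2019b}). There, (i)--(iii) come from the integrand being non-increasing and at least one, and (iv) comes from the joint concavity of the perspective $(s,t)\mapsto tJ_{\bm e}(s/t)$, its monotonicity in each argument, and composition with the coordinatewise concave map $(x,y)\mapsto(\sqrt x,\sqrt y)$.
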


The following restates a useful result of Lemma A.2. in \cite{ghosal2000testing} for the calculations of the uniform entropy integral of the classes after Hoeffding-type decomposition.
	\begin{lemma}[Uniform covering for conditional expectations]\label{lemma:conditional_exp}
	Let $\mathcal{F}$ be a class of functions $f:\mathcal{X}\times \calY\to \R$ with envelopes $F$ and $R$ a fixed probability measure on $\calY$. For a given $f\in \mathcal{F}$, let $\overline f:\mathcal{X}\to \R$ be $\overline f=\int f(x,y)dR(y)$. Set $\overline{\mathcal{F}}=\{\overline f:f\in\mathcal{F}\}$. Note that $\overline{F}$ is an envelope of $\overline{\mathcal{F}}$.
	Then, for any $\varepsilon\in(0,1]$,
	\begin{align*}
\sup_{Q}N(\overline{\mathcal{F}},\|\cdot\|_{Q,2},2\varepsilon\|\overline F\|_{Q,2})\le 	\sup_{Q'}N(\mathcal{F},\|\cdot\|_{Q',2},\varepsilon^2\|F\|_{Q',2}),
	\end{align*}
	where $\sup_Q$ and $\sup_{Q'}$ are taken over all finite discrete distributions on $\mathcal X$ and $\mathcal X\times \mathcal Y$, respectively.

\end{lemma}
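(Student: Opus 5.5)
The plan is a weighted Jensen step that moves the comparison from $L_2(Q)$ on $\mathcal X$ to a suitable finite measure on $\mathcal X\times\calY$, followed by a discretisation step so that the supremum over finite discrete $Q'$ on the right-hand side applies. Fix a finite discrete $Q$ on $\mathcal X$, and assume without loss of generality that $F>0$ (replace $F$ by $F\vee\eta$ and let $\eta\downarrow0$ at the end) and that $\|\overline F\|_{Q,2}>0$.

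\textbf{Step 1 (weighted Jensen bound).} For $f,g\in\mathcal F$ we have $|\overline f-\overline g|\le\int|f-g|\,dR\le 2\overline F$ pointwise. Hence
\[
\|\overline f-\overline g\|_{Q,2}^2\le 2\int\!\!\int |f-g|(x,y)\,\overline F(x)\,dR(y)\,dQ(x).
\]
Write $\overline F=(\overline F/F)^{1/2}(\overline F F)^{1/2}$ inside the integral and apply Cauchy--Schwarz with respect to $Q\times R$. This gives $\|\overline f-\overline g\|_{Q,2}^2\le 2\|f-g\|_{\nu,2}\,\|F\|_{\nu,2}$, where $\nu$ is the measure $d\nu=(\overline F/F)\,d(Q\times R)$ and
\[
\|F\|_{\nu,2}^2=\int F\overline F\,d(Q\times R)=Q\overline F^{\,2}=\|\overline F\|_{Q,2}^2 .
\]
The measure $\nu$ is finite: for each atom $x$ of $Q$ we have $\int(\overline F/F)\,dR\le \overline F(x)\int F^{-1}dR$, which is finite after the truncation $F\vee\eta$. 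Covering numbers measured relative to $\|F\|_{\nu,2}$ are invariant under rescaling $\nu$, so we may normalise $\nu$ to a probability measure. Consequently, if $\|f-g\|_{\nu,2}\le\varepsilon^2\|F\|_{\nu,2}$, then
\[
\|\overline f-\overline g\|_{Q,2}\le\sqrt2\,\varepsilon\|\overline F\|_{Q,2}\le 2\varepsilon\|\overline F\|_{Q,2}.
\]
Taking $\overline{f_j}$ over the centres $f_j$ of a cover therefore yields $N(\overline{\mathcal F},\|\cdot\|_{Q,2},2\varepsilon\|\overline F\|_{Q,2})\le N(\mathcal F,\|\cdot\|_{\nu,2},\varepsilon^2\|F\|_{\nu,2})$.

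\textbf{Step 2 (discretising $\nu$).} It remains to bound the right-hand side of Step 1 by the supremum over finite discrete $Q'$. I would use the packing characterisation. Suppose $\overline{\mathcal F}$ has $m$ elements $\overline{f_1},\dots,\overline{f_m}$ that are pairwise more than $2\varepsilon\|\overline F\|_{Q,2}$ apart. By Step 1, the finitely many $f_1,\dots,f_m$ are then pairwise more than $\varepsilon^2\|F\|_{\nu,2}$ apart in $L_2(\nu)$, with strict inequality. Draw i.i.d.\ samples from $\nu$ and form the empirical measures $\nu_k$. By the strong law of large numbers applied to the finitely many integrable functions $(f_i-f_j)^2$ and $F^2$, some finite discrete $Q'=\nu_k$ preserves all these strict separations relative to $\|F\|_{Q',2}$. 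This lower-bounds a packing number under $Q'$, and packing--covering comparisons then close the argument. The factor-of-two loss between packing and covering is absorbed into the slack between $\sqrt2\varepsilon$ and $2\varepsilon$, and, if necessary, into the $\varepsilon^2$ versus $\varepsilon$ gap. Finally, take the supremum over $Q$ and let $\eta\downarrow0$.

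\textbf{Main obstacle.} The delicate point is Step 1, specifically choosing the weighting so that the envelope normalisations match exactly: the relation $\|F\|_{\nu,2}=\|\overline F\|_{Q,2}$ is what makes the bound work. A naive application of Jensen with $Q\times R$ goes the wrong way, since $\|\overline F\|_{Q,2}\le\|F\|_{Q\times R,2}$. The square-root loss that this weighting introduces is exactly why the radius on the right-hand side is $\varepsilon^2$. A secondary issue is the bookkeeping of strict versus non-strict inequalities and the packing/covering constants in Step 2, which may require slightly enlarging the constant $2$.
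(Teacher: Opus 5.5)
Your proof is correct. The paper gives no argument for this lemma; it only restates Lemma A.2 of \cite{ghosal2000testing}. So what you have written is a self-contained substitute rather than a variant of a proof in the text. The usual two-stage route for results of this kind first uses $|\overline f-\overline g|^2\le 2\overline F\,|\overline f-\overline g|$ to reduce to an $L_1$ distance under an $\overline F$-weighted product measure. It then compares uniform $L_1$ and $L_2$ entropies through a second reweighting by $1/F$. Your single Cauchy--Schwarz step with weight $\overline F/F$ does both reweightings at once. This weight is also the optimal one: for any density $w$, integrating against $Q\times R$ gives $(\int \overline F^{2}/w)(\int F^2 w)\ge (Q\overline F^{\,2})^2$, with equality if and only if $w\propto\overline F/F$. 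That is exactly why $\|F\|_{\nu,2}=\|\overline F\|_{Q,2}$ and why the radius becomes $\varepsilon^2$. Your packing-plus-empirical-measure step then shows explicitly that restricting the right-hand side to finitely discrete $Q'$ on $\mathcal X\times\calY$ loses nothing.

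The bookkeeping you flag is cleaner than you fear. You do not need to enlarge the constant $2$ or appeal to an $\varepsilon^2$-versus-$\varepsilon$ gap.
\begin{itemize}
\item Step 1 gives $\|f_i-f_j\|_{\nu,2}\ge \|\overline f_i-\overline f_j\|_{Q,2}^2/(2\|F\|_{\nu,2})>2\varepsilon^2\|F\|_{\nu,2}$, not merely $>\varepsilon^2\|F\|_{\nu,2}$. With $D$ the packing number and closed balls, discretisation then gives $m\le D(\mathcal F,\|\cdot\|_{Q',2},2\varepsilon^2\|F\|_{Q',2})\le N(\mathcal F,\|\cdot\|_{Q',2},\varepsilon^2\|F\|_{Q',2})$, which is exactly the stated radius.
\item The truncation $F\vee\eta$ costs nothing. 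Finitely many strict separations leave room $\kappa>1$ with $\|\overline f_i-\overline f_j\|_{Q,2}\ge2\kappa\varepsilon\|\overline F\|_{Q,2}$. Choose $\eta$ with $\|\overline{F\vee\eta}\|_{Q,2}<\kappa\|\overline F\|_{Q,2}$. This preserves the strict $2\varepsilon^2$ separation relative to $\|F\vee\eta\|_{\nu_\eta,2}$, where $\nu_\eta$ is the Step 1 measure built from $F\vee\eta$. Separation relative to $\|F\vee\eta\|_{Q',2}$ then implies separation relative to the smaller $\|F\|_{Q',2}$.
\item The covering claim at the end of Step 1 tacitly needs the cover centres to lie in $\mathcal F$, since you use $|f-g|\le 2F$. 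Step 2 never relies on it, because it works only with the $f_i\in\mathcal F$.
\end{itemize}
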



The following lemma is a triangular-array and multiway-clustering version of Lemma 4.3 of \cite{NeweyMc1994}. 

\begin{lemma}[LLN for triangular array with plug-in estimates under multiway clustering]
\label{lemma:4_3}
Let \(\mathcal P_n\) be a sequence of collections of data-generating processes for the K-array $\{X_{n,\bm i}:\bm i\in[\bm N_n]\}$. Suppose that, uniformly over \(P_n\in \mathcal P_n\), (i) the array satisfies Conditions (SE) and (D); (ii) there is a neighborhood \(\mathcal N\) of \(\eta_0\) and some \(\delta>0\) such that $\mathbb E_{P_n}[\sup_{\eta\in \mathcal{N}} \Vert a_n(X_{n,\bm i},\eta)\Vert^{1+\delta} ]< C<\infty$; {(iii) for every sequence of constants $\delta_n\downarrow0$,
\[
\sup_{P_n\in\mathcal P_n}\mathbb E_{P_n}\left[\Delta_n(X_{n,\bm 1};\delta_n)\right]\to0,\qquad \Delta_n(x;\delta):=\sup_{\eta:\,\|\eta-\eta_{0}\|_{P_n,2}\le\delta}\Vert a_n(x,\eta)-a_n(x,\eta_{0})\Vert,
\]
where the supremum is $P_n$-measurable under the pointwise-measurability convention maintained throughout;} and (iv) $\|\widehat\eta-\eta_0\|_{P_n,2} \overset{p}{\to} 0$. Then, uniformly over $P_n\in \mathcal P_n$, $\mathbb E_N [a_n(X_{n,\bm i},\widehat\eta)] \overset{p}{\to} \mathbb{E}_{P_n}[a_n(X_{n,\bm 1},\eta_0)]$.
\end{lemma}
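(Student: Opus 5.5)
We have to prove Lemma~\ref{lemma:4_3}, the triangular-array LLN with a plug-in estimate. The plan is the classical two-term split, done along an arbitrary sequence $P_n\in\mathcal P_n$:
\[
\mathbb E_N[a_n(X_{\bm i},\widehat\eta)]-\mathbb E_{P_n}[a_n(X_{\bm 1},\eta_0)]
=\underbrace{\mathbb E_N[a_n(X_{\bm i},\widehat\eta)-a_n(X_{\bm i},\eta_0)]}_{T_1}
+\underbrace{\mathbb E_N[a_n(X_{\bm i},\eta_0)]-\mathbb E_{P_n}[a_n(X_{\bm 1},\eta_0)]}_{T_2}.
\]

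\textbf{Term $T_2$.} This is a weak LLN for the SE array $h_n(X_{\bm i}):=a_n(X_{\bm i},\eta_0)$, whose $1+\delta$ moment is uniformly bounded by (ii). Truncate at a level $c_n\to\infty$ chosen below. For the bounded part $h_n\1\{\|h_n\|\le c_n\}$ we bound the variance of its sample mean using (SE) and (D). Pairs $(\bm i,\bm j)$ sharing no coordinate are independent and contribute zero covariance. The number of pairs sharing at least one coordinate is at most $N^2\sum_k N_k^{-1}\lesssim N^2/n$. Hence
\[
\mathrm{Var}\le c_n^{2}\,K/n .
\]
For the tail part, its mean absolute value is bounded by $\mathbb E\|h_n\|^{1+\delta}c_n^{-\delta}\le Cc_n^{-\delta}$, which we handle with Markov's inequality. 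Choosing $c_n=n^{1/4}$ makes both contributions vanish. All constants are uniform in $P_n$, so $T_2=o_{P_n}(1)$ uniformly.

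\textbf{Term $T_1$.} Here we use (iii) and (iv). By (iv), there is a deterministic sequence $\delta_n\downarrow0$ with $P_n(\|\widehat\eta-\eta_0\|_{P_n,2}\le\delta_n)\to1$. One obtains it by a standard diagonal argument from convergence in probability along the sequence. On this event,
\[
\|T_1\|\le \mathbb E_N[\Delta_n(X_{\bm i};\delta_n)].
\]
The right-hand side no longer involves $\widehat\eta$. Its expectation equals $\mathbb E_{P_n}\Delta_n(X_{\bm 1};\delta_n)$ by (SE), and this tends to $0$ by (iii). Since $\Delta_n\ge0$, Markov's inequality gives $T_1=o_{P_n}(1)$. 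The measurability of the supremum is granted by the pointwise-measurability convention.

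\textbf{Main obstacle.} The main obstacle is extracting the deterministic $\delta_n$ from (iv) in a form that is uniform over the sequence. I would construct $\delta_n$ directly for the fixed sequence $\{P_n\}$: pick $n_j$ such that $P_n(\|\widehat\eta-\eta_0\|>1/j)<1/j$ for all $n\ge n_j$, and set $\delta_n=1/j$ on $[n_j,n_{j+1})$. Because the statements are understood in the sequential sense of \cite{belloni2015uniform}, this suffices. The equicontinuity hypothesis (iii) is exactly what replaces the dominated-convergence step of \cite{NeweyMc1994}.

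Combining the bounds on $T_1$ and $T_2$ gives the conclusion.
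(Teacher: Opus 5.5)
Your proof is correct. For $T_1$ it follows essentially the same route as the paper's second step. The difference is in $T_2$, the LLN at $\eta_0$.

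\textbf{The paper's route for $T_2$.} The paper proves this via the Hoeffding-type decomposition \eqref{hoeffding}. For $\bm e\in\calE_1$ the summands $\pi_{\bm e}a_n$ are i.i.d., and the von Bahr--Esseen inequality applies directly. For $\Vert\bm e\Vert_0\ge2$ it uses the complete degeneracy of $\pi_{\bm e}a_n$ and iterates a conditional von Bahr--Esseen bound over the dimensions in $\supp(\bm e)$. This gives $\mathbb E|H^{\bm e}_{\bm N}(a_n)|^{1+\delta}\lesssim|I_{\bm N,\bm e}|^{-\delta}$.

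\textbf{Your route for $T_2$.} You truncate at $c_n=n^{1/4}$. You then bound the variance of the truncated mean by counting the at most $N^2\sum_k N_k^{-1}$ cell pairs that share a coordinate, since all other pairs are independent by (D). The tail is handled by Markov's inequality and the $1+\delta$ moment bound.

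\textbf{What each buys.} Your argument is more elementary. It uses (D) only for pairs of single cells and needs neither the AHK representation nor the projections $\pi_{\bm e}$. It also avoids the conditional iteration over degenerate components, which the paper describes only briefly. It is the same counting device the paper itself uses in Step~3 of Lemma~\ref{lemma:second_moment_lln}, so it fits the paper's toolkit. The paper's route gives an explicit $L^{1+\delta}$ rate for each Hoeffding component and stays within the decomposition framework used for the maximal inequalities. That quantitative information is not needed for the consistency statement of this lemma.

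\textbf{Minor refinement.} You could bound the second moment of the truncated part by $c_n^{1-\delta}C$ instead of $c_n^2$. This sharpens your rate but is immaterial for the conclusion.
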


{\begin{remark}
Condition (iii) is a uniform $L^1$-equicontinuity requirement at $\eta_0$. When $a_n=a$, $\eta_{0,n}=\eta_0$, and the sequence $\{P_n\}$ is constant, $P_n=P$ for all $n$, it follows from (ii) together with the almost-sure continuity of $\eta\mapsto a(x,\eta)$ at $\eta_0$: by sequential continuity in the metric space $(\Gamma,\|\cdot\|_{P,2})$, $\Delta_n(x;\delta_n)\to0$ for $P$-almost every $x$, and $\Delta_n(x;\delta_n)\le 2\sup_{\eta\in\mathcal N}\|a(x,\eta)\|$ for $n$ large enough, so dominated convergence applies. For genuinely drifting triangular arrays, continuity of each $a_n$ alone does not deliver (iii) — a modulus of continuity uniform in $n$ is needed — which is why we impose (iii) directly; in the applications in this paper it is guaranteed by Assumption~\ref{assu:gmm_reg}(viii) and the corresponding conditions in Theorem~\ref{thm:var}.
\hfill $\lozenge$\end{remark}}

\begin{proof}[Proof of Lemma~\ref{lemma:4_3}]
{The proof follows Lemma 4.3 of \cite{NeweyMc1994}, with the continuity step replaced by the uniform $L^1$-equicontinuity condition (iii), and with Khintchine's LLN replaced by a LLN for triangular arrays under multiway clustering. First,}  {we show}
$$
    \mathbb E_N a_n(X_{n,\bm i},\eta_0)- \mathbb E_{P_n}\left[a_n(X_{n,\bm 1},\eta_0)\right] \overset{p}{\to}0 .
$$
{ 
Without loss of generality assume $\delta\le1$, and argue componentwise. By the Hoeffding-type decomposition \eqref{hoeffding} applied to the centred function under Conditions (SE) and (D), it suffices to show $H^{\bm e}_{\bm N}(a_n)\overset{p}{\to}0$ for each $\bm e\in\{0,1\}^K\setminus\{\bm 0\}$, writing $a_n$ for a generic centred component. For $\bm e\in\calE_1$ the summands $\pi_{\bm e}a_n$ are i.i.d.\ with mean zero, so the von Bahr--Esseen inequality gives $\mathbb E|H^{\bm e}_{\bm N}(a_n)|^{1+\delta}\le 2|I_{\bm N,\bm e}|^{-\delta}\,\mathbb E|\pi_{\bm e}a_n|^{1+\delta}$. For $\Vert\bm e\Vert_0=k\ge2$ the summands are dependent but completely degenerate: for any $\ell\in\supp(\bm e)$, $\pi_{\bm e}a_n$ has mean zero conditionally on $\{U_{\bm i\odot\bm e'}\}_{\bm e'\le\bm e-\bm e_\ell}$ (see Section~\ref{sec:maximal_inequalities_SE}). Grouping the summation over $I_{\bm N,\bm e}$ by its coordinate in one dimension of $\supp(\bm e)$ at a time, the group sums are, conditionally on the AHK variables whose label is zero in that dimension, independent with zero conditional means, so the von Bahr--Esseen inequality applies conditionally; iterating over the $k$ dimensions of $\supp(\bm e)$ yields
$$
\mathbb E\left|H^{\bm e}_{\bm N}(a_n)\right|^{1+\delta}\le 2^k|I_{\bm N,\bm e}|^{-\delta}\,\mathbb E|\pi_{\bm e}a_n|^{1+\delta}\lesssim |I_{\bm N,\bm e}|^{-\delta}\,\mathbb E\Vert a_n(X_{n,\bm 1},\eta_0)\Vert^{1+\delta}\to0,
$$
where the second inequality uses the conditional Jensen inequality on each conditional expectation $P_{\bm e'}a_n$ entering $\pi_{\bm e}a_n$ together with the triangle inequality, and the convergence uses $|I_{\bm N,\bm e}|\ge n\to\infty$ and the {moment condition $\mathbb E\Vert a_n(X_{n,\bm i},\eta_0)\Vert^{1+\delta} \leq \mathbb E[\sup_{\eta\in \mathcal{N}} \Vert a_n(X_{n,\bm i},\eta)\Vert^{1+\delta}] <C< \infty$. All bounds depend on $P_n$ only through $C$ and $\delta$, so the convergence holds uniformly over $P_n\in\mathcal P_n$.}
}

{Next, since $\|\widehat\eta-\eta_0\|_{P_n,2} \overset{p}{\to} 0$ uniformly over $P_n\in\mathcal P_n$, there exists a sequence \(\delta_n\downarrow0\) such that $\sup_{P_n\in\mathcal P_n}P_n(\|\widehat\eta-\eta_0\|_{P_n,2}>\delta_n)\to 0$. On the event $\{\|\widehat\eta-\eta_0\|_{P_n,2}\le\delta_n\}$,
\begin{align*}
    \left\Vert \mathbb E_N a_n(X_{n,\bm i},\widehat\eta) - \mathbb E_N a_n(X_{n,\bm i},\eta_0) \right\Vert \leq \mathbb E_N\Delta_n(X_{n,\bm i};\delta_n),
\end{align*}
and, for every $\varepsilon>0$, Markov's inequality together with the separate exchangeability of the array gives
$$ \sup_{P_n\in\mathcal P_n} P_n\left( \mathbb E_N\Delta_n(X_{n,\bm i};\delta_n)>\varepsilon\right) \leq
    \frac{\sup_{P_n\in\mathcal P_n}\mathbb E_{P_n}\left[\Delta_n(X_{n,\bm 1};\delta_n)\right]}{\varepsilon}\to 0$$
by condition (iii). The conclusion follows from the triangle inequality.}
\end{proof}

{ 
The following lemma is a second-moment law of large numbers for triangular arrays under multiway clustering; it extends the argument of Proposition 4.1 in \cite{DDG2018}, established there for a fixed data generating process, to drifting sequences.

\begin{lemma}[Second-moment LLN under multiway clustering]\label{lemma:second_moment_lln}
Let $\{X_{n,\bm i}:\bm i\in[\bm N_n]\}$ satisfy Conditions (SE) and (D) for each $n$, and let $h_n:\calS\to\R^{q}$ be measurable functions with $\sup_n\sup_{P_n\in\mathcal P_n}\mathbb E_{P_n}\Vert h_n(X_{n,\bm 1})\Vert^{2+\delta}\le C<\infty$ for some $\delta>0$. Then, for each $k\in[K]$, uniformly over $P_n\in\mathcal P_n$,
$$
\frac{N_k}{N^2}\sum_{\bm i,\bm j\in[\bm N]}h_n(X_{\bm i})h_n(X_{\bm j})'\1\{i_k=j_k\} - \mathbb E\left[h_n(X_{\bm 1})h_n(X_{\bm 2-\bm e_k})'\right] \overset{p}{\to} \bm 0.
$$
\end{lemma}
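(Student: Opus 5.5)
The plan is to reduce to a scalar statement and split off the diagonal structure. It suffices to work entrywise, so fix coordinates $a,b\in[q]$. Write $u_{\bm i}=h_{n,a}(X_{\bm i})$ and $w_{\bm i}=h_{n,b}(X_{\bm i})$.

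The statistic can be rewritten as
\[
T_n:=\frac{N_k}{N^2}\sum_{t=1}^{N_k}S^u_tS^w_t,
\qquad S^u_t=\sum_{\bm i:\,i_k=t}u_{\bm i},
\]
that is, as an average over the $N_k$ slices $\{i_k=t\}$ of products of slice sums. Equivalently,
\[
T_n=\frac{1}{N_k}\sum_{t}\bar u_t\bar w_t,
\qquad \bar u_t=S^u_t/(N/N_k),
\]
the slice mean.

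I would further split $T_n$ into pairs $(\bm i,\bm j)$ with $i_k=j_k$ according to how many other coordinates they share. The pairs that share $i_k$ and differ in every other coordinate have target $\mathbb E[u_{\bm 1}w_{\bm 2-\bm e_k}]$ by (SE). All remaining pairs, those that also share some $i_\ell$ with $\ell\ne k$, have count of order $N^2/(N_kn)$. Weighted by $N_k/N^2$, their contribution is $O(1/n)$ times an average of products, and it is $o_{P_n}(1)$ in $L^1$ by Cauchy--Schwarz and the uniform $2+\delta$ moment bound.

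The main term is then a generalized U-statistic over pairs sharing exactly the $k$-th index. I would show its $L^1$ distance from the target vanishes, uniformly over $P_n$, by truncation plus a variance bound.

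\textbf{Truncation.} Replace $u$ by $u\1\{|u|\le M_n\}$, with $M_n\to\infty$ slowly. The truncation error of each product is bounded in $L^1$ uniformly by $\mathbb E|u|^2\1\{|u|>M_n\}$-type terms. These go to zero uniformly in $P_n$ by the $2+\delta$ bound, using Hölder and Markov.

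\textbf{Variance of the truncated main term.} For the bounded version, compute the variance of the normalized sum of centred products $u_{\bm i}w_{\bm j}-\mathbb E[\cdot]$. Two pairs $(\bm i,\bm j)$ and $(\bm i',\bm j')$ have covariance zero by (D) unless some index of one pair shares a coordinate with some index of the other. The proportion of such quadruples among all quadruples in the sum is $O(1/n)$, by counting in each dimension. Each covariance is at most $M_n^4$, so the variance is $O(M_n^4/n)\to0$ when $M_n=n^{1/8}/\log n$, for example.

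Chebyshev then gives convergence in probability uniformly, and the target expectation of the truncated product converges to $\mathbb E[h_n(X_{\bm 1})h_n(X_{\bm 2-\bm e_k})']$ uniformly.

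\textbf{Main obstacle.} The hard step is the combinatorial counting in the variance bound. The index $\bm i$ and $\bm j$ already share the $k$-th coordinate, and one must verify that the fraction of quadruples linked by a shared coordinate across the two pairs is indeed $O(1/\min_\ell N_\ell)$ rather than $O(1)$.

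I would handle this by noting that the two pairs involve slice indices $t=i_k$ and $t'=i'_k$. Linking requires either $t=t'$, a fraction $1/N_k$, or a coincidence in some other coordinate $\ell$ among the at most four values, a fraction $O(1/N_\ell)$. A union bound over the finitely many coincidence patterns then yields $O(K/n)$.
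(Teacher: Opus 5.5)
Your proposal is correct and is essentially the paper's argument. Both write the statistic as an average over the $N_k$ slices, get the mean from (SE) after discarding the $O(K/n)$ fraction of pairs sharing an additional coordinate, truncate, and bound the variance of the truncated average by counting (D)-linked quadruples, giving an $O(\tau^4K/n)$ bound followed by Chebyshev. The differences are only bookkeeping: the paper fixes the truncation level $\tau$ and closes with an $\varepsilon$-argument instead of letting $M_n=o(n^{1/4})$ grow, and it keeps the multiply-linked pairs inside the slice average rather than removing them first in $L^1$ (with your split, note that the main term's mean carries the factor $\prod_{\ell\ne k}(1-N_\ell^{-1})\to 1$).
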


\begin{proof}
Fix $k$; without loss of generality $k=1$. Write $h=h_n$, $m=N/N_1$, and, for $c\in[N_1]$, let $T_c = m^{-1}\sum_{\bm i\in[\bm N]:\,i_1=c}h(X_{\bm i})$, so that the first term of the display equals $N_1^{-1}\sum_{c=1}^{N_1}T_cT_c'$. Note that, by the moment condition and the Cauchy--Schwarz inequality, $\mathbb E\Vert h(X_{\bm 1})\Vert^2\le C^{2/(2+\delta)}=:C'$ and every $\Vert\mathbb E[h(X_{\bm i})h(X_{\bm j})']\Vert\le C'$.

\emph{Step 1 (mean).} By (SE), $\mathbb E[T_cT_c'] = m^{-2}\sum_{\bm i,\bm j:\,i_1=j_1=c}\mathbb E[h(X_{\bm i})h(X_{\bm j})']$ does not depend on $c$. Among the $m^2$ index pairs, those with $\bm i=\bm j$ number $m$, and those sharing a coordinate in some dimension $\ell\ge2$ number at most $m^2\sum_{\ell\ge2}N_\ell^{-1}$; for every remaining pair, $\mathbb E[h(X_{\bm i})h(X_{\bm j})'] = \mathbb E[h(X_{\bm 1})h(X_{\bm 2-\bm e_1})']$ by (SE). Hence
$$\left\Vert \mathbb E[T_cT_c'] - \mathbb E[h(X_{\bm 1})h(X_{\bm 2-\bm e_1})']\right\Vert \le 2C'\left(m^{-1}+\sum_{\ell\ge2}N_\ell^{-1}\right)\le \frac{2C'K}{n}\to0.$$

\emph{Step 2 (truncation).} For $\tau>0$, let $h^\tau = h\,\1\{\Vert h\Vert\le\tau\}$ and define $T^\tau_c$ accordingly. By Minkowski's inequality and (SE), $(\mathbb E\Vert T_c-T^\tau_c\Vert^2)^{1/2}\le (\mathbb E[\Vert h(X_{\bm 1})\Vert^2\1\{\Vert h(X_{\bm 1})\Vert>\tau\}])^{1/2}\le C^{1/2}\tau^{-\delta/2}$ and $(\mathbb E\Vert T_c\Vert^2)^{1/2}\vee(\mathbb E\Vert T^\tau_c\Vert^2)^{1/2}\le (C')^{1/2}$, uniformly over $n$, $P_n$, and $c$. Hence, writing $T_cT_c' - T^\tau_c T^{\tau\prime}_c = (T_c-T^\tau_c)T_c' + T^\tau_c(T_c-T^\tau_c)'$ and using the Cauchy--Schwarz inequality,
$$
\mathbb E\left\Vert \frac{1}{N_1}\sum_{c=1}^{N_1} \left(T_cT_c' - T^\tau_c T^{\tau\prime}_c\right)\right\Vert \le 2(C'C)^{1/2}\,\tau^{-\delta/2},
$$
and the same bound applies to $\Vert\mathbb E[T_1T_1']-\mathbb E[T^\tau_1T^{\tau\prime}_1]\Vert$.

\emph{Step 3 (variance of the truncated average).} The entries of $T^\tau_c T^{\tau\prime}_c$ are bounded by $\tau^2$. For $c\ne c'$ and entries $(a,b)$, expand
$${\rm Cov}\left((T^\tau_cT^{\tau\prime}_c)_{ab},\,(T^\tau_{c'}T^{\tau\prime}_{c'})_{ab}\right) = m^{-4}\sum_{\substack{\bm i,\bm i':\,i_1=i_1'=c\\ \bm j,\bm j':\,j_1=j_1'=c'}}{\rm Cov}\left(h^\tau_a(X_{\bm i})h^\tau_b(X_{\bm i'}),\,h^\tau_a(X_{\bm j})h^\tau_b(X_{\bm j'})\right).$$
Since $c\ne c'$, each covariance on the right vanishes by Condition (D) unless $\{\bm i,\bm i'\}$ and $\{\bm j,\bm j'\}$ share a coordinate in some dimension $\ell\ge2$; the number of such quadruples is at most $4m^4\sum_{\ell\ge2}N_\ell^{-1}$, and each covariance is bounded by $2\tau^4$. The diagonal terms $c=c'$ contribute at most $N_1^{-1}\tau^4$ to the variance of the average. Therefore
$${\rm Var}\left(\frac{1}{N_1}\sum_{c=1}^{N_1} (T^\tau_cT^{\tau\prime}_c)_{ab}\right)\le \tau^4\left(\frac{1}{N_1}+\frac{8}{1}\sum_{\ell\ge2}\frac{1}{N_\ell}\right)\le \frac{9\tau^4 K}{n}\to0,$$
and Chebyshev's inequality yields $N_1^{-1}\sum_c T^\tau_cT^{\tau\prime}_c - \mathbb E[T^\tau_1T^{\tau\prime}_1]\overset{p}{\to}0$ entrywise, for each fixed $\tau$.

\emph{Conclusion.} Given $\varepsilon>0$, first choose $\tau$ so that the Step-2 bound satisfies $2(C'C)^{1/2}\tau^{-\delta/2}<\varepsilon$, and then combine Steps 1--3 via the triangle and Markov inequalities. All bounds depend on $P_n$ only through $C$, $\delta$, $q$, and $K$, so the convergence holds uniformly over $P_n\in\mathcal P_n$.
\end{proof}
}

The following lemma is a restatement of Lemma 3 in \cite{chiang2022multiway}. Here we consider the special case where each cell $\bm i \in [\bm N]$ contains exactly one observation to avoid extra notation. Let $I_1 = \bigcup_{\bm e \in \mathcal{E}_1} I_{\bm{N,e}}$ and $k(\bm i)$ denote the coordinate where $\bm{i} \in I_1$ is non-zero.
\begin{lemma}
\label{lemma:hajek}
    Suppose, for each $n\in \mathbb{N}$, $(X_{\bm i})_{\bm i\in [\bm N]}$ satisfy Conditions (SE) and (D). Let $\mathcal{F}_n$, $|\mathcal{F}_n| = {m}$, be a family of functions $f: {\calS}\to \mathbb{R}^{{q}}$ such that {$\mathbb{E}\left\Vert f(X_{\bm i}) \right\Vert^2<C<\infty$ for some $C$ independent of $n$}. Let $\mu_k$ be such that $\frac{n}{N_k}\to \mu_k\geq 0$. Then there exists a family of mutually independent standard uniform r.v.s $(U_{\bm i })_{\bm i\in I_1}$ such that the Hajek projection of $\Gn(f)$ on the set of statistics of the form $\sum_{\bm i \in I_1} g_{\bm i}(U_{\bm i})$, with $ g_{\bm i}(U_{\bm i})$  {square-}integrable, satisfies
    \begin{align*}
        H_nf = \sum_{\bm i \in I_1} \frac{\sqrt{n}}{N_{k(\bm i)}} \left( \mathbb{E}\left[f(X_{\bm i}) | U_{\bm i}\right] - \mathbb{E}\left[f(X_{\bm i}) \right] \right),
    \end{align*}
     {where, for $\bm i\in I_1$, $\mathbb{E}[f(X_{\bm i})|U_{\bm i}]$ is shorthand for $\mathbb{E}[f(X_{\bm j})|U_{\bm i}]$ with $\bm j\in[\bm N]$ any index whose $k(\bm i)$-th coordinate agrees with that of $\bm i$; by (SE) this does not depend on the choice of $\bm j$.}
    And, it holds uniformly over $\mathcal{F}_n$ that
    \begin{align*}
        \Gn(f) & =  H_nf + O_{P_n}(n^{-1/2}), \\
{\rm Var}\left(\Gn(f)\right) & =  {\rm Var}\left({H}_n(f)\right) + O(n^{-1}) = \sum_{\bm e\in\mathcal{E}_1} {\frac{n}{N_{k(\bm e)}}} {\rm Cov} (f(X_{\bm 1}),f(X_{\bm{2-e}})) + O(n^{-1}) \\
&= \mu_1 {\rm Var}( \mathbb{E}[f(X)|U_{1,0,...,0}]) + ... +\mu_K {\rm Var}( \mathbb{E}[f(X)|U_{0,0,...,1}])+  {o(1)},
    \end{align*}
     {where the last equality uses $n/N_k=\mu_k+o(1)$ and the boundedness of the covariances.}
\end{lemma}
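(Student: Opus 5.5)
The plan is to work entirely through the AHK representation \eqref{eq:AHK_representation} and the Hoeffding-type decomposition \eqref{hoeffding}. For $\bm i\in I_1$ with $k(\bm i)=k$, take $U_{\bm i}$ to be the first-order AHK variable $U_{i_k\bm e_k}$. These are mutually independent, and without loss of generality standard uniform, since $\tau$ can absorb a measurable transformation. We argue componentwise and for a single $f$; uniformity over the finite family $\mathcal F_n$ with $|\mathcal F_n|=m$ follows from a union bound, as $m$ does not grow.

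\emph{Step 1: computing the H\'ajek projection.} Because the $U_{\bm i}$, $\bm i\in I_1$, are independent, the projection of $T=\Gn(f)$ onto sums of the form $\sum_{\bm i\in I_1} g_{\bm i}(U_{\bm i})$ is the standard one:
\[
\sum_{\bm i\in I_1}\bigl(\mathbb E[T\mid U_{\bm i}]-\mathbb E T\bigr),\qquad \mathbb E T=0 .
\]
Fix $\bm i\in I_1$ with $k(\bm i)=k$.
\begin{itemize}
\item If $\bm j\in[\bm N]$ has $j_k\neq i_k$, then $X_{\bm j}$ does not depend on $U_{\bm i}$, so $\mathbb E[f(X_{\bm j})\mid U_{\bm i}]=Pf$.
\item If $j_k=i_k$, then by (SE) the conditional expectation equals $(P_{\bm e_k}f)(U_{\bm i})$, independently of the choice of such $\bm j$.
\end{itemize}
Exactly $N/N_k$ indices $\bm j$ fall in the second case. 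Hence
\[
\mathbb E[T\mid U_{\bm i}]=\frac{\sqrt n}{N}\cdot\frac{N}{N_k}\bigl(P_{\bm e_k}f-Pf\bigr),
\]
and summing over $\bm i\in I_1$ gives the displayed formula for $H_nf$.

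\emph{Step 2: controlling the remainder.} The terms $\sqrt n\,H^{\bm e}_{\bm N}(f)$ with $\bm e\in\mathcal E_1$ are exactly $H_nf$. Therefore \eqref{hoeffding} yields
\[
\Gn(f)-H_nf=\sqrt n\sum_{k\ge2}\sum_{\bm e\in\mathcal E_k}H^{\bm e}_{\bm N}(f).
\]
The key tool is the degeneracy property cited from Lemma~1 of \cite{chiang2023inference}: $\pi_{\bm e}f$ has conditional mean zero given $\{U_{\bm i\odot\bm e'}\}_{\bm e'\le\bm e-\bm e_\ell}$ for every $\ell\in\supp(\bm e)$. From this we get two orthogonality facts.
\begin{itemize}
\item For $\bm e\ne\bm e'$, any two summands $\pi_{\bm e}f$ and $\pi_{\bm e'}f$ are uncorrelated.
\item For a fixed $\bm e$, two summands indexed by distinct $\bm i,\bm i'\in I_{\bm N,\bm e}$ are uncorrelated. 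Indeed, they differ in some coordinate $\ell\in\supp(\bm e)$, and conditioning on all variables except the one carrying that coordinate kills the cross term.
\end{itemize}
Consequently,
\[
\mathrm{Var}\bigl(H^{\bm e}_{\bm N}(f)\bigr)=|I_{\bm N,\bm e}|^{-1}\,\mathbb E(\pi_{\bm e}f)^2 .
\]
By conditional Jensen, $\mathbb E(\pi_{\bm e}f)^2\lesssim C$, and $|I_{\bm N,\bm e}|\ge n^k$. For $k\ge2$ this gives $n\,\mathrm{Var}(H^{\bm e}_{\bm N}(f))\lesssim n^{1-k}\le n^{-1}$. Chebyshev's inequality then yields the $O_{P_n}(n^{-1/2})$ remainder.

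\emph{Step 3: the variance expansion.} Using the same orthogonality,
\[
\mathrm{Var}(\Gn f)=\mathrm{Var}(H_nf)+\sum_{k\ge2}O(n^{1-k})=\mathrm{Var}(H_nf)+O(n^{-1}).
\]
By independence of the first-order variables,
\[
\mathrm{Var}(H_nf)=\sum_k\frac{n}{N_k}\,\mathrm{Var}\bigl(P_{\bm e_k}f\bigr).
\]
For $\bm e=\bm e_k$, the observations $X_{\bm 1}$ and $X_{\bm 2-\bm e}$ share only the $k$-th coordinate, equal to $1$. Under the AHK representation they are therefore conditionally independent given $U_{\bm e_k}$. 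This gives
\[
\mathrm{Cov}\bigl(f(X_{\bm 1}),f(X_{\bm 2-\bm e})\bigr)=\mathrm{Var}\bigl(\mathbb E[f(X)\mid U_{\bm e_k}]\bigr).
\]
Finally, substituting $n/N_k=\mu_k+o(1)$ and using that these covariances are bounded by $C$ produces the last line.

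The main obstacle is the orthogonality in Step 2 for higher-order components. One must carefully verify that summands with distinct indices in $I_{\bm N,\bm e}$, or with different $\bm e$, are uncorrelated even though they share some lower-order AHK variables. This is exactly where the degeneracy of $\pi_{\bm e}$ is indispensable, applied by conditioning on everything except the one variable whose label differs. Everything else is bookkeeping of counts.
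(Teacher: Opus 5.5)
Your proof is correct, and its final step is exactly the argument the paper gives: $X_{\bm 1}$ and $X_{\bm 2-\bm e_k}$ share only $U_{\bm e_k}$, so they are conditionally independent given it. The paper, however, proves only that last equality. It takes the projection formula, the $O_{P_n}(n^{-1/2})$ remainder and the first two variance identities from Lemma~3 of \cite{chiang2022multiway}. You instead derive them from the Hoeffding-type decomposition \eqref{hoeffding} and the degeneracy of $\pi_{\bm e}$.

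Your route is self-contained and gives the exact orthogonal decomposition ${\rm Var}(\Gn f)=\sum_{\bm e\neq\bm 0} n\,|I_{\bm N,\bm e}|^{-1}\,{\rm Var}(\pi_{\bm e}f)$. This identifies $\Gn f-H_nf$ as precisely the components of order $k\ge2$, each of size $O(n^{(1-k)/2})$ in $L^2$. It also makes the middle equality of the statement exact, since ${\rm Var}(H_nf)=\sum_k (n/N_k)\,{\rm Var}(P_{\bm e_k}f)$ with no error term.

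The extra ingredient you need is orthogonality of the Hoeffding components, both across $\bm e$ and across indices. A more elementary alternative uses only (SE) and (D). Count pairs $(\bm i,\bm j)$ by their shared coordinates to get ${\rm Var}(\Gn f)=\sum_k (n/N_k){\rm Cov}(f(X_{\bm 1}),f(X_{\bm 2-\bm e_k}))+O(n^{-1})$. Then use ${\rm Var}(\Gn f-H_nf)={\rm Var}(\Gn f)-{\rm Var}(H_nf)$, which holds because $H_nf$ is an $L^2$ projection. This route is where the $+O(n^{-1})$ slack in the statement naturally appears.

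One precision point in the step you flag as the obstacle. Suppose $\bm i\neq\bm i'$ in $I_{\bm N,\bm e}$ differ in coordinate $\ell$. Then $2^{k-1}$ AHK variables carry $i_\ell$, namely all $U_{\bm i\odot\bm e''}$ with $\bm e''\le\bm e$ and $\ell\in\supp(\bm e'')$, not just one. You should condition on $\{U_{\bm i\odot\bm e''}\}_{\bm e''\le\bm e-\bm e_\ell}$ together with all variables of the other summand. These are independent of that block, so the conditional mean of $\pi_{\bm e}f$ is still zero. The same argument with $\ell\in\supp(\bm e)\setminus\supp(\bm e')$ (or the reverse) handles $\bm e\neq\bm e'$.
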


\begin{proof}[Proof of Lemma~\ref{lemma:hajek}]
    The statements in Lemma~\ref{lemma:hajek} are established in the proof of Lemma 3 in \cite{chiang2022multiway}, except for the last equality. Thus, here we provide an argument for it. For any $k=1,...,K$, let $\bm e_k \in \mathcal{E}_1$ be a $K$-dimensional vector with all zero elements except for the $k$-th entry. Under Conditions (SE) and (D), we have the AHK representation given by \eqref{eq:AHK_representation}: $f\left(X_{\bm{i}}\right) = f\left(\tau\left( \{ U_{\bm{i} \odot \bm{e}} \}_{\bm{e} \in \{0,1\}^K \setminus \{\bm{0}\}} \right)\right)$. Let $g\Bigl( \{ U_{\bm{i} \odot \bm{e}} \}_{\bm{e} \in \{0,1\}^K \setminus \{\bm{0}\}} \Bigr) =f\left(\tau\left( \{ U_{\bm{i} \odot \bm{e}} \}_{\bm{e} \in \{0,1\}^K \setminus \{\bm{0}\}} \right) \right) - \mathbb E\left[f\left(\tau\left( \{ U_{\bm{i} \odot \bm{e}} \}_{\bm{e} \in \{0,1\}^K \setminus \{\bm{0}\}} \right) \right)\right]$, then we have
    \begin{align*}
        &{\rm Cov}(f(X_{\bm 1}),f(X_{\bm{2}-\bm{e}_k})) =\mathbb{E}\left[g\left( \{ U_{\bm{1} \odot \bm{e}} \}_{\bm{e} \in \{0,1\}^K \setminus \{\bm{0}\}} \right)g\left( \{ U_{{(\bm{2}-\bm{e}_k}) \odot \bm{e}} \}_{\bm{e} \in \{0,1\}^K \setminus \{\bm{0}\}} \right)'\right] \\
         =& \mathbb{E}\left[\mathbb{E}\left[g\left( \{ U_{\bm{1} \odot \bm{e}} \}_{\bm{e} \in \{0,1\}^K \setminus \{\bm{0}\}} \right)g\left( \{ U_{{(\bm{2}-\bm{e}_k}) \odot \bm{e}} \}_{\bm{e} \in \{0,1\}^K \setminus \{\bm{0}\}} \right)'|U_{\bm{e}_k} \right]\right] \\
         =&  \mathbb{E}\left[\mathbb{E}\left[g\left( \{ U_{\bm{1} \odot \bm{e}} \}_{\bm{e} \in \{0,1\}^K \setminus \{\bm{0}\}} \right)|U_{\bm{e}_k}\right] \mathbb E\left[g\left( \{ U_{{(\bm{2}-\bm{e}_k}) \odot \bm{e}} \}_{\bm{e} \in \{0,1\}^K \setminus \{\bm{0}\}} \right) |U_{\bm{e}_k}\right]'\right] \\
         =&  {\rm Var} \left( \mathbb{E}\left[g\left( \{ U_{\bm{1} \odot \bm{e}} \}_{\bm{e} \in \{0,1\}^K \setminus \{\bm{0}\}} \right)|U_{\bm{e}_k}\right] \right) = {\rm Var} \left( \mathbb{E}\left(f(X)|U_{\bm{e}_k}\right) \right)
    \end{align*}
    where the third equality follows from Condition D; the fourth equality follows because, conditional on $U_{\bm{e}_k}$, the distribution of $g\left( \{ U_{\bm{i} \odot \bm{e}} \}_{\bm{e} \in \{0,1\}^K \setminus \{\bm{0}\}} \right)$ does not depend on $i$, and so we also suppress the generic index $\bm i$ in the last equality. Since $k$ is arbitrary,  {${\rm Cov}(f(X_{\bm 1}),f(X_{\bm 2-\bm e_k}))={\rm Var}(\mathbb E[f(X)|U_{\bm e_k}])$ for every $k=1,\ldots,K$, and the last equality of the lemma follows from $n/N_k=\mu_k+o(1)$ together with the boundedness of these covariances, which is implied by the Cauchy--Schwarz inequality and the moment condition.}
\end{proof}

\vskip 0.15in

\bibliographystyle{ecta}
\bibliography{bib}

\end{document}